\documentclass[
  reprint,
  amsmath,amssymb,
  aps,prx,
  superscriptaddress,
  footinbib,
  floatfix,
]{revtex4-2}

\usepackage[T1]{fontenc}
\usepackage[utf8]{inputenc}
\usepackage{lmodern}
\usepackage{amsthm}
\usepackage{bm}
\usepackage{braket}
\usepackage{graphicx}
\usepackage{capt-of}
\usepackage{booktabs}
\usepackage{microtype}
\usepackage[dvipsnames]{xcolor}
\usepackage{tikz}
\usetikzlibrary{arrows.meta,backgrounds,calc,fit,positioning}
\usepackage[colorlinks=true,allcolors=NavyBlue]{hyperref}
\usepackage[nameinlink,noabbrev]{cleveref}

\newtheorem{theorem}{Theorem}[section]
\newtheorem{proposition}[theorem]{Proposition}
\newtheorem{lemma}[theorem]{Lemma}
\newtheorem{corollary}[theorem]{Corollary}
\newtheorem{definition}[theorem]{Definition}

\newcommand{\Tr}{\operatorname{Tr}}
\newcommand{\supp}{\operatorname{supp}}
\newcommand{\diam}{\operatorname{diam}}
\newcommand{\dist}{\operatorname{dist}}
\newcommand{\id}{\mathbb{I}}
\newcommand{\Had}{\mathrm{Had}}
\newcommand{\cH}{\mathcal{H}}
\newcommand{\cA}{\mathcal{A}}
\newcommand{\cK}{\mathcal{K}}

\newcommand{\ee}{\mathrm{e}}

\newcommand{\rhoG}[1]{\rho_{\beta}(#1)}
\newcommand{\norm}[1]{\left\lVert #1\right\rVert}
\newcommand{\abs}[1]{\left|#1\right|}

\newcommand{\smref}[2]{\hyperref[#1]{#2}}

\numberwithin{equation}{section}
\allowdisplaybreaks
\graphicspath{{figures/}}

\newcommand{\manuscripttitle}{Spectral Core--Tail Architecture for Locally Certified Gibbs-State Preparation}
\hypersetup{
  pdftitle={\manuscripttitle},
  pdfauthor={Rui-Hao Li},
  pdfsubject={Operational spectral core--tail construction and local Gibbs-state certification}
}

\begin{document}

\title{\manuscripttitle}
\author{Rui-Hao Li}
\affiliation{Computational Life Sciences, Cleveland Clinic Research, Cleveland, Ohio 44195, USA}
\date{\today}

\begin{abstract}
  Preparing a quantum Gibbs state requires reproducing both its thermal distribution and the associated many-body eigenspaces.
  In this work, we formalize the spectral core--tail architecture (SCTA) as a framework comprising a structured thermal core, a geometrically characterized unitary tail, and an exact residual measuring the remaining core-frame Hamiltonian mismatch.
  We derive a local Gibbs-state error bound separating core-preparation, tail-implementation, and modeling errors.
  The modeling contribution is volume uniform when its Kubo--Mori response meets the shell summability condition and the relevant local data remain uniform.
  We then highlight three anchor Hamiltonian classes which admit exact core--tail constructions with zero residuals.
  For small deformations of an exact anchor, we employ a Schrieffer--Wolff reduction procedure to construct a corrected core--tail pair that formally removes the deformation order by order.
  Under uniform locality and solvability assumptions, we show that the first-order reduction yields a residual that remains bounded and is quadratic in the deformation strength.
  Furthermore, we numerically test the first-order reduction on deformed graph-stabilizer Hamiltonians.
  The numerical results show approximately quadratic suppression of both the Hamiltonian-level mismatch and the local Gibbs-state error in the perturbative regime.
  We also find that the constructed correction circuit structure remains useful as a variational ansatz beyond perturbative control, often improving on both the bare and prescribed first-order states.
\end{abstract}

\maketitle

\section{Introduction}
\label{sec:introduction}

For a quantum many-body system with Hamiltonian $H$, the equilibrium state at inverse temperature $\beta>0$ is the Gibbs state
\begin{equation}
  \rho_{\beta}(H)
  =
  \frac{\ee^{-\beta H}}{Z_{\beta}(H)},
  \label{eq:intro-gibbs-state}
\end{equation}
where $Z_{\beta}(H) = \Tr\!\left(\ee^{-\beta H}\right)$ is the partition function.
Preparing Eq.~\eqref{eq:intro-gibbs-state} on a quantum computer is difficult in general, and broadly applicable algorithms typically avoid constructing the complete eigendecomposition of $H$.
Coherent Hamiltonian-based methods use phase estimation or related spectral filtering to attach Boltzmann amplitudes to energy eigenstates, followed by amplitude amplification or linear-system procedures to obtain a Gibbs-state purification~\cite{PoulinWocjan2009,ChowdhurySomma2017}.
Their costs can depend strongly on inverse temperature, target precision, spectral resolution, and the normalization or success amplitude of the filtered state.
Another broad family of algorithms exploits spatial locality rather than an explicit spectral construction.
One approach reconstructs the Gibbs state from overlapping local channels, with decay of correlations and an approximate quantum Markov property providing sufficient structure for efficient preparation~\cite{BrandaoKastoryano2019}.
A second approach prepares the state as the fixed point of an engineered thermalization process.
Davies-generator-based and Kubo--Martin--Schwinger detailed-balance samplers have been developed with explicit quantum implementations~\cite{RallWangWocjan2023,ChenEtAl2025,DingLiLin2025,HahnEtAl2026}.
These results address how to construct and implement the thermalizing dynamics, but efficient Gibbs-state preparation additionally requires the dynamics to be ergodic and rapidly mixing.
Such mixing guarantees are known for broad classes of local Hamiltonians at sufficiently high temperature~\cite{RouzeFrancaAlhambra2026,RouzeFrancaAlhambraPRL2026}, while in one dimension polylogarithmic-depth preparation is available for every short-range spin chain at any fixed positive temperature~\cite{BergamaschiChen2026}.
A third high-temperature approach bypasses dynamical mixing: above an interaction-dependent temperature threshold, a classical algorithm samples product states whose depth-one quantum preparations average to an approximation of the Gibbs state~\cite{BakshiEtAl2024}.
Thus spatial locality supports several complementary preparation mechanisms, but each requires additional physical or algorithmic structure, such as an approximate Markov property, rapid mixing, or high-temperature separability.

A complementary, model-adapted design principle emerges by exposing the spectral structure hidden in Eq.~\eqref{eq:intro-gibbs-state}.
Writing $H=\sum_n E_n\ket{E_n}\!\bra{E_n}$ in an orthonormal energy eigenbasis gives
\begin{equation}
  \begin{aligned}
    \rho_{\beta}(H)
    &=
    \sum_n
    \frac{\ee^{-\beta E_n}}{Z_{\beta}(H)}
    \ket{E_n}\!\bra{E_n},
    \\
    Z_{\beta}(H)
    &=
    \sum_n
    \ee^{-\beta E_n}.
  \end{aligned}
  \label{eq:intro-spectral-gibbs-state}
\end{equation}
The energies $E_n$ determine the normalized thermal populations, i.e., the normalized Boltzmann weights $\ee^{-\beta E_n} / Z_{\beta}(H)$, whereas the eigenbasis determines how those populations are embedded into the physical degrees of freedom.
The populations and eigenbasis together determine the resulting many-body correlations.
Eq.~\eqref{eq:intro-spectral-gibbs-state} is a structural identity, which by itself supplies neither an efficient way to normalize or sample Boltzmann weights across exponentially many levels nor an efficient implementation of the exact basis transformation, which may be highly entangling and nonlocal.
When both pieces admit exploitable structure, however, the equation suggests a physical circuit-design principle: first load a tractable population distribution in a reference basis, and then apply a unitary that maps that basis into the physical frame.
This population-first, basis-second construction is the physical starting point of what we term the spectral core--tail architecture (SCTA) in this work.

Related model-adapted Gibbs-state constructions exploit different aspects of this population--basis separation.
Direct population--basis ans\"atze represent the thermal populations with a tractable classical probability model and optimize a quantum circuit for the associated basis transformation~\cite{MartynSwingle2019,VerdonEtAl2019,LiuEtAl2021BetaVQE,SewellWhiteSwingle2022}.
Mid-circuit-measurement schemes realize a similar division of labor by using one circuit to generate a classical population distribution and a second unitary to select its quantum basis~\cite{SeliskoEtAl2024,StengerEtAl2024}.
Purification-based methods instead encode thermal populations in correlations between two registers, forming a thermofield-double state~\cite{WuHsieh2019,ZhuEtAl2020,FaildeEtAl2025}.
Moreover, related two-register and truncated-spectrum constructions use different resources to expose or approximate the von Neumann entropy needed for the variational preparation of thermal states~\cite{ConsiglioEtAl2024,SambasivamEtAl2025}.
Other entropy-aware approaches use matrix-product-state assistance or nonunitary variational channels with entropy-estimation schemes~\cite{LiValgushevNajafi2025,IlinArad2025,ZapusekEtAl2026}.
Furthermore, constructive duality-based methods provide a nonvariational realization of the same structural idea: for selected commuting Hamiltonians, a polynomial-depth circuit maps the problem to a classically samplable Hamiltonian, whose Gibbs state is prepared first and then transformed back to the physical frame~\cite{PaezVelascoEtAl2026}.
These methods share parts of the population--basis viewpoint of SCTA but differ in how they represent thermal populations, obtain entropy information, and implement the quantum basis transformation.

In this work, we develop SCTA as a unified operational construction-and-certification framework around this separation.
The core Hamiltonian $H_C$ is chosen so that its Gibbs state has an actual preparation procedure and, where free-energy optimization is used, a polynomial-resource entropy evaluation.
The tail $U$ is the ideal unitary that maps this core-frame state into the physical frame and is supplied with a physical geometry and an implementation model.
Constructive and variational instances discussed in Sec.~\ref{sec:core-tail-architecture} are treated through the same core-frame Hamiltonian identity
\begin{equation}
  U^\dagger H U = H_C+R,
  \label{eq:intro-core-tail-identity}
\end{equation}
where the residual $R$ records both imperfect core energies and basis-changing processes left by the chosen tail.
This identity therefore turns ansatz design into a local Hamiltonian problem, but a small local residual does not automatically imply a small thermal-state error: an extensive collection of weak terms can have a global norm that grows with the system size, and equilibrium response can accumulate over space.

The first part develops an operational framework for local-error analysis that separates tail-implementation, core-preparation, and Gibbs-modeling errors and uses backward propagation through the ideal tail to identify the relevant core-frame regions.
For the Gibbs-modeling error, an exact Kubo--Mori interpolation requires response control along the complete path $H_C+sR$, $0\le s\le1$.
Summing the responses of the individual residual terms yields a local bound that is volume uniform when the response envelope is shell summable and the incident residual strength, response prefactor, and bounded-region geometry remain uniform with system size.
Product factorization and classical Dobrushin contraction can yield shell-summable response envelopes, while quantum stability results can provide either such an envelope or a direct local bound serving the same role~\cite{Dobrushin1968,CapelEtAl2025}.

In the second part, we examine three exact operational anchors, for which the residual in Eq.~\eqref{eq:intro-core-tail-identity} vanishes and the Gibbs core loader and the tail circuit can be constructed with controlled resources.
We illustrate these constructions explicitly: graph-stabilizer Hamiltonians with classical syndrome cores and shallow Clifford tails, independent Rydberg dimers with fixed-size block cores and parallel local rotations, and quadratic-fermion Hamiltonians with occupation cores and Gaussian tails when the qubit encoding preserves the required locality.

Building on these examples, the principal technical part studies Hamiltonians weakly deformed away from an exact anchor.
We employ a Schrieffer--Wolff-type local reduction to remove the leading basis-changing terms from the residual, leaving a corrected Hamiltonian with a smaller local mismatch.
The reduction acts across the full spectrum: deformation terms compatible with the loadable core are absorbed into an updated or enlarged core Hamiltonian, while a bounded-depth local correction circuit removes the leading core-incompatible terms.
Under the stated locality and solvability assumptions, the remaining Hamiltonian mismatch scales quadratically rather than linearly with the deformation strength, with its range and local strength controlled independently of the system size.
Such Hamiltonian-level residuals can be further converted into a local Gibbs-state certificate introduced in the first part, provided that the response along the corrected interpolation path is summable against the geometric growth of shells.

The final part numerically tests the first-order construction in an alternating-field graph-stabilizer chain with an Ising deformation.
Exact finite-size calculations show the expected change from a linear bare residual to an approximately quadratic corrected residual and improved worst-case nearest-neighbor Gibbs-state errors.
We also test the variational circuit inheriting the structure of the first-order local reduction, near resonance and at stronger deformation, and find that it can outperform the first-order construction in the same parameter regime.
These results demonstrate that the operational pipeline can be used both for controlled perturbative constructions and for variational ans\"atze with a physically motivated structure.

The remainder of the paper is organized as follows.
Sec.~\ref{sec:architecture-certification} formalizes the SCTA and its local certification framework.
Sec.~\ref{sec:exact-anchors} develops the exact anchor constructions, while Sec.~\ref{sec:local-reduction} treats perturbative deformations of these anchors.
Sec.~\ref{sec:numerics} presents the numerical and variational studies.
Finally, Sec.~\ref{sec:discussion} discusses the scope, limitations, and future directions.
Detailed proofs are provided in Secs.~\smref{sec:supp-geometry}{S1}, \smref{sec:supp-response}{S2}, and \smref{sec:supp-reduction}{S4} of the Supplemental Material (SM), while explicit exact-anchor constructions and numerical methods are collected in Secs.~\smref{sec:supp-anchors}{S3} and \smref{sec:supp-numerics}{S5}, respectively.
In particular, every theorem, proposition, lemma, and corollary stated in the main text is proved there in full.

\section{Operational architecture and local certification}
\label{sec:architecture-certification}

\subsection{Core, tail, and operational resources}
\label{sec:core-tail-architecture}

For a normalized distribution $p$ over orthogonal labels $x$, define the spectral core and the corresponding core--tail state by
\begin{equation}
  \sigma_C=\sum_xp(x)\ket{x}\!\bra{x},
  \qquad
  \rho_{\mathrm{SCTA}}=U\sigma_CU^\dagger.
  \label{eq:scta-state}
\end{equation}
The choice of $x$ is model-dependent: it may be a classical configuration, a fermionic occupation number, or any other convenient orthogonal basis.
The core is typically chosen to be diagonal in that basis, while the tail $U$ is a unitary that changes the basis to match the target Hamiltonian.
More generally, each core label may represent a fixed-size local subsystem with its own normalized internal state, provided that state can be prepared with controlled resources.
The factorization is deliberately algorithm agnostic: the core may be sampled as a classical mixture or loaded coherently as a purification, and the pair $(p,U)$ may be constructed, variationally optimized, or through a hybrid of the two.

\begin{definition}[Operational SCTA]
  \label{def:operational-scta}
  An operational SCTA preparation consists of
  \begin{enumerate}
      \renewcommand{\labelenumi}{(\roman{enumi})}
    \item an intended core state $\sigma_C$ that is diagonal, or block-diagonal with block dimensions bounded independently of system size, together with a procedure producing a system state $\widetilde\sigma_C$ and a quantified preparation error relative to $\sigma_C$;
    \item an ideal tail unitary $U$ with specified physical geometry, together with an implemented channel that acts on $\widetilde\sigma_C$ to produce an output $\widetilde\rho_{\mathrm{out}}$ and a quantified implementation error relative to $U\widetilde\sigma_CU^\dagger$.
  \end{enumerate}
\end{definition}

Definition~\ref{def:operational-scta} is deliberately minimal: it specifies the intended and implemented objects needed for the error analysis without prescribing how they are constructed or requiring a particular complexity scaling.
For example, a randomized loader samples $x\sim p(x)$ and prepares $\ket{x}$; averaging repeated preparations realizes $\sigma_C$ for every system-only measurement.
A coherent loader instead prepares a pure state $\ket{\Psi_C} = \sum_x\sqrt{p(x)}\ket{x}_A\ket{x}_S$ on an enlarged Hilbert space containing an ancillary register $A$ and the system register $S$, with $\sigma_C=\Tr_A\ket{\Psi_C}\!\bra{\Psi_C}$.
These modalities produce effectively the same reduced core but differ in how that state can be used and may have very different resource costs.
For the rest of the discussion, we assume the core is prepared through purification unless otherwise stated, because it allows for a coherent access of the Gibbs state, which may be useful for subsequent quantum processing.
Fig.~\ref{fig:scta-architecture} illustrates the corresponding purification-based implementation.

\begin{figure}[t]
  \centering
  \begin{tikzpicture}[
      x=1cm,
      font=\footnotesize,
      >=Latex,
      wire/.style={line width=0.65pt},
      circuitblock/.style={
        draw=black,
        fill=white,
        rounded corners=2pt,
        minimum height=0.72cm,
        align=center
      }
    ]
    \node[anchor=east] (ain) at (0.75,1.38) {$A:\ \ket{0}^{\otimes N}$};
    \node[anchor=east] (sin) at (0.75,0) {$S:\ \ket{0}^{\otimes N}$};

    \node[circuitblock,minimum width=1.35cm,minimum height=0.90cm]
    (core) at (1.80,1.38) {core loader\\[-0.15em]$C$};
    \node[circuitblock,minimum width=1.55cm,minimum height=2.05cm]
    (entangler) at (3.70,0.69) {$\mathrm{CNOT}_{A\to S}^{\otimes N}$};
    \node[circuitblock,minimum width=1.42cm,minimum height=0.90cm]
    (tail) at (5.95,0) {basis rotation\\[-0.15em]$U$};

    \draw[wire] (ain.east) -- (core.west);
    \draw[wire] (core.east) -- ($(entangler.west)+(0,0.69)$);
    \draw[wire] (sin.east) -- ($(entangler.west)+(0,-0.69)$);
    \draw[wire] ($(entangler.east)+(0,0.69)$) -- (7.25,1.38);
    \draw[wire] ($(entangler.east)+(0,-0.69)$) -- (tail.west);
    \draw[wire,-{Latex[length=2.2mm]}] (tail.east) -- (7.25,0);

    \node[below=0.10cm of core] {\scriptsize{$\sqrt{p(x)}$}};
    \node[above=0.08cm] at (6.75,1.38) {$\Tr_A$};
    \node[anchor=west] at (7.35,0) {$\rho_{\mathrm{SCTA}}$};

    \begin{scope}[on background layer]
      \node[
        draw=NavyBlue,
        fill=NavyBlue!6,
        rounded corners=5pt,
        fit=(core)(entangler),
        inner xsep=0.12cm,
        inner ysep=0.20cm
      ] (coregroup) {};
      \node[
        draw=BurntOrange,
        fill=BurntOrange!8,
        rounded corners=5pt,
        fit=(tail),
        inner xsep=0.12cm,
        inner ysep=0.20cm
      ] (tailgroup) {};
    \end{scope}
    \node[anchor=south,text=NavyBlue,font=\footnotesize\bfseries]
    at ($(coregroup.north)+(0,0.08)$) {spectral core: populations};
    \node[anchor=north,text=BurntOrange,font=\footnotesize\bfseries]
    at ($(tailgroup.south)-(0,0.12)$) {tail: eigenbasis};
  \end{tikzpicture}
  \caption{SCTA architecture in a purification-based implementation.
    The spectral core first loads amplitudes $\sqrt{p(x)}$ on the ancilla register $A$.
    The block $\mathrm{CNOT}_{A\to S}^{\otimes N}$ denotes $N$ parallel bitwise CNOT gates, with $A_j$ as control and $S_j$ as target; these gates correlate the registers so that tracing out $A$ leaves $\sigma_C=\sum_xp(x)\ket{x}\!\bra{x}$.
  The tail $U$ acts only on the system register $S$ and rotates the core basis, giving $\rho_{\mathrm{SCTA}}=U\sigma_CU^\dagger$.}
  \label{fig:scta-architecture}
\end{figure}

We now discuss three operational modes of the architecture, which differ in how the core and tail are specified and have different implications for what an operational SCTA entails.

\paragraph{Constructive mode.}
A constructive SCTA additionally specifies a core Hamiltonian $H_C$ satisfying $\sigma_C=\rhoG{H_C}$ and the classical and quantum resources needed to prepare the core and implement the tail to the stated accuracy.
For a diagonal core, one may write $H_C=\sum_xE_C(x)\ket{x}\!\bra{x}$.
The identity $\sigma_C=\rhoG{H_C}$ then fixes the probabilities in Eq.~\eqref{eq:scta-state} as $p(x)=\ee^{-\beta E_C(x)}/Z_C$, where $Z_C=\sum_x\ee^{-\beta E_C(x)}$.
Efficient pointwise evaluation of $E_C(x)$ determines the unnormalized Boltzmann factor for a specified label, but it does not by itself provide an efficient way to compute the partition function and the normalized probabilities, or provide a coherent loader.
In the constructive route, the main classical cost is therefore associated with the preprocessing needed to supply the core probabilities.
The quantum costs include the coherent core-loading algorithm and the implementation of the tail unitary on a physical quantum device.
In general, the constructed core Hamiltonian and tail satisfy the core-frame identity in Eq.~\eqref{eq:intro-core-tail-identity}, with a nonzero residual.
The majority of the subsequent analysis in this section is devoted to diagnosing this residual and determining whether it is locally benign.

\paragraph{Variational mode.}
In a variational SCTA, the rank-one core probabilities and the tail are promoted to parameterized families,
\begin{equation}
  \sigma_\theta
  =
  \sum_xp_\theta(x)\ket{x}\!\bra{x},
  \qquad
  \rho_{\theta,\phi}
  =
  U_\phi\sigma_\theta U_\phi^\dagger.
  \label{eq:variational-scta-state}
\end{equation}
where $\theta$ and $\phi$ denote the core and tail parameters, respectively.
Variational Gibbs-state preparation seeks parameters that minimize the free energy $F_{\beta,H}(\rho):=\Tr(H\rho)-\beta^{-1}S(\rho)$ of the target Hamiltonian $H$, since the exact Gibbs state $\rhoG{H}$ is the unique free-energy minimizer.
The energy term can be estimated from expectation values of the terms in $H$.
The von Neumann entropy, $S(\rho) = -\Tr(\rho\log\rho)$, is a nonlinear function of the state and is not the expectation value of a fixed observable that can be measured directly.
While estimating $S(\rho)$ does not necessarily require full state tomography, it does require information about the eigenvalue distribution of $\rho$.
General protocols obtain this information from repeated state preparations or controlled access to the preparation circuit; their cost can grow with the Hilbert-space dimension or state rank and can worsen when the smallest nonzero eigenvalue becomes small~\cite{AcharyaEtAl2020,WangZhaoWang2023}.
Therefore, entropy estimation is generally not efficient for a generic variational state.

The core--tail separation avoids this generic entropy-estimation problem when the core distribution is tractable.
Because the tail is unitary, it changes the eigenbasis of the state but not its eigenvalues, and hence the von Neumann entropy is invariant under the tail rotation:
\begin{equation}
  S(\rho_{\theta,\phi})
  =S(\sigma_\theta)
  =-\sum_xp_\theta(x)\log p_\theta(x).
  \label{eq:entropy-invariance}
\end{equation}
In the purification-based implementation of Fig.~\ref{fig:scta-architecture}, the same probabilities are encoded in the computational-basis statistics of the ancilla register immediately after the core-loading circuit.
Entropy evaluation can thus use the classical description of $p_\theta$, or ancilla samples together with computable log probabilities, without reconstructing the tail-rotated physical state.
This simplification does not, however, automatically make the sum in Eq.~\eqref{eq:entropy-invariance} efficient for an arbitrary distribution.
Entropy tractability requires a polynomial-resource method for evaluating or estimating the entropy of the core distribution.
For example, if the core is a product state that factorizes into single-site states, $\sigma_C = \bigotimes_{j=1}^{N} \sigma_j$, then its entropy is the sum of the entropies of the individual single-site states: $S(\sigma_C) = \sum_{j=1}^{N} S(\sigma_j)$.
The entropy can therefore be computed from the $N$ single-site density operators rather than from the full distribution over $2^N$ basis states.
More generally, a block-diagonal core has an entropy equal to the classical entropy of its block probabilities plus the probability-weighted entropies within the blocks; this remains tractable when both contributions can be evaluated with polynomial resources.

Therefore, the core--tail separation in SCTA could be used to confine the entropy calculation to a deliberately chosen tractable core family, while the tail adjusts the physical eigenbasis without changing the entropy.
This provides a natural interface for efficient variational Gibbs-state preparation.
A viable variational SCTA must therefore combine estimable energy expectation values with a polynomial-resource method for evaluating or estimating the core entropy.
Sec.~\ref{sec:exact-anchors} presents several core families in which this requirement is satisfied by product or finite-memory structure.

\paragraph{Hybrid mode.}
A hybrid SCTA combines analytically constructed and variationally optimized components within the same core--tail parameterization.
In the variational mode above, one first selects parameterized core and tail ansatz families and determines their adjustable parameters by minimizing the free energy.
The ansatz families may be chosen to have a physically motivated structure or to be hardware efficient, but they are not guaranteed to be optimal for the target Hamiltonian.
In the hybrid mode, an analytical construction additionally determines part or all of the circuit structure, such as the core factorization, the local generators composing the tail, and their circuit scheduling.
It may also supply initial parameter values for the subsequent optimization.
Thus, the constructive input shapes both the circuit architecture and its parameterization for variational optimization.

The residual-based certification developed below applies whenever the intended core admits a finite Hamiltonian representation $\sigma_C=\rho_\beta(H_C)$; the required locality and strength properties of $H_C$ must be verified for the chosen core family.
For such an SCTA, we now turn to the central certification question: how does the core-frame Hamiltonian residual defined in Eq.~\eqref{eq:intro-core-tail-identity}, $R:=U^\dagger H U-H_C$, together with the core and tail implementation errors, affect the accuracy of local thermal observables?
Below we introduce the necessary geometric and operational tools to quantify this effect.

\subsection{Local error metric and backward tail geometry}

The output of an SCTA implementation is intended to reproduce thermal observables on physically local regions, even when its global state differs appreciably from the target.
We therefore first specify the relevant local error metric and then determine which core-frame region can influence a chosen output region through the tail propagation.
Let $\Lambda$ be a finite metric lattice with finite-dimensional onsite Hilbert spaces.

\begin{definition}[Local trace distance]
  \label{def:local-trace-distance}
  For a nonempty region $A\subseteq\Lambda$, write $\rho_A=\Tr_{A^c}\rho$ as the reduced density matrix on $A$.
  At length scale $r\ge0$, define the unhalved local trace distance
  \begin{equation}
    D_r(\rho,\sigma)
    =
    \sup_{\substack{\varnothing\ne A\subseteq\Lambda\\\diam(A)\le r}}
    \norm{\rho_A-\sigma_A}_1.
    \label{eq:local-trace-distance}
  \end{equation}
\end{definition}

Here the trace norm is defined as $\norm{X}_1 :=\Tr\sqrt{X^\dagger X}$, and the diameter of a region $A$ is $\diam(A):=\max_{x,y\in A}\dist(x,y)$, with $\dist(x,y)$ the distance between sites $x$ and $y$ on a prescribed metric.
The quantity $D_r(\rho,\sigma) \in [0,2]$ measures the largest distinction between the states $\rho$ and $\sigma$ visible on any region of diameter at most $r$.
Indeed, trace-norm duality gives, for every Hermitian operator $X$,
\begin{equation}
  \norm{X}_1
  =
  \sup_{\substack{O=O^\dagger\\\norm{O}_\infty\le1}}
  \abs{\Tr(OX)}.
  \label{eq:trace-norm-duality-main}
\end{equation}
Thus, when $X = \rho_A - \sigma_A$, the quantity $D_r$ is the largest difference between the predictions of the two states for any normalized observable confined to a region of diameter at most \(r\).

The tail changes which part of the core can affect one of these local measurements.
Let $\cA_A$ denote the full operator algebra supported on $A$, i.e., the set of all operators that act nontrivially only on $A$.

\begin{definition}[Backward causal region]
  \label{def:backward-causal-region}
  For an ideal tail $U$ and an output region $A$, the backward causal region $B_U(A)$ is the smallest region $B\subseteq\Lambda$ such that
  \begin{equation*}
    U^\dagger\cA_AU\subseteq\cA_B.
  \end{equation*}
  Equivalently, $B_U(A)$ is the union of the supports of $U^\dagger O_AU$ over all operators $O_A$ supported on $A$.
\end{definition}

Thus $B_U(A)$ contains precisely the core-frame degrees of freedom that can influence measurements on $A$ through the ideal tail.
To quantify its spatial enlargement, define $A^{(L)}:=\{x\in\Lambda:\dist(x,A)\le L\}$, where $\dist(x,A):=\min_{y\in A}\dist(x,y)$.

\begin{definition}[Backward tail radius]
  \label{def:backward-tail-radius}
  For a nonempty output region $A$, define
  \begin{equation}
    \ell_U(A)
    :=
    \inf\{L\ge0:B_U(A)\subseteq A^{(L)}\}.
  \end{equation}
  The worst-case radius at output scale $r$ is
  \begin{equation}
    \ell_U(r)
    :=
    \sup_{\substack{\varnothing\ne A\subseteq\Lambda\\\diam(A)\le r}}
    \ell_U(A).
  \end{equation}
\end{definition}

The following proposition transports core-frame distinguishability through the tail and makes the required change of length scale explicit.

\begin{proposition}[Backward-region propagation]
  \label{prop:backward-region-propagation}
  For any states $\rho$ and $\sigma$, ideal unitary $U$, and nonempty output region $A$,
  \begin{equation}
    \norm{[U\rho U^\dagger-U\sigma U^\dagger]_A}_1
    \le
    \norm{[\rho-\sigma]_{B_U(A)}}_1.
    \label{eq:backward-region-propagation}
  \end{equation}
  Consequently,
  \begin{equation}
    D_r(U\rho U^\dagger,U\sigma U^\dagger)
    \le
    D_{r+2\ell_U(r)}(\rho,\sigma).
    \label{eq:finite-tail-propagation}
  \end{equation}
  If $U$ is a depth-$d$ circuit whose gate supports have diameter at most $a$, then $\ell_U(r)\le ad$.
\end{proposition}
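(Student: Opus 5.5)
We prove the three assertions in order, relying throughout on trace-norm duality, Eq.~\eqref{eq:trace-norm-duality-main}, applied to reduced states. For the first inequality, fix $A$ and set $X=\rho-\sigma$. The operator $[UXU^\dagger]_A$ is Hermitian, so its trace norm equals the supremum of $\abs{\Tr(O_A[UXU^\dagger]_A)}$ over Hermitian $O_A\in\cA_A$ with $\norm{O_A}_\infty\le1$.

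For each such $O_A$ we rewrite
\begin{equation*}
\Tr\bigl(O_A[UXU^\dagger]_A\bigr)=\Tr\bigl((O_A\otimes\id)UXU^\dagger\bigr)=\Tr\bigl(U^\dagger O_AU\,X\bigr).
\end{equation*}
By Definition~\ref{def:backward-causal-region}, $U^\dagger O_AU=\widetilde O_B\otimes\id$ with $\widetilde O_B\in\cA_B$ and $B=B_U(A)$. The operator $\widetilde O_B$ is Hermitian, and $\norm{\widetilde O_B}_\infty=\norm{O_A}_\infty\le1$ because conjugation by a unitary preserves the operator norm. Hence the expression equals $\Tr(\widetilde O_B X_B)$, which is bounded in absolute value by $\norm{X_B}_1$. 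Taking the supremum over $O_A$ gives Eq.~\eqref{eq:backward-region-propagation}.

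The only subtle point is the existence of a smallest region $B_U(A)$. We take it as given by the definition, in its equivalent form as the union of supports of the operators $U^\dagger O_AU$. If needed, we would justify this by noting that $U^\dagger\cA_AU$ is a subalgebra, and that the set of regions $B$ with $U^\dagger\cA_AU\subseteq\cA_B$ is closed under intersection, since $\cA_B\cap\cA_{B'}=\cA_{B\cap B'}$ for tensor-product algebras.

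For Eq.~\eqref{eq:finite-tail-propagation}, fix $A$ with $\diam(A)\le r$ and set $B=B_U(A)$. Since $\Lambda$ is finite, the infimum in Definition~\ref{def:backward-tail-radius} is attained, so $B\subseteq A^{(\ell)}$ with $\ell=\ell_U(A)\le\ell_U(r)$. For $x,y\in B$ choose $a,b\in A$ with $\dist(x,a)\le\ell$ and $\dist(y,b)\le\ell$. The triangle inequality then gives $\dist(x,y)\le 2\ell+r$, so $\diam(B)\le r+2\ell_U(r)$. If $B$ is empty, then $U^\dagger\cA_AU$ consists of scalars, which is impossible for a nonempty $A$ with nontrivial onsite spaces; so $B$ is a nonempty admissible region. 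Therefore $\norm{[U(\rho-\sigma)U^\dagger]_A}_1\le\norm{[\rho-\sigma]_B}_1\le D_{r+2\ell_U(r)}(\rho,\sigma)$, and taking the supremum over $A$ finishes this step.

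For the circuit bound, write $U=L_d\cdots L_1$, where each layer $L_k$ is a product of gates with disjoint supports of diameter at most $a$. Then $U^\dagger O U=L_1^\dagger\cdots L_d^\dagger\,O\,L_d\cdots L_1$, so operators are conjugated layer by layer from $L_d$ down to $L_1$. Suppose the current operator is supported in $A^{(ka)}$. Conjugation by one layer leaves every gate disjoint from that support acting trivially, because those gates commute with the operator and cancel. Each remaining gate meets the support and has diameter at most $a$, so every site it touches lies within distance $a$ of a point of the support. The new support therefore lies in $A^{((k+1)a)}$. Inducting over the $d$ layers gives $B_U(A)\subseteq A^{(ad)}$, hence $\ell_U(r)\le ad$. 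This support-growth bookkeeping is the step requiring the most care; the rest is routine.
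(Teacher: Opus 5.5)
Your proof is correct and follows essentially the same route as the paper's: trace-norm duality with the observable pulled back to $B_U(A)$, the triangle-inequality bound $\diam(A^{(L)})\le\diam(A)+2L$, and layer-by-layer support growth through the circuit. Your extra remarks on the existence of a smallest region, attainment of the infimum defining $\ell_U(A)$, and nonemptiness of $B_U(A)$ fill in details that the paper's proof leaves implicit.
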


\begin{proof}
  See SM Sec.~\smref{sec:supp-propagation-proof}{S1.A}.
\end{proof}

When $\rho$ and $\sigma$ are understood to be the core-frame states and $U\rho U^\dagger$ and $U\sigma U^\dagger$ are the corresponding output states, Eq.~\eqref{eq:backward-region-propagation} states that two core states need only be compared on the region $B_U(A)$ that can influence the selected output measurement on $A$.
The scale change in Eq.~\eqref{eq:finite-tail-propagation} follows because enlarging $A$ by a radius $\ell_U(r)$ can increase its diameter by that amount on both sides.
For a bounded-depth circuit of bounded-range gates, the resulting core-frame scale remains independent of the total volume.

The main text henceforth focuses on tails with a volume-independent finite backward radius at the observation scales of interest.
If conjugation by a tail is only approximately local, the pulled-back observable may instead be truncated to a finite neighborhood at the cost of an additive leakage error.
The precise quasi-local propagation result and its corresponding operational error budget are given in Sec.~\smref{sec:supp-quasilocal-tails}{S1.C} of the SM.

\subsection{Three-source operational error budget}

The backward-region construction identifies which core degrees of freedom can influence a chosen local output measurement.
We now combine that geometric statement with the operational components of an SCTA implementation.
The resulting budget separates tail implementation, core preparation, and imperfect modeling of the target Gibbs state by the chosen core and tail, which generally require different validation methods.

Set
\begin{equation}
  \rho_C=\rhoG{H_C},
  \qquad
  \rho_R=\rhoG{H_C+R}.
  \label{eq:core-and-residual-gibbs}
\end{equation}
Here $\rho_C$ is the ideal core Gibbs state, whereas $\rho_R$ is the exact target Gibbs state expressed in the core frame.
Indeed, unitary covariance of the Gibbs state and Eq.~\eqref{eq:intro-core-tail-identity} imply $\rhoG{H}=U\rho_RU^\dagger$.
Moreover, let $\widetilde{\sigma}_C$ denote the core state actually produced, and let $\widetilde\rho_{\mathrm{out}}$ denote the implemented output state of the full SCTA preparation.
At output scale $r$, define each error contribution by the same local metric in the physical frame:
\begin{subequations}
  \begin{align}
    \varepsilon_{\mathrm{tail}}(r)
    &:=D_r(\widetilde\rho_{\mathrm{out}},U\widetilde\sigma_CU^\dagger),
    \\
    \varepsilon_{\mathrm{core}}(r)
    &:=D_r(U\widetilde\sigma_CU^\dagger,U\rho_CU^\dagger),
    \\
    \varepsilon_{\mathrm{Gibbs}}(r)
    &:=D_r(U\rho_CU^\dagger,U\rho_RU^\dagger).
  \end{align}
  \label{eq:three-errors}
\end{subequations}
The tail term compares the implemented output with the ideal tail applied to the prepared core.
The core term compares the prepared and ideal core states after applying the ideal tail.
Finally, the Gibbs term compares the resulting ideal SCTA model with the exact target Gibbs state.

\begin{theorem}[Operational SCTA error budget]
  \label{thm:operational-error-budget}
  For every $r\ge0$,
  \begin{equation}
    \begin{split}
      D_r(\widetilde\rho_{\mathrm{out}},\rhoG{H})
      \le\min\!\big\{2,{}&\varepsilon_{\mathrm{tail}}(r)
        +\varepsilon_{\mathrm{core}}(r)\\
      &+\varepsilon_{\mathrm{Gibbs}}(r)\big\}.
    \end{split}
    \label{eq:operational-error-budget}
  \end{equation}
  If $\ell_U(r)\le\ell$, then also
  \begin{equation}
    \begin{split}
      D_r(\widetilde\rho_{\mathrm{out}},\rhoG{H})
      \le\min\!\big\{2,{}&\varepsilon_{\mathrm{tail}}(r)
        +D_{r+2\ell}(\widetilde\sigma_C,\rho_C)\\
      &+D_{r+2\ell}(\rho_R,\rho_C)\big\}.
    \end{split}
    \label{eq:finite-radius-operational-error-budget}
  \end{equation}
\end{theorem}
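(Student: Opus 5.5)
The first bound follows from the triangle inequality for $D_r$, combined with the identity $\rhoG{H}=U\rho_RU^\dagger$. The first step is to check that $D_r$ is a pseudometric on states. For each fixed region $A$, the map $\rho\mapsto\rho_A$ is linear, so $\norm{\rho_A-\tau_A}_1\le\norm{\rho_A-\sigma_A}_1+\norm{\sigma_A-\tau_A}_1$. Taking the supremum over $A$ with $\diam(A)\le r$ on both sides, and using that the supremum of a sum is at most the sum of the suprema, gives $D_r(\rho,\tau)\le D_r(\rho,\sigma)+D_r(\sigma,\tau)$.

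Next I insert the two intermediate states $U\widetilde\sigma_CU^\dagger$ and $U\rho_CU^\dagger$ between $\widetilde\rho_{\mathrm{out}}$ and $U\rho_RU^\dagger$. Two applications of the triangle inequality then give exactly $\varepsilon_{\mathrm{tail}}(r)+\varepsilon_{\mathrm{core}}(r)+\varepsilon_{\mathrm{Gibbs}}(r)$, by the definitions in Eq.~\eqref{eq:three-errors}.

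To identify $U\rho_RU^\dagger$ with $\rhoG{H}$, I use Eq.~\eqref{eq:intro-core-tail-identity}: $H=U(H_C+R)U^\dagger$. Hence $\ee^{-\beta H}=U\ee^{-\beta(H_C+R)}U^\dagger$, and the partition functions coincide by cyclicity of the trace.

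The cap at $2$ comes from the fact that for any two states and any region $A$, $\norm{\rho_A-\sigma_A}_1\le\norm{\rho_A}_1+\norm{\sigma_A}_1=2$, since partial traces of density matrices are density matrices. Therefore the left-hand side is at most the minimum of $2$ and the sum.

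For Eq.~\eqref{eq:finite-radius-operational-error-budget}, I leave the tail term untouched and apply Proposition~\ref{prop:backward-region-propagation}, Eq.~\eqref{eq:finite-tail-propagation}, to the core and Gibbs terms. This gives $\varepsilon_{\mathrm{core}}(r)\le D_{r+2\ell_U(r)}(\widetilde\sigma_C,\rho_C)$ and $\varepsilon_{\mathrm{Gibbs}}(r)\le D_{r+2\ell_U(r)}(\rho_C,\rho_R)$. It remains to replace $\ell_U(r)$ by $\ell$, which needs monotonicity of $D_s$ in $s$. This holds because enlarging $s$ only enlarges the set of admissible regions in the supremum, so $s\mapsto D_s$ is nondecreasing. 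Since $D_s$ is symmetric in its arguments, this yields the stated bound, again capped at $2$.

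I expect no real obstacle. The only points needing care are the monotonicity of $D_s$, and the fact that the proposition is applied with $\widetilde\sigma_C$, which need not be a Gibbs state. Proposition~\ref{prop:backward-region-propagation} holds for arbitrary states, so this causes no difficulty.
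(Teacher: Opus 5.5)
Your proposal is correct and follows the paper's proof: a region-wise triangle inequality through the intermediate states $U\widetilde\sigma_CU^\dagger$ and $U\rho_CU^\dagger$, unitary covariance to identify $U\rho_RU^\dagger=\rhoG{H}$, and the universal cap $D_r\le 2$. For the finite-radius bound, the paper re-derives the change of scale directly from $B_U(A)\subseteq A^{(\ell)}$ together with partial-trace contractivity, while you combine Eq.~\eqref{eq:finite-tail-propagation} with the monotonicity of $s\mapsto D_s$; the two arguments are equivalent and both valid.
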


\begin{proof}
  See SM Sec.~\smref{sec:supp-error-budget-proof}{S1.B}.
\end{proof}

Proposition~\ref{prop:backward-region-propagation} gives the second inequality by pulling the core and Gibbs comparisons back to the core frame through the ideal tail and enlarging the observation scale from $r$ to $r+2\ell$.
The first two errors are implementation-specific inputs: depending on the preparation modality, they may be computed, rigorously upper-bounded from device specifications, or experimentally estimated.
On the other hand, the third term is a modeling error that depends on the declared ideal core and tail, as well as the target Hamiltonian.
This is the term for which we try to establish a volume-independent certification bound in the remainder of this section.

Two distinct notions of locality enter this composition.
Backward-tail geometry pulls an output observable on $A$ through $U$ and identifies the core-frame region $B_U(A)$ that can influence it.
This is a purely kinematic property of the tail.
Thermal-response locality instead controls how the residual from the inexact core and tail choices [cf.~Eq.~\eqref{eq:intro-core-tail-identity}] changes the Gibbs state on that region.
We now introduce the methods to quantify that response and combine them with the backward-tail geometry to produce a volume-independent certification.

\subsection{From residual strength to term-resolved Gibbs response}
\label{sec:local-residual-response}

Equation~\eqref{eq:finite-radius-operational-error-budget} reduces the design-stage component of the SCTA error budget to comparisons between $\rho_\beta(H_C)$ and $\rho_\beta(H_C+R)$ on the core-frame region.
Whether the difference between these two states admits a useful volume-uniform bound depends not only on the magnitude of the residual $R$, but also on how that magnitude is distributed across the system.
As a result, three residual structures should be distinguished.

First, a globally small residual is controlled directly by its spectral half-width
\begin{equation}
  \begin{aligned}
    w(R)
    &:=
    \inf_{c\in\mathbb R}\norm{R-c\id}_\infty
    \\
    &=
    \frac{\lambda_{\max}(R)-\lambda_{\min}(R)}{2}.
  \end{aligned}
  \label{eq:half-width}
\end{equation}
and needs no spatial hypothesis.
Here $\norm{X}_\infty=\sup_{\norm{\psi}=1}\norm{X\ket{\psi}}$ is the operator norm.
The second expression shows that $w(R)$ is half the spectral width, where the full spectral width is the difference between the largest and smallest eigenvalues.
Second, a residual supported on one bounded region whose diameter and cardinality are uniformly bounded in volume can be treated by the same global bound through $w(R)$ or by a local-perturbations-perturb-locally result~\cite{CapelEtAl2025}.
Neither class captures the mismatch normally produced by repeating the same imperfect local reduction throughout a bulk, when $w(R)$ can grow with volume even when the error near any site stays fixed.
The main SCTA case is therefore an extensive but locally weak residual with a declared Hermitian decomposition
\begin{equation}
  R=\sum_Xr_X,
  \qquad
  \supp(r_X)\subseteq X.
  \label{eq:local-residual-decomposition}
\end{equation}
Thus each $r_X$ is a local Hermitian operator supported on a bounded neighborhood $X$.
Since scalar shifts do not affect a Gibbs state, it is useful to center each local term according to
\begin{equation}
  \begin{aligned}
    \bar r_X&=r_X-c_X^\star\id_X,\\
    c_X^\star&=\frac{\lambda_{\max}(r_X)+\lambda_{\min}(r_X)}2.
  \end{aligned}
  \label{eq:centered-residual-terms}
\end{equation}
Then we define the local half-width of each residual term by
\begin{equation}
  J_X
  =\norm{\bar r_X}_\infty
  =\frac{\lambda_{\max}(r_X)-\lambda_{\min}(r_X)}2.
  \label{eq:local-residual-width}
\end{equation}
and the incident strength at each site $x$ by
\begin{equation}
  \varepsilon_R
  =\sup_{x\in\Lambda}\sum_{X\ni x}J_X.
  \label{eq:incident-residual-strength}
\end{equation}
We assume $|X|\le q$ and $\diam(X)\le a_R$ uniformly, i.e., the cardinality and diameter of each local residual term are bounded independently of the total volume.
These are therefore properties of the declared decomposition, not merely of the sum $R$.

Our objective is to convert the Hamiltonian mismatch $R$ into a bound on the difference between the Gibbs states of $H_C$ and $H_C+R$.
Because the normalized Gibbs state depends nonlinearly on its Hamiltonian, this endpoint difference cannot be decomposed directly into independent contributions from the terms $r_X$.
We therefore connect the two Hamiltonians by the comparison path
\begin{equation}
  H_s=H_C+sR,
  \qquad
  \rho_s=\rhoG{H_s},
  \label{eq:comparison-path}
\end{equation}
where $s\in[0,1]$ is an auxiliary interpolation parameter.
Its endpoints are $\rho_0=\rhoG{H_C}$ and $\rho_1=\rhoG{H_C+R}$.
For a fixed observable $O$, write $\langle O\rangle_s:=\Tr(\rho_sO)$.
The fundamental theorem of calculus gives
\begin{equation}
  \Tr\!\left[(\rho_1-\rho_0)O\right]
  =
  \int_0^1ds\,\frac{d}{ds}\langle O\rangle_s.
\end{equation}
The finite Gibbs-state difference written in terms of observable expectations is thus reduced to the equilibrium response accumulated along the complete path $s:0\to1$.
For Hermitian observables, the exact equilibrium response is~\cite{Kubo1957,Mori1965}
\begin{equation}
  \frac{d}{ds}\langle O\rangle_s
  =-\beta\cK_s(R,O),
  \label{eq:kubo-mori-derivative}
\end{equation}
where
\begin{equation}
  \cK_s(A,O)
  =\int_0^1d\tau\,
  \left[
    \Tr(\rho_s^\tau A\rho_s^{1-\tau}O)
    -\langle A\rangle_s\langle O\rangle_s
  \right]
  \label{eq:kubo-mori-kernel}
\end{equation}
is the Kubo--Mori covariance, a noncommutative generalization of the ordinary equilibrium covariance.
We state its properties in the following proposition.

\begin{proposition}[Exact Gibbs response]
  \label{prop:exact-gibbs-response-main}
  For every Hermitian observable $O$, the path in Eq.~\eqref{eq:comparison-path} satisfies Eq.~\eqref{eq:kubo-mori-derivative}.
  On Hermitian operators, $\cK_s$ is a real symmetric positive-semidefinite bilinear form, is invariant under scalar shifts of either argument, and obeys
  \begin{equation}
    \abs{\cK_s(A,O)}
    \le
    w(A)w(O),
    \label{eq:kubo-mori-width-main}
  \end{equation}
  where $w(\cdot)$ denotes the spectral half-width defined in Eq.~\eqref{eq:half-width}.
\end{proposition}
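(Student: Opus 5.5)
The plan is to prove the derivative formula first, then read off the structural properties from an explicit spectral representation of $\cK_s$, and obtain the width bound last from a Cauchy--Schwarz argument.

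\textbf{Step 1: the derivative.} Write $\rho_s=\ee^{-\beta H_s}/Z_s$ with $Z_s=\Tr\ee^{-\beta H_s}$. Duhamel's formula gives
\begin{equation*}
  \frac{d}{ds}\ee^{-\beta H_s}=-\beta\int_0^1d\tau\,\ee^{-\tau\beta H_s}R\,\ee^{-(1-\tau)\beta H_s}.
\end{equation*}
Consequently $\tfrac{d}{ds}Z_s=-\beta Z_s\langle R\rangle_s$ by cyclicity of the trace. Differentiating $\Tr(\rho_sO)$ by the quotient rule and writing $\ee^{-\tau\beta H_s}/Z_s^{\tau}=\rho_s^\tau$ produces exactly $-\beta\cK_s(R,O)$. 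Everything is finite-dimensional and $\rho_s>0$, so no analytic subtleties arise.

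\textbf{Step 2: structural properties.} Diagonalize $\rho_s=\sum_n p_n\ket{n}\!\bra{n}$ with all $p_n>0$. Then
\begin{equation*}
  \int_0^1d\tau\,\Tr(\rho_s^\tau A\rho_s^{1-\tau}O)=\sum_{m,n}L(p_m,p_n)A_{mn}O_{nm},
\end{equation*}
where $L(p,q)=\int_0^1p^\tau q^{1-\tau}d\tau$ is the logarithmic mean, which is positive and symmetric.
\begin{itemize}
  \item \emph{Symmetry and reality.} For Hermitian $A,O$ we have $A_{mn}O_{nm}=\overline{A_{nm}O_{mn}}$. Together with the symmetry of $L$, this shows the bilinear sum is real and symmetric under $A\leftrightarrow O$.
  \item \emph{Shift invariance.} Replacing $A$ by $A+c\id$ adds $c\sum_n p_nO_{nn}=c\langle O\rangle_s$ to the double sum, which is cancelled exactly by the subtracted term $\langle A\rangle_s\langle O\rangle_s$.
  \item \emph{Positivity.} Replace $A$ by the centered operator $A_0=A-\langle A\rangle_s\id$. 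Then $\cK_s(A,A)=\sum_{m,n}L(p_m,p_n)\abs{(A_0)_{mn}}^2\ge0$.
\end{itemize}

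\textbf{Step 3: the width bound.} Since $\cK_s$ is a positive semidefinite symmetric form, Cauchy--Schwarz gives $\abs{\cK_s(A,O)}\le\sqrt{\cK_s(A,A)\cK_s(O,O)}$, so it suffices to show $\cK_s(A,A)\le w(A)^2$. Use the arithmetic--logarithmic mean inequality $L(p,q)\le(p+q)/2$ to get
\begin{equation*}
  \cK_s(A,A)\le\sum_{m,n}\frac{p_m+p_n}2\abs{(A_0)_{mn}}^2=\Tr(\rho_sA_0^2)=\mathrm{Var}_{\rho_s}(A).
\end{equation*}
The variance is invariant under shifts of $A$ and is at most $\Tr\bigl(\rho_s(A-c^\star\id)^2\bigr)\le\norm{A-c^\star\id}_\infty^2=w(A)^2$, where $c^\star$ is the spectral midpoint. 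The same holds for $O$, which yields Eq.~\eqref{eq:kubo-mori-width-main}.

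The main point needing care is this last step, specifically the comparison $L\le$ arithmetic mean. I would prove it from the convexity of $\tau\mapsto p^\tau q^{1-\tau}$, which gives $p^\tau q^{1-\tau}\le\tau p+(1-\tau)q$ pointwise and then integrates to $(p+q)/2$. An alternative is to avoid the comparison altogether and bound the $\tau$-integrand directly: by Hölder's inequality, $\abs{\Tr(\rho^\tau A_0\rho^{1-\tau}O_0)}\le\norm{\rho^{\tau/2}A_0\rho^{(1-\tau)/2}}_2\norm{\rho^{(1-\tau)/2}O_0\rho^{\tau/2}}_2$, and each factor is controlled by the norm of the centered operator.
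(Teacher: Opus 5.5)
Your proposal is correct and follows essentially the same route as the paper's proof. Like the paper, you use Duhamel's formula with the quotient rule for the derivative, the logarithmic-mean representation $\sum_{m,n}L(p_m,p_n)\abs{(A_0)_{mn}}^2$ for positivity, and Cauchy--Schwarz with $L(p,q)\le(p+q)/2$ to bound $\cK_s(A,A)$ by the variance and hence by $w(A)^2$.

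One caution about your optional H\"older alternative: bounding each factor by the operator norm of the mean-centered operator gives only $\norm{A-\langle A\rangle_s\id}_\infty\le 2w(A)$, which loses the constant. To keep the sharp bound, instead estimate $\Tr(\rho_s^{1-\tau}A_0\rho_s^{\tau}A_0)\le\Tr(\rho_sA_0^2)$ by the same weighted arithmetic--geometric mean step.
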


\begin{proof}
  See SM Sec.~\smref{sec:supp-response-identity-proof}{S2.A}.
\end{proof}

The proposition is exact on every finite-dimensional system.
Because the response is integrated over the complete interpolation, any spatial response estimate used below must hold throughout that path, rather than only at its endpoints.
Before imposing such a spatial-response hypothesis, Proposition~\ref{prop:exact-gibbs-response-main} already yields a baseline estimate controlled only by the spectral half-width of the complete residual.
The following corollary records this unconditional bound.

\begin{corollary}[Unconditional global bound]
  \label{cor:global-width-main}
  For every region $B$,
  \begin{equation}
    \norm{[\rho_1-\rho_0]_B}_1
    \le
    \norm{\rho_1-\rho_0}_1
    \le
    \min\{2,\beta w(R)\}.
    \label{eq:global-width-main}
  \end{equation}
\end{corollary}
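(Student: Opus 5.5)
The corollary has three parts, and I will prove them in order: the contraction $\norm{[\rho_1-\rho_0]_B}_1\le\norm{\rho_1-\rho_0}_1$, then the bound by $\beta w(R)$, and finally the trivial bound by $2$.

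\emph{Step 1: contraction under partial trace.} The first inequality follows from trace-norm duality, Eq.~\eqref{eq:trace-norm-duality-main}. Let $O_B$ be any Hermitian observable on $B$ with $\norm{O_B}_\infty\le1$. Then $O_B\otimes\id_{B^c}$ has the same norm, and
\begin{equation*}
  \Tr(O_B[\rho_1-\rho_0]_B)=\Tr\big((O_B\otimes\id)(\rho_1-\rho_0)\big).
\end{equation*}
Taking the supremum over $O_B$ bounds the left side by the supremum over all global observables, which is $\norm{\rho_1-\rho_0}_1$.

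\emph{Step 2: the global bound.} I apply duality again. Fix a global Hermitian $O$ with $\norm{O}_\infty\le1$. By the fundamental theorem of calculus along the path in Eq.~\eqref{eq:comparison-path}, and Proposition~\ref{prop:exact-gibbs-response-main} (which supplies Eq.~\eqref{eq:kubo-mori-derivative}),
\begin{equation*}
  \Tr[(\rho_1-\rho_0)O]=-\beta\int_0^1 ds\,\cK_s(R,O).
\end{equation*}
The width bound, Eq.~\eqref{eq:kubo-mori-width-main}, gives $\abs{\cK_s(R,O)}\le w(R)\,w(O)$ uniformly in $s$. Since $w(O)=\inf_c\norm{O-c\id}_\infty\le\norm{O}_\infty\le1$ (take $c=0$), the integrand is at most $w(R)$ in absolute value. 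Hence $\abs{\Tr[(\rho_1-\rho_0)O]}\le\beta w(R)$, and taking the supremum over $O$ yields $\norm{\rho_1-\rho_0}_1\le\beta w(R)$.

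\emph{Step 3: the bound by $2$.} Since $\rho_0$ and $\rho_1$ are density matrices, the triangle inequality gives $\norm{\rho_1-\rho_0}_1\le\norm{\rho_1}_1+\norm{\rho_0}_1=2$. Combining this with Step 2 gives the minimum in Eq.~\eqref{eq:global-width-main}.

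No step here is a genuine obstacle once Proposition~\ref{prop:exact-gibbs-response-main} is assumed. The only points needing care are in Step 2: the bound must hold uniformly along the whole path (which Eq.~\eqref{eq:kubo-mori-width-main} provides for every $s$), and the normalization $\norm{O}_\infty\le1$ must be converted into $w(O)\le1$, as done above.
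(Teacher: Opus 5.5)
Your proposal is correct and follows the paper's proof essentially step for step. The paper integrates the Kubo--Mori response identity, bounds the integrand by $w(R)\,w(O)\le w(R)$, applies trace-norm duality, and adds the trivial bound $2$, exactly as you do. The only difference is that the paper simply cites trace-norm contractivity of the partial trace, whereas you derive it from duality.
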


\begin{proof}
  See SM Sec.~\smref{sec:supp-width-bound-proof}{S2.B}.
\end{proof}

When a fixed local mismatch is repeated throughout the bulk, however, $w(R)$ generally grows with volume.
The locally weak regime therefore requires the declared decomposition in Eq.~\eqref{eq:local-residual-decomposition}.
By bilinearity and scalar-shift invariance of the Kubo--Mori covariance,
\begin{equation}
  \cK_s(R,O)
  =
  \sum_X\cK_s(\bar r_X,O),
\end{equation}
so the total response can be resolved into contributions from the centered local terms.

For a nonempty core-frame observation region $B$, define the normalized response of one centered residual term and the corresponding path-averaged response by
\begin{equation}
  \begin{aligned}
    \chi_s(X\!\to\!B)
    &:={}
    \sup_{\substack{O_B=O_B^\dagger\\\norm{O_B}_\infty\le1}}
    \frac{\abs{\cK_s(\bar r_X,O_B)}}{J_X},
    \\
    \bar\chi(X\!\to\!B)
    &:={}
    \int_0^1\chi_s(X\!\to\!B)\,ds,
  \end{aligned}
  \label{eq:path-averaged-response-main}
\end{equation}
with both quantities set to zero if $J_X=0$.
$\chi_s(X\!\to\!B)$ measures the maximum instantaneous response of any Hermitian observable on $B$ to the local mismatch on $X$, normalized by the magnitude of that mismatch.
Accordingly, $\bar\chi(X\!\to\!B)$ measures the worst-case influence of $\bar r_X$ on observables in $B$, accumulated over the complete path.
Proposition~\ref{prop:exact-gibbs-response-main} implies $0\le\chi_s,\bar\chi\le1$.

\begin{theorem}[Term-resolved local Gibbs bound]
  \label{thm:response-profile-main}
  For every nonempty region $B$ in a finite lattice,
  \begin{equation}
    \norm{[\rho_1-\rho_0]_B}_1
    \le
    \min\!\left\{2,\,
      \beta\sum_XJ_X\bar\chi(X\!\to\!B)
    \right\}.
    \label{eq:response-profile-main}
  \end{equation}
\end{theorem}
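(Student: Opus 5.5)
The plan is to prove the bound through trace-norm duality on $B$, combined with the exact Kubo--Mori interpolation of Proposition~\ref{prop:exact-gibbs-response-main}. The constant $2$ in the minimum needs no work, since $\norm{[\rho_1-\rho_0]_B}_1\le\norm{[\rho_1]_B}_1+\norm{[\rho_0]_B}_1=2$. For the second entry, we use Eq.~\eqref{eq:trace-norm-duality-main} applied to the Hermitian operator $X=[\rho_1-\rho_0]_B$ on the finite-dimensional space of $B$. This gives
\[
\norm{[\rho_1-\rho_0]_B}_1=\sup_{O_B}\abs{\Tr[(\rho_1-\rho_0)O_B]},
\]
where the supremum runs over Hermitian $O_B$ supported on $B$ with $\norm{O_B}_\infty\le1$. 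Here we identify $O_B$ with $O_B\otimes\id_{B^c}$, so that $\Tr(\rho_B O_B)=\Tr(\rho O_B)$.

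Fix such an $O_B$. The fundamental theorem of calculus along the path~\eqref{eq:comparison-path}, together with Eq.~\eqref{eq:kubo-mori-derivative}, gives
\[
\Tr[(\rho_1-\rho_0)O_B]=-\beta\int_0^1 ds\,\cK_s(R,O_B).
\]
The integrand is continuous, indeed smooth, in $s$ because $\ee^{-\beta H_s}$ is analytic in $s$ and $Z$ is positive. Next we apply the decomposition $\cK_s(R,O_B)=\sum_X\cK_s(\bar r_X,O_B)$. This step uses bilinearity and scalar-shift invariance from Proposition~\ref{prop:exact-gibbs-response-main}: writing $R=\sum_X\bar r_X+(\sum_X c_X^\star)\id$, the scalar term drops out. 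Since the lattice is finite, the sum over $X$ is finite and can be exchanged with the integral.

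We then bound each term. For $J_X>0$, the definition~\eqref{eq:path-averaged-response-main} gives $\abs{\cK_s(\bar r_X,O_B)}\le J_X\chi_s(X\!\to\!B)$ for every admissible $O_B$. For $J_X=0$, $\bar r_X=0$, so the term vanishes and matches the convention $\chi_s=0$. By the triangle inequality,
\[
\abs{\Tr[(\rho_1-\rho_0)O_B]}\le\beta\sum_X J_X\int_0^1\chi_s(X\!\to\!B)\,ds=\beta\sum_X J_X\bar\chi(X\!\to\!B).
\]
The right-hand side does not depend on $O_B$, so taking the supremum over $O_B$ finishes the argument.

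The only delicate point is integrability of $s\mapsto\chi_s(X\!\to\!B)$, which is needed for $\bar\chi$ to be well defined. We plan to settle it by observing that $\chi_s$ is a supremum of continuous functions $s\mapsto\abs{\cK_s(\bar r_X,O_B)}/J_X$. It is therefore lower semicontinuous, hence Borel measurable, and it is bounded by $1$ via Eq.~\eqref{eq:kubo-mori-width-main}. Alternatively, the supremum is attained on a compact set and depends continuously on $s$. In either case the interchange of supremum and integral goes in the favorable direction, $\sup\int\le\int\sup$, so no further obstacle is expected.
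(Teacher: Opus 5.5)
Your proposal is correct and follows essentially the same route as the paper's proof in SM Sec.~S2.C: trace-norm duality with a fixed Hermitian contraction $O_B$, the Kubo--Mori path identity, the centering decomposition $R=c_R\id+\sum_X\bar r_X$ with a finite sum--integral interchange, the termwise bound by $J_X\chi_s(X\!\to\!B)$, and then the supremum over $O_B$ and the minimum with $2$. Your measurability remark on $s\mapsto\chi_s(X\!\to\!B)$ is a harmless addition that the paper leaves implicit.
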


\begin{proof}
  See SM Sec.~\smref{sec:supp-localized-residual-proof}{S2.C}.
\end{proof}

This theorem is an exact consequence of the response identity and the declared decomposition, but it is not by itself a spatial stability theorem.
Without decay of $\bar\chi(X\!\to\!B)$ with the separation of $X$ and $B$, its right-hand side may still grow with the volume.

\subsection{Shell summability and volume-uniform control}
\label{sec:shell-summability}

The remaining task is to determine when the termwise response in Theorem~\ref{thm:response-profile-main} can be accumulated without producing a bound that grows with the total volume.
To supply the required locality input, suppose that for disjoint $X$ and $B$ ($X\cap B=\varnothing$),
\begin{equation}
  \bar\chi(X\!\to\!B)
  \le
  C_\beta(B)g\!\left(\dist(X,B)\right),
  \label{eq:path-averaged-response-envelope-main}
\end{equation}
where $0\le C_\beta(B)<\infty$ and $g:[0,\infty)\to[0,\infty)$ is nonincreasing.
Here $C_\beta(B)$ is a response prefactor chosen uniformly over all residual supports satisfying $\abs{X}\le q$ and $\diam(X)\le a_R$; its possible dependence on these fixed support bounds and on the local strength and range bounds along the comparison path is suppressed in the notation.
The function $g$ isolates the decay with the separation of the perturbation and observation regions.

We note that Eq.~\eqref{eq:path-averaged-response-envelope-main} is an additional analytical input and does not follow from the locality of the residual decomposition alone.
Such an envelope is expected away from thermal criticality in regimes with finite correlation length and suitable mixing or stability properties, where the influence of a local perturbation on distant observables decreases with distance.
It holds exactly for factorized paths and can be established in classical uniqueness regimes and suitable noncommuting high-temperature phases, as discussed in Sec.~\ref{sec:physical-response-regimes}, but it may fail in low-temperature phases with long-range order or near critical points.

Eq.~\eqref{eq:path-averaged-response-envelope-main} controls the response of only one residual support, but the right-hand side of Eq.~\eqref{eq:response-profile-main} contains all supports in the lattice.
The remaining question is therefore geometric: for a fixed region $B$ in a growing lattice, does the decay of the response with separation compensate for the number of locations at which residual terms may occur?
To express this competition, we partition the sites outside $B$ into the unit-width spatial shells labeled by $n\in\mathbb Z_{\ge0}$,
\begin{equation}
  \mathcal S_B(n)
  :=
  \{x\in\Lambda\setminus B:n\le\dist(x,B)<n+1\}.
  \label{eq:distance-shell-main}
\end{equation}
This definition applies to both integer-valued graph metrics and metrics with noninteger distances.
The corresponding shell population is
\begin{equation}
  N_B(n)
  =
  \abs{\mathcal S_B(n)},
  \label{eq:shell-population-main}
\end{equation}
which counts the number of sites in the $n$th distance shell around $B$.
Finally, define the response-weighted shell sum
\begin{equation}
  \Sigma_g(B;\Lambda)
  :=
  \sum_{n=0}^{\infty}N_B(n)g(n).
  \label{eq:shell-accumulation-main}
\end{equation}
Because $g$ is nonincreasing, every site in $\mathcal S_B(n)$ contributes at most $g(n)$, so $\Sigma_g(B;\Lambda)$ bounds the site sum $\sum_{x\in\Lambda\setminus B}g(\dist(x,B))$.
Intuitively, $N_B(n)$ measures geometric growth, $g(n)$ measures response decay, and $\Sigma_g(B;\Lambda)$ records the net effect of the two competing factors.
The following theorem combines this shell sum with the local incident-strength bound $\varepsilon_R$ to control the full response sum.

\begin{theorem}[Shell-resolved local Gibbs bound]
  \label{thm:shell-response-main}
  Let the residual obey Eqs.~\eqref{eq:local-residual-decomposition}--\eqref{eq:incident-residual-strength} with $\abs{X}\le q$ and $\diam(X)\le a_R$, and suppose Eq.~\eqref{eq:path-averaged-response-envelope-main} holds.
  Then every nonempty region $B$ satisfies
  \begin{equation}
    \norm{[\rho_1-\rho_0]_B}_1
    \le
    \min\!\left\{2,\,
      \beta\varepsilon_R
      \left[\abs{B}+C_\beta(B)\Sigma_g(B;\Lambda)\right]
    \right\}.
    \label{eq:shell-response-main}
  \end{equation}
\end{theorem}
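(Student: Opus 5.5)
We start from Theorem~\ref{thm:response-profile-main}, which already gives $\norm{[\rho_1-\rho_0]_B}_1\le\min\{2,\beta\sum_XJ_X\bar\chi(X\!\to\!B)\}$. Everything then reduces to a combinatorial estimate of the weighted sum $\sum_XJ_X\bar\chi(X\!\to\!B)$. We split the supports into two classes: those meeting $B$ ($X\cap B\ne\varnothing$) and those disjoint from $B$.

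For the first class we use only the unconditional bound $\bar\chi\le1$, which follows from Proposition~\ref{prop:exact-gibbs-response-main}. Assign each such $X$ to one site $x\in X\cap B$, or simply overcount by summing over all sites of $B$:
\begin{equation*}
  \sum_{X\cap B\ne\varnothing}J_X
  \le\sum_{x\in B}\sum_{X\ni x}J_X
  \le\abs{B}\varepsilon_R,
\end{equation*}
using the definition of the incident strength in Eq.~\eqref{eq:incident-residual-strength}. This produces the $\abs{B}$ term.

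For the disjoint class we invoke the envelope Eq.~\eqref{eq:path-averaged-response-envelope-main}, which gives $\bar\chi(X\!\to\!B)\le C_\beta(B)g(\dist(X,B))$. The key step is to distribute each $J_X$ over a representative site of $X$ whose distance to $B$ controls $\dist(X,B)$ in the right direction. Choose $x_X\in X$ attaining $\dist(X,B)=\min_{x\in X}\dist(x,B)$. Then $\dist(x_X,B)=\dist(X,B)$ and $x_X\notin B$, so
\begin{equation*}
  g(\dist(X,B))=g(\dist(x_X,B)).
\end{equation*}
Grouping the supports by their representative gives
\begin{equation*}
  \sum_{X\cap B=\varnothing}J_X\,g(\dist(X,B))
  \le\sum_{x\in\Lambda\setminus B}g(\dist(x,B))\sum_{X\ni x}J_X
  \le\varepsilon_R\sum_{x\notin B}g(\dist(x,B)).
\end{equation*}
Since $g$ is nonincreasing and every $x\in\mathcal S_B(n)$ has $\dist(x,B)\ge n$, this last sum is at most $\sum_nN_B(n)g(n)=\Sigma_g(B;\Lambda)$. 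Adding the two classes and inserting the result into Theorem~\ref{thm:response-profile-main} yields Eq.~\eqref{eq:shell-response-main}.

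The only delicate point is choosing the representative site. If one naively bounded $g(\dist(X,B))$ by $g$ evaluated at an arbitrary site of $X$, the monotonicity of $g$ would point the wrong way, since an arbitrary site can be farther from $B$ than $X$ is. Taking the minimizing site avoids this, and it avoids any loss in $q$ or $a_R$; those bounds enter only implicitly, through the uniformity of $C_\beta(B)$. One must also check that each $X$ is counted at least once: it is counted exactly once via its representative, and inserting the full incident sum only overcounts, which is harmless because all terms are nonnegative. The $\min\{2,\cdot\}$ is inherited directly from Theorem~\ref{thm:response-profile-main}.
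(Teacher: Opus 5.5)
Your proposal is correct and follows essentially the same route as the paper's proof: it splits the response sum from Theorem~\ref{thm:response-profile-main} into supports meeting $B$ (bounded via $\bar\chi\le1$ and the overcounting $\sum_{x\in B}\sum_{X\ni x}J_X\le\varepsilon_R\abs{B}$) and supports disjoint from $B$. For the disjoint class it likewise assigns each $X$ to a nearest anchor $x_X$ with $\dist(x_X,B)=\dist(X,B)$, groups by anchors to bound the sum by $\varepsilon_R C_\beta(B)\sum_{x\notin B}g(\dist(x,B))$, and uses monotonicity of $g$ shell by shell to reach $\Sigma_g(B;\Lambda)$.
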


\begin{proof}
  See SM Sec.~\smref{sec:supp-shell-bound-proof}{S2.D}.
\end{proof}

The two terms in Eq.~\eqref{eq:shell-response-main} arise from distinct classes of residual supports.
For supports intersecting $B$ ($X\cap B\ne\varnothing$), the universal response bound $\bar\chi(X\!\to\!B)\le1$ and the incident-strength bound give a contribution no greater than $\varepsilon_R\abs{B}$, which produces the first term inside the square brackets.
For supports disjoint from $B$, the spatial response envelope and shell summation give the second contribution $\varepsilon_R C_\beta(B)\Sigma_g(B;\Lambda)$.
The theorem thereby separates three logically independent ingredients: construction-produced local strength $\varepsilon_R$, thermal response decay $g$, and geometric shell growth $N_B(n)$.

On any fixed finite lattice, $\Sigma_g(B;\Lambda)$ is finite because all sufficiently distant shells are empty.
Shell summability becomes a substantive condition only when the lattice grows.
To formulate this condition, consider a sequence of finite metric lattices $\{\Lambda_N\}_{N\ge1}$ with $\abs{\Lambda_N}\to\infty$ on which the preceding assumptions hold with common upper bounds $q$, $a_R$, and $\varepsilon_R$, and with one response profile $g$.
For a fixed observation scale $r$, we call the decay profile $g$ shell summable at scale $r$ when
\begin{equation}
  \Sigma_r
  :=
  \sup_{\substack{N\ge1\\\varnothing\ne B\subseteq\Lambda_N,\ \diam(B)\le r}}
  \Sigma_g(B;\Lambda_N)
  <
  \infty.
  \label{eq:uniform-shell-summability-main}
\end{equation}
This condition requires the envelope weight accumulated over all shells to remain bounded uniformly over the system size and every observation region admitted by $D_r$.

A useful sufficient test follows from polynomial shell growth.
Suppose that for some $d\ge1$ and $c_r<\infty$,
\begin{equation}
  N_B(n)
  \le
  c_r(1+n)^{d-1}
  \label{eq:uniform-polynomial-shell-growth-main}
\end{equation}
for every $N$ and every nonempty $B\subseteq\Lambda_N$ with $\diam(B)\le r$.
Here $d$ is the exponent governing polynomial shell growth.
On a regular $d$-dimensional lattice, the number of sites in a radius-$n$ ball grows as $n^d$, while the number in its unit-width boundary shell grows as $n^{d-1}$, which motivates Eq.~\eqref{eq:uniform-polynomial-shell-growth-main}.
The constant $c_r$ allows the bound to depend on the fixed observation scale $r$.
Then Eq.~\eqref{eq:uniform-shell-summability-main} follows whenever
\begin{equation}
  \sum_{n=0}^{\infty}(1+n)^{d-1}g(n)
  <
  \infty,
  \label{eq:weighted-shell-summability-main}
\end{equation}
because $\Sigma_r$ is at most $c_r$ times this series.
Finite-range, exponential, and stretched-exponential decay profiles satisfy Eq.~\eqref{eq:weighted-shell-summability-main} for every finite $d$, while a power-law envelope $g(n)\sim (1+n)^{-\eta}$ satisfies it only when $\eta>d$.

Finally, while shell summability controls the accumulated response of supports disjoint from $B$, it does not control the local term $\abs{B}$ or the response prefactor $C_\beta(B)$ in Eq.~\eqref{eq:shell-response-main}.
For the volume-uniform consequence, we additionally define
\begin{equation}
  v(r)
  :=
  \sup_{N\ge1}\sup_{x\in\Lambda_N}
  \abs{\{y\in\Lambda_N:\dist(x,y)\le r\}},
  \label{eq:uniform-ball-population-main}
\end{equation}
which bounds the number of sites in every observation region of diameter at most $r$, and
\begin{equation}
  C_{\beta,r}
  :=
  \sup_{\substack{N\ge1\\\varnothing\ne B\subseteq\Lambda_N,\ \diam(B)\le r}}
  C_\beta(B),
  \label{eq:uniform-response-prefactor-main}
\end{equation}
which bounds the response prefactor uniformly over those regions and system sizes.

\begin{corollary}[Volume-uniform local Gibbs bound]
  \label{cor:volume-uniform-shell-main}
  Write $\rho_s^{(N)}$ for the comparison-path Gibbs state on $\Lambda_N$.
  If $v(r)$, $C_{\beta,r}$, and $\Sigma_r$ are finite, then
  \begin{equation}
    \sup_{N\ge1}
    D_r\!\left(\rho_1^{(N)},\rho_0^{(N)}\right)
    \le
    \min\!\left\{2,\,
      \beta\varepsilon_R
      \left[v(r)+C_{\beta,r}\Sigma_r\right]
    \right\}.
    \label{eq:volume-uniform-shell-main}
  \end{equation}
\end{corollary}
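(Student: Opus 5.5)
The plan is to apply Theorem~\ref{thm:shell-response-main} separately on each lattice $\Lambda_N$ and to every admissible observation region, and then take suprema. First fix $N$ and a nonempty region $B\subseteq\Lambda_N$ with $\diam(B)\le r$. The hypotheses of Theorem~\ref{thm:shell-response-main} hold on $\Lambda_N$ with the common bounds $q$, $a_R$, $\varepsilon_R$ and the common profile $g$. Hence
\begin{equation*}
  \norm{[\rho_1^{(N)}-\rho_0^{(N)}]_B}_1
  \le
  \min\!\left\{2,\,\beta\varepsilon_R\left[\abs{B}+C_\beta(B)\Sigma_g(B;\Lambda_N)\right]\right\}.
\end{equation*}

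Next I replace each $B$-dependent quantity by a uniform bound.
\begin{itemize}
  \item For $\abs{B}$: pick any $x\in B$. Since $\diam(B)\le r$, every $y\in B$ satisfies $\dist(x,y)\le r$. So $B$ lies in the ball of radius $r$ about $x$, and $\abs{B}\le v(r)$ by Eq.~\eqref{eq:uniform-ball-population-main}.
  \item For the prefactor: $C_\beta(B)\le C_{\beta,r}$ by Eq.~\eqref{eq:uniform-response-prefactor-main}.
  \item For the shell sum: $\Sigma_g(B;\Lambda_N)\le\Sigma_r$ by Eq.~\eqref{eq:uniform-shell-summability-main}.
\end{itemize}
All quantities involved are nonnegative. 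The product $C_\beta(B)\Sigma_g$ is therefore bounded by $C_{\beta,r}\Sigma_r$, and the right-hand side is monotone in each factor. The map $t\mapsto\min\{2,t\}$ is also nondecreasing. Together these give
\begin{equation*}
  \norm{[\rho_1^{(N)}-\rho_0^{(N)}]_B}_1
  \le
  \min\!\left\{2,\,\beta\varepsilon_R\left[v(r)+C_{\beta,r}\Sigma_r\right]\right\},
\end{equation*}
and this bound does not depend on $B$ or $N$.

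Finally, take the supremum over admissible $B$ to obtain $D_r(\rho_1^{(N)},\rho_0^{(N)})$ by Definition~\ref{def:local-trace-distance}, and then the supremum over $N$.

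There is no real obstacle here; the only points needing care are minor. One is that the uniform hypotheses, especially the envelope Eq.~\eqref{eq:path-averaged-response-envelope-main} with the same $g$, are assumed to hold on every $\Lambda_N$, so that Theorem~\ref{thm:shell-response-main} may be invoked for each $N$. The other is that the $\min\{2,\cdot\}$ truncation commutes with the suprema because it is monotone.
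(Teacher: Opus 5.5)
Your proposal is correct and follows essentially the same route as the paper's proof. Like the paper, you apply Theorem~\ref{thm:shell-response-main} on each $\Lambda_N$ and bound $\abs{B}\le v(r)$ by placing $B$ in the radius-$r$ ball about one of its points. You then replace $C_\beta(B)$ and $\Sigma_g(B;\Lambda_N)$ by their uniform suprema and take the supremum over $B$ and then over $N$.
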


\begin{proof}
  See SM Sec.~\smref{sec:supp-volume-uniform-proof}{S2.E}.
\end{proof}

The corollary provides a volume-uniform certificate of the Gibbs-state difference when the response is shell summable and the incident residual strength, response prefactor, and cardinality of bounded-diameter observation regions remain uniformly controlled.
The results of this subsection therefore identify the geometric requirement that converts a termwise response estimate into a volume-uniform local Gibbs-state bound.

\subsection{Physical regimes supporting shell summability}
\label{sec:physical-response-regimes}

As discussed above, the shell-summability condition is one of the ingredients required for a volume-uniform local Gibbs-state certification.
We now provide examples of physical regimes in which a suitable envelope, or a direct local-stability estimate serving the same purpose, can be established.

\paragraph{Factorized paths.}
Suppose that the complete path $H_s$ factorizes over a fixed partition into disjoint blocks whose size and diameter remain bounded as the system grows.
The Hamiltonians within a block may be interacting and noncommuting; the essential requirement is the absence of couplings between different blocks throughout the path.
The Kubo--Mori response between a residual term in one block and an observable supported on other blocks then vanishes exactly, so only the blocks intersecting the observation region can contribute.
This gives a finite-range response envelope and hence shell summability whenever fixed-radius neighborhoods have uniformly bounded cardinality.
The complete derivation, including a stronger root-sum-square bound for the contributing block responses, is given in Sec.~\smref{sec:supp-factorized-paths}{S2.F} of the SM.

\paragraph{Classical paths.}
When every $H_s$ is diagonal in one fixed product basis, the comparison path defines a classical Gibbs distribution that may still contain spatial interactions and correlations.
The Kubo--Mori response then reduces to ordinary classical covariance.
A path-uniform Dobrushin condition bounds the direct conditional influence between sites and requires the maximal total influence on any one site to satisfy $\alpha<1$~\cite{Dobrushin1968}.
Indirect influence is consequently summable, yielding the direct bound $\norm{[\rho_1-\rho_0]_B}_1\le\min\{2,\beta\varepsilon_R\abs{B}/(1-\alpha)\}$; Sec.~\smref{sec:supp-dobrushin-paths}{S2.G} of the SM supplies the derivation and its connection to the shell envelope~\cite{Follmer1982,RebeschiniVanHandel2014}.
It should be noted that the Dobrushin condition is sufficient rather than necessary, and failure of $\alpha<1$ does not always imply the absence of correlation decay.

\paragraph{Noncommuting quantum paths.}
For a noncommuting path, the Kubo--Mori response involves an imaginary-time-evolved residual operator.
By the Baker-Campbell-Hausdorff (BCH) expansion, its nested commutators with $H_s$ can spread it beyond its original support.
Ordinary equal-time covariance decay therefore does not directly control the response.
Quantum belief propagation (QBP) gives an equivalent expression for the same path derivative in terms of ordinary covariances with a dressed residual operator~\cite{Hastings2007QBP,CapelEtAl2025}.
Because this dressed operator can be approximated near the original residual support, spatial decay of ordinary covariances bounds its influence on distant observables.
Thus ordinary covariance decay, together with the quasi-locality supplied by QBP, controls the thermal response without requiring a separate decay assumption for the Kubo--Mori covariance.
Within SCTA, the resulting control may be used either to establish the termwise envelope in Eq.~\eqref{eq:path-averaged-response-envelope-main} or to bound directly the local Gibbs-state change caused by the entire centered residual $\bar R = \sum_X \bar r_X$.
Theorem~34 of \citet{CapelEtAl2025} gives the latter bound under its stated hypotheses, with an error proportional to $\varepsilon_R$ even when the global operator norm of $\bar R$ grows with system size.
Sec.~\smref{sec:supp-capel-dictionary}{S2.H} of the SM states the precise assumptions, proves the required conversion from $\varepsilon_R$, and derives the resulting conditional SCTA bound.

This result requires the comparison path $H_s$ to remain local with Hamiltonian strength per site bounded uniformly in $s$ and system size, while ordinary covariances in its Gibbs states $\rho_s$ decay sufficiently rapidly with spatial separation.
For a finite-range core with bounded local strength, uniform bounds on the residual parameters $q$, $a_R$, and $\varepsilon_R$ supply the Hamiltonian-side requirement.
At sufficiently high temperature, established results can also supply exponential spatial decay of ordinary covariances along the path, so the imported theorem yields exponential response control~\cite{KlieschEtAl2014,CapelEtAl2025}.
For paths of finite-range translation-invariant chains, exponential decay of ordinary covariances can be established at every fixed finite temperature in both infinite-chain and uniform finite-chain formulations~\cite{Araki1969,BluhmCapelPerezHernandez2022}.
When the covariance-decay constants are uniform along the path and across the selected finite-volume boundary family, this supplies the covariance requirement beyond the high-temperature regime.
Moreover, for one-dimensional systems with sufficiently rapidly decaying long-range couplings, the decay can keep the Hamiltonian strength per site finite while ordinary covariance-decay results can supply the required covariance condition~\cite{KimuraKuwahara2025}; \citet{MobusEtAl2026} provide a related direct stability route.

\section{Exact operational anchors}
\label{sec:exact-anchors}

We now turn our attention to physical Hamiltonians whose Gibbs states admit an exact purification-based preparation from a tractable core and a finite-depth tail, such that the core--tail identity \eqref{eq:intro-core-tail-identity} is met with a vanishing residual, $R=0$.
We examine three classes because they realize this requirement through complementary mechanisms.
Graph-stabilizer models have an entangled many-body eigenbasis generated by a shallow Clifford tail and allow nontrivial classical correlations in the core.
Rydberg dimers illustrate the generic fixed-block construction in which a bounded local Hilbert space is diagonalized and loaded independently.
Quadratic fermion models deal with fermionic degrees of freedom, which do not admit a straightforward local qubit mapping.
They highlight the important distinction between algebraic Gaussian solvability and locality after a qubit encoding.
The list is not exhaustive, but is chosen to exhibit three reusable anchor-design patterns and to supply starting points for the perturbative construction in Sec.~\ref{sec:local-reduction}.
Below we give a formal definition of the purification-based exact operational anchor, and then describe the three classes in turn.

\begin{definition}[Purification-based exact operational anchor]
  \label{def:exact-operational-anchor}
  A purification-based exact operational anchor for a finite-system Hamiltonian $H_0$ consists of the following components:
  \begin{enumerate}
      \renewcommand{\labelenumi}{(\roman{enumi})}
    \item A core Hamiltonian $H_{C,0}$ whose Gibbs state $\sigma_{C,0}=\rhoG{H_{C,0}}$ has a supplied exact purification $\ket{\Psi_{C,0}}_{AS}$ for every inverse temperature in the claimed temperature domain.
    \item A unitary tail $U_0$ acting on the system register, together with a physical embedding and a circuit decomposition of specified depth and gate-support geometry.
    \item The exact core-frame Hamiltonian identity
      \begin{equation}
        U_0^\dagger H_0U_0=H_{C,0}
        \label{eq:exact-anchor-identity}
      \end{equation}
      holds.
    \item An explicit specification of the classical and quantum operational resources required to implement the core purification and tail, with efficient system-size scaling for the anchor family under consideration.
  \end{enumerate}
  After the tail is applied to $S$, the ancilla register $A$ is discarded, leaving the physical Gibbs state on $S$.
\end{definition}

All three anchor classes below use a canonical purification in the core eigenbasis; its general form and system marginal are recorded in Sec.~\smref{sec:supp-canonical-purification}{S3.A} of the SM.
These requirements distinguish an operational SCTA anchor from a Hamiltonian that is merely algebraically solvable.
The core energies and normalized thermal weights must be obtained without enumerating an exponentially large spectrum, and those weights must admit coherent loading into the ancilla--system purification.
The tail must be specified as a circuit in the physical geometry, including its depth and the support of its gates.
These circuit data determine how local observables propagate backward through the tail and hence whether the anchor is compatible with the local certification framework.
The resource analysis must also account for classical preprocessing, entropy evaluation when needed, ancilla qubit count, circuit size and depth, and finite-precision synthesis.
Table~\ref{tab:anchor-classes} summarizes the core purification, tail construction, and operational locality conditions for the three anchor classes.

\begin{table*}[t]
  \caption{Three purification-based exact operational anchor patterns, including the core type, the tail construction, and the conditions under which their Hamiltonian terms and tail circuits remain local in the physical qubit geometry.}
  \label{tab:anchor-classes}
  \centering
  \vspace{1em}
  \begingroup
  \renewcommand{\arraystretch}{1.22}
  \begin{tabular}{@{}p{0.15\textwidth}@{\hspace{0.03\textwidth}}p{0.26\textwidth}@{\hspace{0.03\textwidth}}p{0.21\textwidth}@{\hspace{0.03\textwidth}}p{0.255\textwidth}@{}}
    \toprule
    Anchor & Core purification & Tail & Operational locality condition \\
    \midrule
    Graph stabilizer
    & Product or finite-memory stabilizer-syndrome purification
    & Hadamard--$CZ$ Clifford circuit $U_G$
    & Bounded graph degree and physical edge length; bounded-diameter stabilizer products \\
    Rydberg dimers
    & Independent four-label dimer purifications
    & Parallel dimer basis-changing unitaries $\bigotimes_m u_m$
    & Uniformly bounded dimers and constant-size local compilation \\
    Quadratic fermions
    & Independent mode-occupation purifications
    & Gaussian mode transformation $U_F$
    & Volume-independent gate support and depth after fermion-to-qubit mapping \\
    \bottomrule
  \end{tabular}
  \endgroup
\end{table*}

\subsection{Graph-stabilizer anchor}
\label{sec:graph-anchor}

Graph-stabilizer Hamiltonians provide a canonical example in which an interacting physical Hamiltonian has an entangled eigenbasis generated by a shallow Clifford circuit.
The important SCTA feature is that the basis-changing circuit is independent of temperature and of the Hamiltonian couplings, which enter only through the core Gibbs distribution.
Let $G=(V,E)$ be a finite simple graph with one qubit at each vertex.
For each vertex $i$, define the graph-stabilizer generator
\begin{equation}
  K_i=X_i\prod_{j\in\mathcal N(i)}Z_j,
  \label{eq:graph-stabilizer-generator}
\end{equation}
where $\mathcal N(i)$ is the set of vertices adjacent to $i$~\cite{Gottesman1997,HeinEisertBriegel2004}.
These generators commute and have eigenvalues $\pm1$.
For a finite set $X\subseteq V$, write $K_X=\prod_{i\in X}K_i$.
Let $\mathcal F$ contain sets whose cardinalities and graph diameters are uniformly bounded, and define the physical Hamiltonian
\begin{equation}
  H_{0,G}=-\sum_{X\in\mathcal F}\lambda_XK_X.
  \label{eq:graph-stabilizer-hamiltonian}
\end{equation}
Because all $K_i$'s commute, the energy eigenstates can be labeled by their simultaneous eigenvalues, or equivalently by binary stabilizer-syndrome strings $z\in\{0,1\}^{|V|}$.

The tail that converts these syndrome labels into the physical graph-state eigenbasis is
\begin{equation}
  U_G=
  \left(\prod_{(i,j)\in E}CZ_{ij}\right)
  \left(\prod_{i\in V}\Had_i\right),
  \label{eq:graph-tail}
\end{equation}
where $\Had_i$ is the Hadamard gate on qubit $i$ and $CZ_{ij}$ is the controlled-$Z$ gate on qubits $i$ and $j$.
The Clifford relations give $U_G^\dagger K_iU_G=Z_i$ and therefore $U_G^\dagger K_XU_G=Z_X$, where $Z_X=\prod_{i\in X}Z_i$.
Consequently, the diagonal syndrome core is
\begin{equation}
  H_{C,G}=-\sum_{X\in\mathcal F}\lambda_XZ_X,
  \label{eq:graph-anchor-identity}
\end{equation}
which satisfies the exact core--tail identity $U_G^\dagger H_{0,G}U_G=H_{C,G}$.
More explicitly, the graph-basis state $U_G\ket z$ satisfies $K_iU_G\ket z=(-1)^{z_i}U_G\ket z$, so each core bit $z_i$ determines the eigenvalue of the corresponding physical stabilizer.
If the graph has bounded degree $\Delta$ and bounded physical edge length, the forward implementation of $U_G$ first applies one parallel Hadamard layer and then schedules the $CZ$ gates in at most $2\Delta-1$ layers by a greedy edge coloring, because an edge is adjacent to at most $2\Delta-2$ previously colored edges.
Hence the ideal tail depth is at most $2\Delta$, independently of $|V|$.
The bounded graph diameter assumed for $X\in\mathcal F$, together with bounded edge length, also gives bounded physical support for every $K_X$.

In the simplest tractable case consisting of independent syndrome bits, i.e., taking $\mathcal F$ to be the set of singletons, the core Hamiltonian is a product of single-qubit $Z$ terms: $H_{C,G}=-\sum_i\lambda_iZ_i$.
For this product core, the probability of label $z_i=1$ is given by
\begin{equation}
  p_{\beta,i}=\Pr(z_i=1)=\frac{1}{1+\ee^{2\beta\lambda_i}},
  \label{eq:graph-defect-probability}
\end{equation}
and all ancilla--system pairs in the purification are independent:
\begin{equation}
  \ket{\Psi_{C,G}}
  =\bigotimes_i
  \left(
    \sqrt{1-p_{\beta,i}}\ket{0}_{A_i}\ket{0}_{S_i}
    +\sqrt{p_{\beta,i}}\ket{1}_{A_i}\ket{1}_{S_i}
  \right).
  \label{eq:graph-product-purification}
\end{equation}
In the ideal continuous-rotation model, the purification loader uses one single-qubit rotation and one ancilla-to-system CNOT per site.
Since the loader for each site can be implemented in parallel, the total ideal depth is two, independent of system size.
Furthermore, thanks to the product structure of the purified state, the entropy of the core is the sum of binary entropies of each independent stabilizer-syndrome label.
This results in a simple closed-form expression for the core entropy that contains only $O(|V|)$ terms.

The graph construction also supports interacting syndrome labels when the vertices are ordered along a one-dimensional chain and the diagonal core interactions have finite range in this ordering.
Label the vertices by $1,\ldots,N$, fix a maximum index span $m$ independently of $N$, and write the corresponding core Hamiltonian explicitly as
\begin{equation}
  \begin{split}
    H_{C,G}^{(m)}
    ={}&-
    \sum_{i=1}^{N}\lambda_i Z_i
    -\sum_{\substack{1\le i<j\le N\\j-i\le m}}
    \lambda_{ij}Z_iZ_j\\
    &-\sum_{\substack{1\le i<j<k\le N\\k-i\le m}}
    \lambda_{ijk}Z_iZ_jZ_k
    -\cdots.
  \end{split}
  \label{eq:finite-range-syndrome-core}
\end{equation}
The ellipsis denotes higher-order products $Z_{i_1}\cdots Z_{i_p}$ whose index span satisfies $i_p-i_1\le m$.
Replacing every $Z_i$ by $K_i$ gives the corresponding physical interacting stabilizer Hamiltonian, which is related to Eq.~\eqref{eq:finite-range-syndrome-core} by the same Clifford tail $U_G$.
On a computational-basis string $z$, the eigenvalue $Z_i\ket z=(-1)^{z_i}\ket z$ turns Eq.~\eqref{eq:finite-range-syndrome-core} into a classical finite-range energy for the syndrome labels.
In particular, a term involving the next label $z_{i+1}$ can reach no farther into the preceding configuration than $z_{i-m+1}$.
After the unassigned suffix is summed over, the conditional probability of $z_{i+1}$ therefore depends on the preceding configuration only through the memory string $s_i=(z_{i-m+1},\ldots,z_i)$.
The Gibbs distribution consequently factorizes as
\begin{equation}
  p_\beta(z_1,\ldots,z_N)
  =\pi_{\beta,m}(s_m)
  \prod_{i=m}^{N-1}T_{\beta,i}(z_{i+1}\mid s_i).
  \label{eq:finite-memory-core}
\end{equation}
Here $\pi_{\beta,m}$ is the normalized distribution of the first $m$ labels, and $T_{\beta,i}(z_{i+1}\mid s_i)$ is the conditional probability of the next label.
This factorization is a classical order-$m$ Markov description of the syndrome labels, and it is the basis for an efficient loader and entropy evaluation.

For fixed $m$, a backward dynamic program obtains the normalized conditional probabilities and the entropy in $O(N2^m)$ classical time and memory.
Sequential conditional rotations prepare the exact label-correlated purification with $O(N2^m)$ ideal gates.
This is polynomial for a fixed $m$, but its depth is $O(N)$ rather than constant.
These counts refer to the ideal continuous-rotation model; synthesis over a discrete gate set adds a precision-dependent gate cost and core-preparation error.
The full recursion, rotation angles, core-loading circuit construction, entropy formula, and finite-precision accounting are given in Sec.~\smref{sec:supp-graph-anchor}{S3.B} of the SM.

\subsection{Independent Rydberg-dimer anchor}
\label{sec:rydberg-anchor}

The second anchor class is motivated by Rydberg-atom arrays, a prominent platform for neutral-atom quantum computing with demonstrations of reconfigurable connectivity, parallel high-fidelity entangling gates, and logical-qubit processing~\cite{BluvsteinEtAl2022,EveredEtAl2023,BluvsteinEtAl2024}.
The standard driven Rydberg-array Hamiltonian combines local laser terms with interactions between excited atoms~\cite{SaffmanWalkerMolmer2010,BrowaeysLahaye2020}.
For atom $i$, let $\ket{g}_i$ and $\ket{r}_i$ denote the ground and Rydberg states, respectively.
With $n_i=\ket{r}_i\!\bra{r}$ and $X_i=\ket{g}_i\!\bra{r}+\ket{r}_i\!\bra{g}$, its physical form is
\begin{equation}
  H_{\mathrm{Ry}}
  =\sum_i\left(\frac{\Omega_i}{2}X_i-\Delta_i n_i\right)
  +\sum_{i<j}V_{ij}n_in_j,
  \label{eq:rydberg-array-hamiltonian}
\end{equation}
where $\Omega_i$ and $\Delta_i$ are the Rabi frequency and detuning, and $V_{ij}$ is the interaction energy of a pair of Rydberg excitations.
To obtain an exact anchor, we partition the atoms into disjoint pairs, retain the interaction within each pair, and omit interactions between distinct dimers.
Assuming equal drive and detuning within dimer $m$, the resulting physical anchor is
\begin{equation}
  \begin{aligned}
    H_{0,\mathrm{Ry}}&=\sum_m h_m,\\
    h_m&=\frac{\Omega_m}{2}(X_{m,1}+X_{m,2})
    -\Delta_m(n_{m,1}+n_{m,2})\\
    &\hspace{1.5em}+V_mn_{m,1}n_{m,2}.
  \end{aligned}
  \label{eq:rydberg-dimer-anchor}
\end{equation}
Here $n_{m,a}=\ket{r}_{m,a}\!\bra{r}$ and $X_{m,a}=\ket{g}_{m,a}\!\bra{r}+\ket{r}_{m,a}\!\bra{g}$ for $a=1,2$.
Because the terms $h_m$ act on disjoint dimers, the anchor Gibbs state factorizes as $\rhoG{H_{0,\mathrm{Ry}}}=\bigotimes_m\rhoG{h_m}$.
Each $h_m$ acts on a four-dimensional dimer Hilbert space and can therefore be diagonalized independently.
This requires only a fixed-size classical eigensolve and produces a constant-size two-qubit basis transformation.
Let $\{\ket{\phi_{m,\alpha}}\}_{\alpha=1}^4$ be an orthonormal physical eigenbasis satisfying $h_m\ket{\phi_{m,\alpha}}=\epsilon_{m,\alpha}\ket{\phi_{m,\alpha}}$, and let $\{\ket{\alpha}\}_{\alpha=1}^4$ denote the two-qubit computational basis used as the core label basis.
These are two bases of the same dimer register, related by the two-qubit unitary $u_m\ket{\alpha}=\ket{\phi_{m,\alpha}}$.
Defining the complete tail and core gives
\begin{equation}
  \begin{aligned}
    U_{\mathrm{Ry}}&=\bigotimes_m u_m,\\
    H_{C,\mathrm{Ry}}&=\sum_{m,\alpha}\epsilon_{m,\alpha}\ket{\alpha}\!\bra{\alpha}_m,
  \end{aligned}
  \label{eq:rydberg-anchor-identity}
\end{equation}
where $U_{\mathrm{Ry}}^\dagger H_{0,\mathrm{Ry}}U_{\mathrm{Ry}}=H_{C,\mathrm{Ry}}$ is satisfied.
The core Gibbs state of block $m$ has weights $p_{\beta,m}(\alpha)=\ee^{-\beta\epsilon_{m,\alpha}}/\sum_\gamma\ee^{-\beta\epsilon_{m,\gamma}}$ and is loaded through a four-label purification.
The complete core purification is
\begin{equation}
  \ket{\Psi_{C,\mathrm{Ry}}}_{AS}
  =\bigotimes_m\left[
    \sum_{\alpha=1}^{4}\sqrt{p_{\beta,m}(\alpha)}
    \ket{\alpha}_{A_m}\ket{\alpha}_{S_m}
  \right].
  \label{eq:rydberg-core-purification}
\end{equation}
All block loaders run in parallel and, subsequently, all block tails run in parallel, with $O(M)$ total gates and volume-independent ideal depth.
An explicit binary-tree loader needs three $R_y$ rotations and four CNOTs per purified dimer.
Moreover, the total entropy is a sum of independent dimer entropies, each of which is the Shannon entropy of a four-outcome distribution.
Details of the dimer diagonalization, basis-change unitary, core-loading circuit, loader angles, and entropy formula are given in Sec.~\smref{sec:supp-rydberg-anchor}{S3.C} of the SM.

\subsection{Quadratic-fermion anchors}
\label{sec:quadratic-fermion-anchor}

Quadratic fermionic Hamiltonians provide a broad algebraically solvable class, but Gaussian solvability alone does not guarantee a geometrically local tail after encoding into qubits.
Consider a system of $N$ fermionic modes with creation and annihilation operators $f_j^\dagger$ and $f_j$, satisfying the canonical anticommutation relations $\{f_j,f_k\}=\{f_j^\dagger,f_k^\dagger\}=0$ and $\{f_j,f_k^\dagger\}=\delta_{jk}\id$.
For these physical fermionic operators, define the corresponding Majorana operators $\eta_{2j-1}=f_j+f_j^\dagger$ and $\eta_{2j}=-\mathrm{i}(f_j-f_j^\dagger)$.
After omitting an additive scalar, a parity-preserving quadratic Hamiltonian has the compact physical form
\begin{equation}
  H_{0,F}=\frac{\mathrm i}{4}\bm{\eta}^{\mathsf T}\mathsf A\bm{\eta},
  \qquad \mathsf A^{\mathsf T}=-\mathsf A,
  \label{eq:fermion-physical-hamiltonian}
\end{equation}
where $\mathsf A$ is a real antisymmetric matrix of size $2N\times 2N$.
Its skew-canonical form supplies $\mathsf O_F\in\mathrm{SO}(2N)$ and real single-mode energies $\epsilon_j$ such that~\cite{TerhalDiVincenzo2002,Bravyi2005FLO}
\begin{equation}
  \mathsf O_F^{\mathsf T}\mathsf A\mathsf O_F
  =\bigoplus_j
  \begin{pmatrix}0&\epsilon_j\\-\epsilon_j&0
  \end{pmatrix},
  \label{eq:fermion-canonical-form}
\end{equation}
which reduces the many-body diagonalization to finding the canonical form of the Majorana matrix.

To expose the core Hamiltonian, introduce the core fermionic modes $c_j^\dagger$ and $c_j$, as well as core Majoranas $\gamma_{2j-1}=c_j+c_j^\dagger$ and $\gamma_{2j}=-\mathrm{i}(c_j-c_j^\dagger)$.
Accordingly, the core occupations are $n_{C,j}=c_j^\dagger c_j$.
To exhibit the basis transformation as a circuit, we choose an ordered Givens factorization
\begin{equation}
  \mathsf O_F
  =\prod_{\nu=1}^{N_{\mathrm{rot}}}
  \mathsf R_{p_\nu q_\nu}(\theta_\nu),
  \label{eq:fermion-givens-factorization}
\end{equation}
where $\mathsf R_{pq}(\theta)$ represents a planar rotation of Majoranas $p$ and $q$, and the number of rotations is bounded by the number of independent entries in $\mathsf A$: $N_{\mathrm{rot}}\le N(2N-1)$.
The corresponding Gaussian gate is
\begin{equation}
  G_{pq}(\theta)
  :=\exp\!\left(\frac{\theta}{2}\gamma_p\gamma_q\right),
  \label{eq:fermion-givens-gate}
\end{equation}
and, with the same ordering as in Eq.~\eqref{eq:fermion-givens-factorization}, the tail can be chosen explicitly as
\begin{equation}
  U_F
  =\prod_{\nu=1}^{N_{\mathrm{rot}}}
  G_{p_\nu q_\nu}(\theta_\nu).
  \label{eq:fermion-tail}
\end{equation}
This circuit implements $U_F^\dagger\bm\eta U_F=\mathsf O_F\bm\gamma$ and therefore gives the core Hamiltonian diagonal in the occupation basis:
\begin{equation}
  U_F^\dagger H_{0,F}U_F
  =H_{C,F}
  :=\sum_j\epsilon_j\left(n_{C,j}-\frac12\id\right).
  \label{eq:fermion-anchor-identity}
\end{equation}

The unrestricted-Fock-space core factorizes with occupation probabilities
\begin{equation}
  p_{\beta,j}=\frac{1}{1+\ee^{\beta\epsilon_j}},
  \label{eq:fermi-factor-main}
\end{equation}
which are the familiar Fermi-Dirac factors.
Consequently, it has a parallel product-purification loader and additive entropy, similar to the graph-stabilizer product core discussed in Sec.~\ref{sec:graph-anchor}.
To express this fermionic core as a qubit loader, we encode the two-dimensional occupation space of mode $c_j$ in system qubit $S_j$, identifying $\ket0_{S_j}$ with $n_{C,j}=0$ and $\ket1_{S_j}$ with $n_{C,j}=1$.
This identification is the occupation-basis part of the Jordan--Wigner mapping introduced below.
Ancilla qubit $A_j$ carries a correlated copy of the same binary occupation label.
The encoded core purification is therefore
\begin{equation}
  \begin{aligned}
    \ket{\Psi_{C,F}}_{AS}
    =\bigotimes_j\Bigl[{}
      &\sqrt{1-p_{\beta,j}}\ket{0}_{A_j}\ket{0}_{S_j}\\
      &+\sqrt{p_{\beta,j}}\ket{1}_{A_j}\ket{1}_{S_j}
    \Bigr].
  \end{aligned}
  \label{eq:fermion-core-purification}
\end{equation}
Tracing out the ancilla register leaves the fermionic core Gibbs state, represented in the occupation basis of the system-qubit register.

The occupation correspondence fixes the loader, but locality of the Gaussian tail depends on the full operator map.
For this purpose, we adopt a one-dimensional Jordan--Wigner (JW) ordering of the core modes and define $\mathcal Z_{<j}:=\prod_{\ell<j}Z_\ell$, with $\mathcal Z_{<1}:=\id$~\cite{JordanWigner1928,BravyiKitaev2002}.
With $\ket0$ denoting an empty mode and $\ket1$ an occupied mode, the encoding gives
\begin{equation}
  c_j\longmapsto
  \mathcal Z_{<j}\frac{X_j+\mathrm{i}Y_j}{2},
  \qquad
  c_j^\dagger\longmapsto
  \mathcal Z_{<j}\frac{X_j-\mathrm{i}Y_j}{2},
  \label{eq:fermion-jw-complex}
\end{equation}
and hence,
\begin{equation}
  n_{C,j}\longmapsto\frac{\id-Z_j}{2}.
  \label{eq:fermion-jw-occupation}
\end{equation}
Equivalently, the Majorana operators become
\begin{equation}
  \gamma_{2j-1}\longmapsto\mathcal Z_{<j}X_j,
  \qquad
  \gamma_{2j}\longmapsto\mathcal Z_{<j}Y_j.
  \label{eq:fermion-jw-majorana}
\end{equation}
The parity string $\mathcal Z_{<j}$ records the total occupation parity of the preceding modes before $j$ and enforces the fermionic anticommutation relations.

Let $\Gamma_p$ denote the Pauli string for $\gamma_p$ and define the Hermitian Pauli string $P_{pq}:=\mathrm{i}\Gamma_p\Gamma_q$.
The elementary Gaussian gate in Eq.~\eqref{eq:fermion-givens-gate} is then encoded as
\begin{equation}
  G_{pq}(\theta)
  \longmapsto
  \exp\!\left(-\frac{\mathrm{i}\theta}{2}P_{pq}\right).
  \label{eq:fermion-jw-gate}
\end{equation}
If the two Majoranas belong to modes $j$ and $\ell$ with $j<\ell$, the common parity string to the left of $j$ cancels in $P_{pq}$, leaving support only on the qubit interval from $j$ through $\ell$.
Thus Gaussian gates between modes at bounded separation remain bounded-support qubit gates when the JW ordering is aligned with a one-dimensional physical geometry.
The same cancellation further ensures that any parity-even fermionic operator beyond bilinears on a bounded interval has bounded encoded support, whereas a parity-odd operator generally retains an uncancelled parity string.
In higher dimensions, or for an ordering unrelated to the physical geometry, even a physically local fermionic gate can instead acquire qubit support that grows with system size.

This support analysis is necessary but not sufficient for an operational SCTA tail: the complete encoded circuit must also be schedulable into a volume-independent number of local layers.
The generic $O(N^2)$ Givens decomposition in Eq.~\eqref{eq:fermion-tail} guarantees exact diagonalization but neither bounded gate support nor bounded depth.
Accordingly, Hamiltonian terms, tail gates, backward regions, and measured observables must all be assessed in the encoded qubit geometry before Theorem~\ref{thm:operational-error-budget} is applied.
Sec.~\smref{sec:supp-gaussian-anchor}{S3.D} of the SM gives the full JW support analysis and an explicit realization by independent adjacent hopping dimers, whose tail has bounded support and depth in a one-dimensional geometry.

The three purification-based anchors all have zero Gibbs-modeling residual, but this does not eliminate finite-precision synthesis or device errors.
Those enter separately in Theorem~\ref{thm:operational-error-budget}.
More importantly, these anchors define operational fixed points around which the next section constructs corrected cores and tails for perturbatively deformed Hamiltonians.

\section{Full-spectrum core-relative local reduction}
\label{sec:local-reduction}

The exact anchors in Sec.~\ref{sec:exact-anchors} separate a tractable thermal spectrum from a structured basis change, but a physical Hamiltonian will generally contain additional fields or couplings that move it outside an exactly solvable anchor class.
This section develops a perturbative reduction for such deformations through a Schrieffer--Wolff procedure without discarding any part of the Hilbert space.
In the core frame, the reduction retains deformation terms compatible with the loadable family and removes core-incompatible terms through basis-changing corrections, yielding a volume-uniform first-order theorem, a conditional local Gibbs-state certificate, and a formal fixed-volume recursion at higher orders.

\subsection{Deforming an exact anchor}

Let $(H_0,H_{C,0},U_0)$ be a purification-based exact operational anchor in the sense of Definition~\ref{def:exact-operational-anchor}.
We deform this anchor by considering the physical family
\begin{equation}
  H_\lambda=H_0+\lambda V.
  \label{eq:nearby-family}
\end{equation}
Conjugation by the anchor tail gives the corresponding core-frame family
\begin{equation}
  U_0^\dagger H_\lambda U_0
  =H_{C,0}+\lambda V_C,
  \label{eq:nearby-core-frame-main}
\end{equation}
where $V_C:=U_0^\dagger V U_0$ is the core-frame deformation.
Here and throughout this section, $\lambda\in\mathbb R$ is a dimensionless scaling parameter while the fixed operator $V$ contains the physical energy scales of the deformation.

If the anchor core and tail were left unchanged, the entire term $\lambda V_C$ would remain as a core-frame mismatch.
Its Hamiltonian mismatch is therefore first order in $\lambda$, although the corresponding local Gibbs-state error also depends on the thermal response described in Sec.~\ref{sec:local-residual-response}.
To improve on this baseline, we absorb into a corrected core the part of $V_C$ that preserves the chosen tractable family, remove as much as possible of the remaining basis-changing part by correcting the tail, and collect any uncanceled terms in a residual.
To formalize this separation, choose a real vector space $\mathfrak H_C$ of admissible core-Hamiltonian corrections and a linear projection $P_C$ onto this space, and define
\begin{equation}
  Q_C(O):=O-P_C(O)
  \label{eq:core-offcore-projectors-main}
\end{equation}
for every Hermitian operator $O$ in the declared domain.
Examples of admissible corrections in $\mathfrak H_C$ include operators diagonal in a product basis, finite-range classical interactions among syndrome labels, and operators that are block diagonal on disjoint bounded-size blocks.
The decomposition
\begin{equation}
  V_C=P_C(V_C)+Q_C(V_C)
  \label{eq:core-offcore-split-main}
\end{equation}
then separates the part retained in the corrected core from the part to be canceled by a basis-changing correction or left in the residual.
It is worth noting that in the context of an operational SCTA, core compatibility is more than a formal algebraic property: the corrected core must remain loadable and its Gibbs states must remain operationally accessible.
The corrected term $P_C(V_C)$ must also remain local in the physical embedding.
The first-order construction in Sec.~\ref{sec:local-transition-correction} explicitly derives this locality from a supplied local solution of the cancellation equation rather than assuming that an arbitrary formal projection preserves support.

Having specified which operator components may remain in an operational core, we now formulate the corresponding change-of-basis problem.
The correction $W_\lambda$ is designed to remove the off-core contribution as much as possible, while the retained terms are absorbed into a corrected, still-loadable core Hamiltonian $H_{C,\lambda}$.
We write the resulting decomposition as
\begin{equation}
  W_\lambda^\dagger(H_{C,0}+\lambda V_C)W_\lambda
  =H_{C,\lambda}+R_\lambda.
  \label{eq:reduction-target}
\end{equation}
Here $R_\lambda$ collects the mismatch that remains after the transformation.
Returning to the physical frame gives the corrected tail $U_\lambda:=U_0W_\lambda$ and hence
\begin{equation}
  U_\lambda^\dagger H_\lambda U_\lambda
  =H_{C,\lambda}+R_\lambda.
  \label{eq:corrected-core-tail-identity}
\end{equation}
Because the product $U_\lambda=U_0W_\lambda$ acts from right to left on states, state preparation applies $W_\lambda$ to the loaded core before applying the anchor tail $U_0$.
Fig.~\ref{fig:core-relative-reduction} summarizes these relations.

\begin{figure*}[t]
  \centering
  \begin{tikzpicture}[
      >=Latex,
      font=\small,
      node distance=1.55cm and 2.0cm,
      box/.style={draw=black!55,rounded corners=2pt,minimum width=3.15cm,minimum height=0.82cm,align=center,fill=white},
      physicalbox/.style={box,draw=blue!55!black,fill=blue!7},
      corebox/.style={box,draw=orange!65!black,fill=orange!10},
      framelabel/.style={font=\small\bfseries,anchor=east},
      arr/.style={->,line width=0.65pt,draw=black!70}
    ]
    \node[physicalbox] (h0) {$H_0$\\physical anchor};
    \node[physicalbox,right=of h0] (hl) {$H_{\lambda}=H_0+\lambda V$\\deformed physical model};
    \node[corebox,below=of h0] (hc) {$H_{C,0}$\\anchor core};
    \node[corebox,below=of hl] (hcv) {$H_{C,0}+\lambda V_C$\\deformed core};
    \node[corebox,right=of hcv] (red) {$H_{C,\lambda}+R_{\lambda}$\\corrected core and residual};
    \node[framelabel,text=blue!55!black] at ($(h0.west)+(-0.38cm,0)$) {physical frame};
    \node[framelabel,text=orange!65!black] at ($(hc.west)+(-0.38cm,0)$) {core frame};

    \draw[arr] (h0) -- node[above] {$+\lambda V$} (hl);
    \draw[arr] (h0) -- node[left,align=right] {$U_0^{\dagger}(\,\cdot\,)U_0$} (hc);
    \draw[arr] (hl) -- node[right,align=left] {$U_0^{\dagger}(\,\cdot\,)U_0$} (hcv);
    \draw[arr] (hc) -- node[above] {$+\lambda V_C$} (hcv);
    \draw[arr] (hcv) -- node[above] {$W_{\lambda}^{\dagger}(\,\cdot\,)W_{\lambda}$} (red);
    \node[below=0.42cm of red,align=center,font=\small] {$U_{\lambda}=U_0W_{\lambda}$};
  \end{tikzpicture}
  \caption{Physical- and core-frame representation of a deformation of an exact anchor.
    Adding $\lambda V$ deforms $H_0$ into $H_{\lambda}$, and conjugation by the anchor tail $U_0$ expresses the deformed Hamiltonian as $H_{C,0}+\lambda V_C$ in the core frame.
    The correction $W_{\lambda}$ is chosen so that the transformed core-frame Hamiltonian separates into the corrected loadable core $H_{C,\lambda}$ and the residual $R_{\lambda}$.
  Equivalently, the corrected physical tail is $U_{\lambda}=U_0W_{\lambda}$.}
  \label{fig:core-relative-reduction}
\end{figure*}

This reduction belongs broadly to the Schrieffer--Wolff (SW) transformation framework~\cite{SchriefferWolff1966,BravyiDiVincenzoLoss2011,Wegner1994,Kehrein2006,WurtzClaeysPolkovnikov2020}, but its finite-temperature role differs from that of the conventional low-energy construction.
Conventional SW theory begins with a selected low-energy subspace and constructs a perturbative unitary that suppresses its coupling to the complementary sector, yielding an effective Hamiltonian within the retained subspace.
Here no thermal truncation is assumed, since a finite-temperature Gibbs state can assign non-negligible weight throughout the spectrum.
The retained structure is instead a full-spectrum family of core Hamiltonians whose admissible corrections lie in $\mathfrak H_C$.
The resulting correction unitary rotates away operator directions that leave this family without discarding any spectral sector.

\subsection{Local transition channels and a scheduled correction}
\label{sec:local-transition-correction}

We first determine the correction algebraically up to first order in $\lambda$.
Let $W_\lambda^{(1)}=\exp(\lambda S_1)$ with $S_1^\dagger=-S_1$.
For any operator $K$, by the BCH formula, conjugation by this exponential gives
\begin{equation}
  \ee^{-\lambda S_1}K\ee^{\lambda S_1}
  =K+\lambda[K,S_1]+O(\lambda^2).
\end{equation}
Taking $K=H_{C,0}+\lambda V_C$, the first-order conjugation is
\begin{equation}
  \begin{aligned}
    &(W_\lambda^{(1)})^\dagger
    (H_{C,0}+\lambda V_C)W_\lambda^{(1)}\\
    &\qquad=H_{C,0}
    +\lambda\left(V_C+[H_{C,0},S_1]\right)+O(\lambda^2).
  \end{aligned}
  \label{eq:first-order-algebraic-expansion-main}
\end{equation}
Eliminating the off-core component of the coefficient linear in $\lambda$ requires the homological equation
\begin{equation}
  Q_C\!\left(V_C+[H_{C,0},S_1]\right)=0.
  \label{eq:first-order-homological-main}
\end{equation}
We use a sufficient, but stronger, condition
\begin{equation}
  [H_{C,0},S_1]=-Q_C(V_C),
  \label{eq:strong-homological-main}
\end{equation}
which makes the first-order corrected core independent of any additional core-compatible component of the commutator:
\begin{equation}
  H_{C,\lambda}^{(1)}=H_{C,0}+\lambda P_C(V_C).
  \label{eq:first-order-corrected-core}
\end{equation}
This choice not only gives a clean separation between the corrected core and the residual, but also permits a transition-channel decomposition to its solution, as discussed below, that makes the locality and strength of the correction generator easier to control.
A more general solution of Eq.~\eqref{eq:first-order-homological-main} is possible, but any additional retained term would have to be incorporated into the corrected core and shown to preserve its loader.

One constructive solution of Eq.~\eqref{eq:strong-homological-main} decomposes the off-core deformation into local transition eigenoperators $A_\alpha$, satisfying $[H_{C,0},A_\alpha]=\omega_\alpha A_\alpha$ with $\omega_\alpha \in \mathbb{R}$:
\begin{equation}
  Q_C(V_C)=\sum_\alpha(A_\alpha+A_\alpha^\dagger)
  \label{eq:transition-channel-decomposition-main}
\end{equation}
Physically, each $A_\alpha$ represents a local process connecting eigenstates of $H_{C,0}$ and $\omega_\alpha$ is the corresponding core-energy difference between the initial and final states.
Although a transition-eigenoperator decomposition always exists at finite volume, it need not preserve locality: resolving a local deformation into exact many-body transition channels can produce operators with system-wide support.
We therefore assume that the relevant $A_\alpha$ can be chosen with uniformly bounded support, as required for the local correction constructed below.
If the selected channels satisfy the active transition-gap condition
\begin{equation}
  |\omega_\alpha|\ge\gamma>0,
  \label{eq:active-transition-gap-main}
\end{equation}
then the anti-Hermitian generator
\begin{equation}
  S_1=\sum_\alpha\frac{A_\alpha^\dagger-A_\alpha}{\omega_\alpha}
  \label{eq:first-order-generator-main}
\end{equation}
solves Eq.~\eqref{eq:strong-homological-main}.
Indeed, $[H_{C,0},A_\alpha^\dagger]=-\omega_\alpha A_\alpha^\dagger$, so each summand produces $-(A_\alpha+A_\alpha^\dagger)$ under commutation with $H_{C,0}$.

The active transition gap is not a many-body spectral gap; it constrains only the local processes that occur in $Q_C(V_C)$.
A channel with $\omega_\alpha=0$ connects equal-energy core states and cannot be generated by a commutator with $H_{C,0}$, so it cannot be removed through Eq.~\eqref{eq:strong-homological-main}.
Such a resonant channel is not necessarily an obstruction and can be handled in a few ways, depending on the model and the chosen core family.
One possibility is a local resonant basis reduction that changes the tail without modifying the core family.
For example, if a resonant perturbation is confined to disjoint two-qubit blocks, each block can be diagonalized locally and its diagonalizing unitary incorporated into the tail, while the core remains a product over the resulting local modes.
A second possibility genuinely enlarges the core family while preserving its tractability.
For example, let $H_{C,0}=-\sum_i h_iZ_i$ and add the diagonal deformation $V_C=\sum_iJ_iZ_iZ_{i+1}$.
The deformation lies outside the initially admissible product core family but commutes with $H_{C,0}$, so it is a zero-frequency component that cannot be canceled through Eq.~\eqref{eq:strong-homological-main}.
It can instead be absorbed into $H_{C,\lambda}=H_{C,0}+\lambda V_C$, enlarging the product core to the nearest-neighbor finite-memory core described in Sec.~\ref{sec:graph-anchor} while leaving the tail unchanged.
If neither mechanism yields a bounded-depth tail and a tractable core, the resonant term must instead be left in the residual.
Finally, a small but nonzero $|\omega_\alpha|$ makes the corresponding generator large and signals a near-resonant loss of perturbative control.

The discussion so far determines which channels can be removed and gives the corresponding first-order generator $S_1$.
Implementing this generator as a bounded-depth local circuit additionally requires its terms to be organized according to their supports.
The channel generators in Eq.~\eqref{eq:first-order-generator-main} may overlap, so the corresponding gates cannot in general occupy the same parallel layer.
To schedule the application of these gates, we first combine all channels with the same support $Y$ into the gate generator
\begin{equation}
  S_Y
  :=
  \sum_{\alpha:\,\supp A_\alpha=Y}
  \frac{A_\alpha^\dagger-A_\alpha}{\omega_\alpha},
  \label{eq:support-generator-main}
\end{equation}
so that $S_1=\sum_YS_Y$.
Next form the support-overlap graph, whose vertices are the distinct supports $Y$ and whose edges join intersecting supports.
A proper coloring partitions these supports into classes $\mathcal C_1,\ldots,\mathcal C_m$ in which all supports are disjoint.
If every support contains at most $q_s$ sites and at most $n_s$ supports meet any site, then each support overlaps with at most $q_s(n_s-1)$ other supports.
A greedy coloring requires at most one more color than the maximum number of overlapping supports, giving $m\le q_s(n_s-1)+1$, independently of the total volume.
We then define the layer generators
\begin{equation}
  T_j=\sum_{Y\in\mathcal C_j}S_Y,
  \qquad
  j=1,\ldots,m
  \label{eq:layer-generators-main}
\end{equation}
and schedule the first-order correction as a product of parallel layers:
\begin{equation}
  W_{\lambda,\mathrm{loc}}^{(1)}
  =\ee^{\lambda T_1}\cdots\ee^{\lambda T_m}.
  \label{eq:scheduled-correction-main}
\end{equation}
Because the supports in one color class are disjoint, each $\ee^{\lambda T_j}=\prod_{Y\in\mathcal C_j}\ee^{\lambda S_Y}$ is exactly one parallel layer.
In fact, the ordered product $W_{\lambda,\mathrm{loc}}^{(1)}$ is the first-order Lie--Trotter approximation to the algebraic correction $W_\lambda^{(1)}=\exp(\lambda S_1)$.
While they need not be equal at finite $\lambda$, expanding each layer unitary about $\lambda = 0$ gives
\begin{equation}
  \begin{aligned}
    W_{\lambda,\mathrm{loc}}^{(1)}
    &= \id + \lambda \sum_{j=1}^m T_j + O(\lambda^2) \\
    &= \id + \lambda S_1 + O(\lambda^2),
  \end{aligned}
\end{equation}
which is the same as the first-order expansion of $W_\lambda^{(1)}$.
Thus $W_{\lambda,\mathrm{loc}}^{(1)}$ implements the same first-order cancellation, while retaining every ordering and inter-layer-commutator effect at second and higher orders.

Another crucial property of the scheduled correction is that its strength is local rather than extensive.
Define the transition incident strength by
\begin{equation}
  \varepsilon_{\mathrm{tr}}
  :=\sup_x\sum_{\alpha:\,x\in\supp A_\alpha}2\norm{A_\alpha}_\infty.
  \label{eq:transition-incident-strength-main}
\end{equation}
Eqs.~\eqref{eq:active-transition-gap-main} and \eqref{eq:support-generator-main} imply that the generator strength incident on any site in one layer is at most
\begin{equation}
  \norm{T_j}_{\mathrm{loc}}
  \le s_0,
  \qquad
  s_0:=\frac{\varepsilon_{\mathrm{tr}}}{\gamma}.
  \label{eq:scheduled-layer-strength-main}
\end{equation}
The relevant dimensionless perturbative parameter is consequently $|\lambda|s_0$, rather than the operator norm of the extensive sum $S_1$.

The schedule also determines how far a local output observable propagates backward through the corrected tail $U_\lambda^{(1)}:=U_0W_{\lambda,\mathrm{loc}}^{(1)}$.
If $U_0$ has depth $d_0$ and gate-support diameter at most $a_0$, while each local correction gate $\ee^{\lambda S_Y}$ acts on a region $Y$ satisfying $\diam(Y) \le a_s$, Proposition~\ref{prop:backward-region-propagation} gives the conservative bound
\begin{equation}
  \ell_{U_\lambda^{(1)}}(r)
  \le a_0d_0+a_sm.
  \label{eq:corrected-backward-radius}
\end{equation}
If each abstract gate $\ee^{\lambda S_Y}$ is compiled into a native subcircuit of depth at most $d_s$ with gate-support diameter at most $a_g$, the second term becomes $a_gmd_s$.
The correction can therefore reduce the Hamiltonian residual while enlarging the core-frame region relevant to a fixed physical observable.

\subsection{Volume-uniform first-order residual reduction}
\label{sec:first-order-residual-reduction}

We now pass from the first-order cancellation condition to the exact finite-$\lambda$ mismatch produced by the scheduled circuit.
Define
\begin{equation}
  R_\lambda^{(1)}
  :=
  \left(W_{\lambda,\mathrm{loc}}^{(1)}\right)^\dagger
  (H_{C,0}+\lambda V_C)
  W_{\lambda,\mathrm{loc}}^{(1)}
  -H_{C,\lambda}^{(1)}.
  \label{eq:exact-first-order-residual}
\end{equation}
Unlike the expansion used to determine $S_1$, this definition retains all higher-order terms generated by the ordered circuit layers.
The first-order expansion of $W_{\lambda,\mathrm{loc}}^{(1)}$ established in Sec.~\ref{sec:local-transition-correction}, together with the strong homological equation~\eqref{eq:strong-homological-main}, ensures that the coefficient linear in $\lambda$ vanishes.
The remaining task is to show that the full residual admits a local decomposition whose incident strength, support cardinality, and support diameter remain bounded independently of the total volume.

To state these bounds, give an operator $A$ a declared local decomposition $A=\sum_Xa_X$ and define
\begin{equation}
  \begin{aligned}
    \norm{A}_{\mathrm{loc}}
    &:=\sup_x\sum_{X\ni x}\norm{a_X}_\infty, \\
    q(A)&:=\sup_{a_X\ne0}|X|, \\
    a(A)&:=\sup_{a_X\ne0}\diam(X).
  \end{aligned}
  \label{eq:declared-local-data}
\end{equation}
These quantities depend on the declared decomposition.
The local decomposition norm $\norm{A}_{\mathrm{loc}}$ is an auxiliary bookkeeping quantity that also applies to the non-Hermitian intermediate operators.
Unlike the incident strength $\varepsilon_R$ in Eq.~\eqref{eq:incident-residual-strength}, which is defined after centering each Hermitian residual term, it does not remove scalar components term by term.
After $R_\lambda^{(1)}$ is decomposed into Hermitian terms and centered, its incident strength therefore satisfies $\varepsilon_R(\lambda)\le\norm{R_\lambda^{(1)}}_{\mathrm{loc}}$.
The estimates in Sec.~\smref{sec:supp-local-bookkeeping}{S4.A} of the SM show how these local decomposition data change under commutators and conjugation by a disjoint circuit layer.
When the lattice varies along a sequence $\{\Lambda_N\}_{N\ge1}$, we call a family of declared decompositions uniformly local if all three quantities in Eq.~\eqref{eq:declared-local-data} remain bounded independently of $N$.
Similarly, a scheduled solution $S_1=\sum_Y S_Y=\sum_{j=1}^mT_j$ is uniformly local if there exist $N$-independent bounds $m$, $q_s$, $a_s$, and $s_0$ such that the supports within each layer are pairwise disjoint, every $S_Y$ is supported on a set $Y$ with $|Y|\le q_s$ and $\diam(Y)\le a_s$, and $\norm{T_j}_{\mathrm{loc}}\le s_0$ for every layer.

\begin{theorem}[Volume-uniform quadratic residual]
  \label{thm:first-order-local-reduction}
  Fix $\lambda_0>0$ and consider a family of core-frame Hamiltonian pairs $(H_{C,0},V_C)$ on a sequence of finite metric lattices $\{\Lambda_N\}_{N\ge1}$ with $|\Lambda_N|\to\infty$.
  Suppose that
  \begin{enumerate}
      \renewcommand{\labelenumi}{(\roman{enumi})}
    \item $H_{C,0}$ and $V_C$ admit uniformly local declared decompositions;
    \item the strong homological equation~\eqref{eq:strong-homological-main} admits an anti-Hermitian uniformly local scheduled solution $S_1=\sum_YS_Y=\sum_{j=1}^mT_j$ in the sense defined above.
  \end{enumerate}
  Then $H_{C,\lambda}^{(1)}$ in Eq.~\eqref{eq:first-order-corrected-core} is uniformly local for $|\lambda|\le\lambda_0$.
  Moreover, the exact residual $R_\lambda^{(1)}$ in Eq.~\eqref{eq:exact-first-order-residual} admits a declared Hermitian decomposition in which every nonzero term has support cardinality at most $q_R$ and support diameter at most $a_R$, and after termwise centering its incident strength satisfies
  \begin{equation}
    \varepsilon_R(\lambda)
    \le C_{\mathrm{SW}}\lambda^2,
    \qquad |\lambda|\le\lambda_0.
    \label{eq:quadratic-residual-main}
  \end{equation}
  The constants $q_R$, $a_R$, and $C_{\mathrm{SW}}$ depend only on $\lambda_0$ and the uniform bounds in assumptions (i)--(ii), not on $N$.
\end{theorem}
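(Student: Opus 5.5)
The plan is to write the exact residual as an integral-remainder (Taylor) expression in $\lambda$ that is manifestly quadratic, and then to control every piece with the local bookkeeping rules for commutators and for conjugation by a disjoint layer (SM Sec.~S4.A).

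\textbf{Locality of the corrected core.} I would first settle the claim about $H_{C,\lambda}^{(1)}$. By the strong homological equation, $P_C(V_C)=V_C-Q_C(V_C)=V_C+[H_{C,0},S_1]$. Both $V_C$ and the commutator $[H_{C,0},S_1]$ have uniformly local declared decompositions. For the commutator, the relevant terms are pairs $(h_X,S_Y)$ with $X\cap Y\neq\varnothing$. Each such term has support $X\cup Y$, with $|X\cup Y|\le q(H_{C,0})+q_s$ and diameter at most $a(H_{C,0})+a_s$. Its incident strength is bounded by $2\norm{H_{C,0}}_{\mathrm{loc}}\,m s_0\,(q(H_{C,0})+q_s)$. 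Hence $H_{C,\lambda}^{(1)}=H_{C,0}+\lambda P_C(V_C)$ is uniformly local for $|\lambda|\le\lambda_0$. This route derives locality from the supplied solution rather than from any property of the projection $P_C$.

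\textbf{Layer-by-layer Taylor formula.} Write $W=\ee^{\lambda T_1}\cdots\ee^{\lambda T_m}$. Set $K_0:=H_{C,0}+\lambda V_C$ and define the conjugated operators $K_j:=\ee^{-\lambda T_j}K_{j-1}\ee^{\lambda T_j}$, applied in the order that reproduces $W^\dagger K_0W$. For each layer $j$ I would apply Taylor's formula with integral remainder to $t\mapsto \ee^{-t T_j}K\,\ee^{t T_j}$. The $H_{C,0}$ part is expanded to second order:
\[
\ee^{-\lambda T}H\ee^{\lambda T}=H+\lambda[H,T]+\lambda^2\int_0^1(1-u)\,\ee^{-u\lambda T}[[H,T],T]\ee^{u\lambda T}\,du,
\]
while the $\lambda V_C$ part is expanded only to first order. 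Accumulating over the layers, the zeroth-order terms give $H_{C,0}$. The linear terms give $\lambda V_C+\lambda\sum_j[H_{C,0},T_j]=\lambda V_C+\lambda[H_{C,0},S_1]=\lambda P_C(V_C)$, so they cancel exactly against $H_{C,\lambda}^{(1)}$. What remains is a finite sum of explicitly $\lambda^2$-weighted pieces of three kinds: nested double commutators $[[H_{C,0},T_j],T_j]$ and $[V_C,T_j]$; cross terms $[[H_{C,0},T_i],T_j]$ with $i<j$, which arise because later layers act on the first-order corrections of earlier ones; and all of these conjugated by at most $m$ further layer unitaries $\ee^{\pm u\lambda T_k}$.

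\textbf{Main obstacle: quasi-locality of the conjugations.} The step I expect to be hardest is showing that conjugating a local term by $\ee^{u\lambda T_k}$ keeps it uniformly local. This is where the exactness of each layer, rather than a Lieb--Robinson argument, should be used. Because $T_k$ is a sum of generators on pairwise disjoint supports, $\ee^{u\lambda T_k}=\prod_{Y\in\mathcal C_k}\ee^{u\lambda S_Y}$. Conjugating an operator supported on $X$ therefore involves only the $S_Y$ with $Y\cap X\neq\varnothing$, whose number is at most $|X|$. The image is supported on $X$ together with those $Y$, and its norm is unchanged. Iterating through at most $m$ layers enlarges the support cardinality by at most a factor $(1+q_s)^m$ and the diameter by at most $2ma_s$. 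I would bound the incident norm by summing over the terms whose enlarged support contains a given site: the multiplicity grows by a factor depending only on $q_s$, $n_s$ and $m$. All bounds obtained this way are $N$-independent.

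\textbf{Assembling the bounds.} Collecting the pieces, each term in the residual has norm at most $\lambda^2$ times a polynomial in $s_0$, $m$, $\norm{H_{C,0}}_{\mathrm{loc}}$ and $\norm{V_C}_{\mathrm{loc}}$, with supports bounded by the constants $q_R$ and $a_R$. The $u$-integrals are taken termwise inside fixed supports, so no extensive norm ever appears. For the Hermitian decomposition, I would note that the total residual is Hermitian, so I can replace each term $r_X$ by $(r_X+r_X^\dagger)/2$ without changing the sum or enlarging any support. Finally, $\varepsilon_R\le\norm{R^{(1)}_\lambda}_{\mathrm{loc}}$ gives $\varepsilon_R(\lambda)\le C_{\mathrm{SW}}\lambda^2$ for $|\lambda|\le\lambda_0$, where $C_{\mathrm{SW}}$ depends on $\lambda_0$ only through bounded factors of the form $\ee^{O(\lambda_0 s_0)}$, or none at all if each layer is handled exactly as above.
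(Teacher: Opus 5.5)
Your proposal is correct and follows essentially the paper's proof. You obtain locality of $H_{C,\lambda}^{(1)}$ from $P_C(V_C)=V_C+[H_{C,0},S_1]$ and the local commutator estimate. You then write $R_\lambda^{(1)}$ as an exact $\lambda^2$ integral remainder, controlled by the same commutator bound and the exact disjoint-layer conjugation lemma; that lemma costs a factor $q_s$ per layer using only disjointness within a layer, so $n_s$ is not needed (in any case $n_s\le m$). You finish with Hermitian symmetrization and $\varepsilon_R\le\norm{R_\lambda^{(1)}}_{\mathrm{loc}}$.

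The only difference is organizational. The paper Taylor-expands $H_{\mathrm{red}}(u)=(W^{(1)}_{u,\mathrm{loc}})^\dagger(H_{C,0}+uV_C)W^{(1)}_{u,\mathrm{loc}}$ once and bounds $H_{\mathrm{red}}''$ uniformly on $|u|\le\lambda_0$, so its commutators can act on already-conjugated terms and its $C_{\mathrm{SW}}$ carries a $\lambda_0$-dependent piece. Your layer-by-layer telescoping, with only a first-order remainder for $\lambda V_C$, places every commutator before the exact conjugations and gives a $\lambda$-independent constant.
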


\begin{proof}
  See SM Sec.~\smref{sec:supp-first-order-proof}{S4.B}.
\end{proof}

The theorem therefore turns a uniformly local solution of the first-order cancellation equation into a volume-uniform Hamiltonian reduction: the corrected core remains local, while the exact finite-$\lambda$ residual has bounded local support and centered incident strength of $O(\lambda^2)$.
To use this Hamiltonian reduction as an operational SCTA, the core-frame data must arise from a family of exact operational anchors satisfying the uniform tail-geometry conditions stated around Eq.~\eqref{eq:corrected-backward-radius}, and $H_{C,\lambda}^{(1)}$ must retain the preparation procedure and resource scaling required by the chosen SCTA mode.
Under these additional conditions, $U_\lambda^{(1)}=U_0W_{\lambda,\mathrm{loc}}^{(1)}$ defines the corrected tail, whose circuit and backward-region bounds are already given in Sec.~\ref{sec:local-transition-correction}.
We refer to $H_{C,\lambda}^{(1)}$ and $U_\lambda^{(1)}$ together as the operational first-order construction, which satisfies the core--tail identity in Eq.~\eqref{eq:corrected-core-tail-identity}.

The deformed graph-stabilizer, Rydberg-dimer, and quadratic-fermion anchors are worked out as separate first-order reduction examples in Secs.~\smref{sec:supp-graph-reduction}{S4.C}--\smref{sec:supp-fermion-reduction}{S4.E} of the SM under their stated locality and active-gap assumptions.
In particular, the graph-stabilizer example supplies the explicit two-layer correction circuit used in the numerical study of Sec.~\ref{sec:numerics}.

\subsection{Conditional local Gibbs certificate}
\label{sec:first-order-certificate}

Theorem~\ref{thm:first-order-local-reduction} controls the Hamiltonian residual produced by the first-order construction.
To connect this result to local thermal accuracy, we now specialize the error interface of Sec.~\ref{sec:architecture-certification} to the corresponding operational first-order construction.

Define the comparison path
\begin{equation}
  H_{\lambda,s}^{(1)}
  =H_{C,\lambda}^{(1)}+sR_\lambda^{(1)}
  \label{eq:first-order-response-path}
\end{equation}
with $s\in[0,1]$.
The quadratic residual bound~\eqref{eq:quadratic-residual-main} controls the local strength of the Hamiltonian residual, but it does not by itself guarantee that the local Gibbs-state error remains $O(\lambda^2)$ uniformly in volume.
That conclusion requires control of the thermal response accumulated along the complete path in Eq.~\eqref{eq:first-order-response-path}.
Let $q_R$ and $a_R$ be the volume-independent residual support bounds supplied by Theorem~\ref{thm:first-order-local-reduction}, assume the path-averaged response envelope in Eq.~\eqref{eq:path-averaged-response-envelope-main}, and define
\begin{equation}
  \begin{split}
    \mathcal S_{\lambda,r}
    :=\sup_{\substack{\varnothing\ne A\subseteq\Lambda\\\diam(A)\le r}}
    \Bigg[\abs{B_{U_\lambda^{(1)}}(A)}
      {}+C_\beta\!\left(B_{U_\lambda^{(1)}}(A)\right)\\
    {}\times\Sigma_g\!\left(B_{U_\lambda^{(1)}}(A);\Lambda\right)\Bigg].
  \end{split}
  \label{eq:first-order-response-geometry-main}
\end{equation}
Here $A$ is a physical output region of diameter at most $r$, and $B_{U_\lambda^{(1)}}(A)$ is the core-frame region on which its observables can depend after backward propagation through the corrected tail $U_\lambda^{(1)}$.
The cardinality term controls residual terms whose supports intersect this backward region, while the second term in the sum bounds the accumulated response to residual terms disjoint from it.
The dependence of $C_\beta$ on the fixed bounds $q_R$, $a_R$, and the local path data is suppressed as in Sec.~\ref{sec:shell-summability}.

\begin{corollary}[Conditional local certificate for the first-order reduction]
  \label{cor:certified-first-order-reduction}
  Suppose the hypotheses of Theorem~\ref{thm:first-order-local-reduction} hold and the associated operational first-order construction exists, and let $\widetilde\rho_{\mathrm{out},\lambda}$ be the implemented output obtained from the corrected-core preparation procedure and implemented tail.
  Suppose the complete path in Eq.~\eqref{eq:first-order-response-path} satisfies the response envelope used in Eq.~\eqref{eq:first-order-response-geometry-main}, and suppose the implementation-specific errors in Eq.~\eqref{eq:three-errors} are supplied.
  Then
  \begin{equation}
    \begin{split}
      D_r(\widetilde\rho_{\mathrm{out},\lambda},\rhoG{H_\lambda})
      \le\min\Big\{2,{}&\varepsilon_{\mathrm{tail}}(r)
        +\varepsilon_{\mathrm{core}}(r)\\
        &+\beta C_{\mathrm{SW}}\lambda^2
      \mathcal S_{\lambda,r}\Big\}.
      \label{eq:certified-first-order-main}
    \end{split}
  \end{equation}
\end{corollary}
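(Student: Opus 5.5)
The plan is to assemble the corollary from three earlier results: the operational error budget (Theorem~\ref{thm:operational-error-budget}), the backward-region propagation (Proposition~\ref{prop:backward-region-propagation}), and the shell-resolved bound (Theorem~\ref{thm:shell-response-main}). The quadratic residual strength from Theorem~\ref{thm:first-order-local-reduction} is then inserted into that chain. The operational first-order construction satisfies the corrected core--tail identity~\eqref{eq:corrected-core-tail-identity} with $U=U_\lambda^{(1)}$, $H_C=H_{C,\lambda}^{(1)}$ and $R=R_\lambda^{(1)}$. Unitary covariance therefore gives $\rhoG{H_\lambda}=U_\lambda^{(1)}\rho_R U_\lambda^{(1)\dagger}$ with $\rho_R=\rhoG{H_{C,\lambda}^{(1)}+R_\lambda^{(1)}}$. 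Applying Eq.~\eqref{eq:operational-error-budget} with these identifications yields the bound $\min\{2,\varepsilon_{\mathrm{tail}}(r)+\varepsilon_{\mathrm{core}}(r)+\varepsilon_{\mathrm{Gibbs}}(r)\}$. The implementation terms are supplied by hypothesis, so only $\varepsilon_{\mathrm{Gibbs}}(r)$ remains to be bounded.

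For the Gibbs term I will not use the coarsened scale $r+2\ell$. Instead I work directly with backward regions. Fix an output region $A$ with $\diam(A)\le r$ and set $B=B_{U_\lambda^{(1)}}(A)$. Equation~\eqref{eq:backward-region-propagation} gives
\[
\norm{[U\rho_0U^\dagger-U\rho_1U^\dagger]_A}_1\le\norm{[\rho_1-\rho_0]_B}_1,
\]
where $\rho_s$ is the Gibbs state along the path~\eqref{eq:first-order-response-path}. Theorem~\ref{thm:first-order-local-reduction} supplies a declared Hermitian decomposition of $R_\lambda^{(1)}$ with support bounds $q_R$ and $a_R$ and centered incident strength $\varepsilon_R(\lambda)\le C_{\mathrm{SW}}\lambda^2$. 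These are exactly the hypotheses on the decomposition in Eqs.~\eqref{eq:local-residual-decomposition}--\eqref{eq:incident-residual-strength}. Together with the assumed envelope~\eqref{eq:path-averaged-response-envelope-main} along the complete path, Theorem~\ref{thm:shell-response-main} gives
\[
\norm{[\rho_1-\rho_0]_B}_1\le\beta\varepsilon_R(\lambda)\bigl[\abs{B}+C_\beta(B)\Sigma_g(B;\Lambda)\bigr]\le\beta C_{\mathrm{SW}}\lambda^2\bigl[\abs{B}+C_\beta(B)\Sigma_g(B;\Lambda)\bigr].
\]
I will drop the $\min\{2,\cdot\}$ at this stage, since it is reinstated at the end.

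The last step is to take the supremum over admissible $A$. This turns the bracket into $\mathcal S_{\lambda,r}$ from Eq.~\eqref{eq:first-order-response-geometry-main}, so $\varepsilon_{\mathrm{Gibbs}}(r)\le\beta C_{\mathrm{SW}}\lambda^2\mathcal S_{\lambda,r}$. Substituting this into the budget, and using the trivial cap $D_r\le2$, gives Eq.~\eqref{eq:certified-first-order-main}.

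The main point needing care is matching the hypotheses. The envelope constant $C_\beta(B)$ must be taken uniform over residual supports satisfying $\abs{X}\le q_R$ and $\diam(X)\le a_R$. Those are precisely the bounds produced by Theorem~\ref{thm:first-order-local-reduction}, so the suppressed dependence is consistent. A second check is that the Gibbs term in Eq.~\eqref{eq:three-errors} uses the \emph{ideal} corrected tail $U_\lambda^{(1)}$, so $B$ is the backward region of that ideal circuit. Finally, no uniformity in $N$ is claimed here: the statement is for a fixed lattice $\Lambda$, so finiteness of $\mathcal S_{\lambda,r}$ is automatic and nothing beyond these substitutions is required.
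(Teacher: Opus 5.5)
Your proposal is correct and follows the paper's proof essentially step for step. The common chain is: unitary covariance of the corrected core--tail identity; Theorem~\ref{thm:shell-response-main} applied on $B=B_{U_\lambda^{(1)}}(A)$, via the first inequality of Proposition~\ref{prop:backward-region-propagation}, which you state explicitly and the paper uses implicitly; the supremum over $A$ to produce $\mathcal S_{\lambda,r}$; the bound $\varepsilon_R(\lambda)\le C_{\mathrm{SW}}\lambda^2$ from Theorem~\ref{thm:first-order-local-reduction}, valid for $|\lambda|\le\lambda_0$; and substitution into the budget of Theorem~\ref{thm:operational-error-budget}.
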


\begin{proof}
  See SM Sec.~\smref{sec:supp-certification-proof}{S4.F}.
\end{proof}

The third term in Eq.~\eqref{eq:certified-first-order-main} is the modeling contribution specific to the first-order reduction.
Here $C_{\mathrm{SW}}\lambda^2$ is the residual-strength bound from Theorem~\ref{thm:first-order-local-reduction}, $\beta$ comes from the Gibbs-response identity, and $\mathcal S_{\lambda,r}$ incorporates response decay, shell geometry, and backward propagation through the corrected tail.
At fixed finite volume, this modeling contribution to the local Gibbs-state error is $O(\lambda^2)$ as $\lambda\to0$, provided that $\mathcal S_{\lambda,r}$ remains bounded in a neighborhood of $\lambda=0$.
Therefore, Corollary~\ref{cor:certified-first-order-reduction} is stated at this fixed-volume level.
Its bound becomes volume uniform along a lattice sequence when the core and tail errors, backward-region cardinalities, response prefactors, and shell sums are controlled uniformly over that sequence, similarly to the discussion in Sec.~\ref{sec:shell-summability}.

\subsection{Higher-order recursion and its limitations}
\label{sec:higher-order-reduction}

The preceding corollary completes the conditional first-order chain from the residual reduction to a local Gibbs-state error bound.
We now examine how the core/off-core separation extends algebraically to higher powers of $\lambda$.
Choose a target order $k\ge1$.
At each step $\ell=1,\ldots,k$, the generators $S_1,\ldots,S_{\ell-1}$ determine every contribution to the transformed Hamiltonian at order $\lambda^\ell$ except the commutator $[H_{C,0},S_\ell]$.
The $P_C$ component of this order-$\ell$ term is absorbed into the corrected core, while setting its $Q_C$ component to zero gives a linear equation for $S_\ell$.
However, new resonant channels may appear at higher orders, so first-order solvability does not guarantee that the recursion remains solvable.
The following proposition states the fixed-volume algebraic recursion conditional on solving the homological equation at each order.

Define
\begin{equation}
  S_\lambda^{(k)}:=\sum_{j=1}^k\lambda^jS_j,
  \qquad
  W_\lambda^{(k)}:=\exp(S_\lambda^{(k)}),
  \label{eq:higher-order-generator-main}
\end{equation}
where every $S_j$ is anti-Hermitian.
We retain generators only through the target order $k$, but evaluate the exponential $W_\lambda^{(k)}$ exactly.
The transformed Hamiltonian therefore contains powers beyond $\lambda^k$ and, at fixed finite volume, has the Taylor expansion
\begin{equation}
  (W_\lambda^{(k)})^\dagger
  (H_{C,0}+\lambda V_C)W_\lambda^{(k)}
  =H_{C,0}+\sum_{\ell\ge1}\lambda^\ell G_\ell.
  \label{eq:higher-order-expansion-main}
\end{equation}
Only the first $k$ coefficients are controlled by the recursion.

\begin{proposition}[Higher-order core--tail recursion]
  \label{prop:formal-higher-order-reduction}
  Fix a finite volume and an integer $k\ge1$.
  For every $1\le \ell\le k$, the coefficient in Eq.~\eqref{eq:higher-order-expansion-main} has the form
  \begin{equation}
    G_\ell=\Xi_\ell+[H_{C,0},S_\ell],
    \label{eq:higher-order-coefficient-main}
  \end{equation}
  where $\Xi_\ell$ depends only on $H_{C,0}$, $V_C$, and $S_1,\ldots,S_{\ell-1}$.
  Suppose that anti-Hermitian generators in the declared generator space can be chosen successively so that
  \begin{equation}
    Q_C\!\left(\Xi_\ell+[H_{C,0},S_\ell]\right)=0
    \label{eq:nth-homological-main}
  \end{equation}
  for every $1\le \ell\le k$, and define
  \begin{equation}
    \Delta H_C^{(\ell)}
    :=P_C\!\left(\Xi_\ell+[H_{C,0},S_\ell]\right).
    \label{eq:nth-core-correction-main}
  \end{equation}
  Then every $\Delta H_C^{(\ell)}$ belongs to $\mathfrak H_C$, and
  \begin{equation}
    \begin{aligned}
      &(W_\lambda^{(k)})^\dagger
      (H_{C,0}+\lambda V_C)W_\lambda^{(k)}\\
      &\qquad=H_{C,0}+\sum_{\ell=1}^k\lambda^\ell\Delta H_C^{(\ell)}
      +O(\lambda^{k+1})
    \end{aligned}
    \label{eq:higher-order-normal-form-main}
  \end{equation}
  in operator norm as $\lambda\to0$ at that fixed volume.
\end{proposition}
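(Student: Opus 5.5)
The plan is to expand the conjugation by $W_\lambda^{(k)}=\exp(S_\lambda^{(k)})$ through the nested-commutator series and identify the dependence of each $\lambda$-coefficient on the generators. At fixed finite volume all operators are finite matrices, so
\begin{equation*}
  \ee^{-S}K\ee^{S}=\sum_{n\ge0}\frac{1}{n!}\,\mathrm{ad}_{-S}^{\,n}(K),
  \qquad \mathrm{ad}_{-S}(K):=[K,S],
\end{equation*}
converges absolutely in operator norm. With $S=S_\lambda^{(k)}=\sum_{j=1}^k\lambda^jS_j$ and $K=H_{C,0}+\lambda V_C$, the left-hand side is a real-analytic (indeed entire) function of $\lambda$. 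Its Taylor coefficients $G_\ell$ are therefore finite sums of nested commutators
\begin{equation*}
  [\cdots[[K_0,S_{j_1}],S_{j_2}],\ldots,S_{j_n}]
\end{equation*}
with $K_0\in\{H_{C,0},V_C\}$. The total $\lambda$-degree of such a term is $\sum_i j_i$, plus one if $K_0=V_C$.

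\textbf{Structure of $G_\ell$.} Fix $\ell\le k$ and collect the degree-$\ell$ terms. Any term containing $S_\ell$ has degree at least $\ell$. Equality holds only for $n=1$ with $K_0=H_{C,0}$, which gives exactly $[H_{C,0},S_\ell]$. Every term with $K_0=V_C$ and an $S_\ell$ has degree $\ge\ell+1$, and every term with $n\ge2$ that contains $S_\ell$ also has degree $>\ell$, since all $j_i\ge1$. Generators $S_j$ with $j>\ell$ cannot appear at degree $\ell$ at all. Setting $\Xi_\ell:=G_\ell-[H_{C,0},S_\ell]$ therefore gives a finite polynomial in $H_{C,0},V_C,S_1,\ldots,S_{\ell-1}$ built from nested commutators, which is Eq.~\eqref{eq:higher-order-coefficient-main}. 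For $\ell=1$, $\Xi_1=V_C$, consistent with Eq.~\eqref{eq:first-order-algebraic-expansion-main}. Because $\Xi_\ell$ does not involve $S_\ell,\ldots,S_k$, the recursion is well posed: one chooses $S_1$, then $S_2$ given $S_1$, and so on. Hermiticity of each $G_\ell$ follows because the transformed Hamiltonian is Hermitian for every real $\lambda$. This matters because $P_C$ and $Q_C$ are defined on the Hermitian domain.

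\textbf{Normal form.} Under the hypothesis Eq.~\eqref{eq:nth-homological-main}, linearity of $P_C$ and $Q_C=\id-P_C$ gives
\begin{equation*}
  G_\ell=P_C(G_\ell)+Q_C(G_\ell)=\Delta H_C^{(\ell)},
\end{equation*}
and $\Delta H_C^{(\ell)}\in\mathfrak H_C$ because $P_C$ projects onto $\mathfrak H_C$. For the remainder, analyticity at fixed volume gives Taylor's theorem with remainder: the tail $\sum_{\ell>k}\lambda^\ell G_\ell$ is bounded by $C\abs{\lambda}^{k+1}$ for $\abs{\lambda}\le1$. One can make this explicit by bounding $\norm{\mathrm{ad}_S^n K}_\infty\le(2\norm{S}_\infty)^n\norm{K}_\infty$ with $\norm{S}_\infty\le\sum_j\abs{\lambda}^j\norm{S_j}_\infty$. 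The constant may depend on the volume, which is why the statement is restricted to a fixed volume.

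\textbf{Main obstacle.} The only delicate step is the degree bookkeeping showing that $S_\ell$ enters $G_\ell$ only through $[H_{C,0},S_\ell]$. One must check that no cross term, such as $[[H_{C,0},S_{j}],S_{\ell-j}]$ with $j=\ell$, can sneak in at degree $\ell$. The argument above rules this out because all indices satisfy $j_i\ge1$ and the $V_C$ insertion carries an extra power of $\lambda$.
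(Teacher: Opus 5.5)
Your proposal is correct and follows essentially the same route as the paper: the paper also expands the conjugation as $\exp\bigl(\sum_j\lambda^j\mathcal D_j\bigr)$ with $\mathcal D_j(O)=[O,S_j]$, uses the same degree count (every index satisfies $j_a\ge1$, and the $V_C$ insertion carries an extra power of $\lambda$) to show that $S_\ell$ enters $G_\ell$ only through $[H_{C,0},S_\ell]$, and concludes from $G_\ell=P_C(G_\ell)$ and fixed-volume analyticity. The only differences are cosmetic: the paper writes $\Xi_\ell$ out explicitly as a sum over compositions $j_1+\cdots+j_r$, whereas you define it as $G_\ell-[H_{C,0},S_\ell]$, and you add a check that each $G_\ell$ is Hermitian, so that $P_C$ and $Q_C$ are applied within their declared domain.
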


\begin{proof}
  See SM Sec.~\smref{sec:supp-higher-order-proof}{S4.G}.
\end{proof}

The first two coefficients illustrate the recursion explicitly:
\begin{equation}
  \begin{aligned}
    G_1={}&V_C+[H_{C,0},S_1],\\
    G_2={}&[H_{C,0},S_2]+[V_C,S_1]
    +\frac12[[H_{C,0},S_1],S_1].
  \end{aligned}
  \label{eq:first-two-higher-order-coefficients-main}
\end{equation}
The first equation determines $S_1$ from the original deformation.
The second equation shows explicitly that the first rotation generates new retained and off-core terms at order $\lambda^2$.
The former are retained in $\Delta H_{C}^{(2)}$, whereas the latter must be canceled by choosing $S_2$ to satisfy the second-order homological equation.

For any fixed $k$, define the corrected core,
\begin{equation}
  H_{C,\lambda}^{(k)}
  :=H_{C,0}+\sum_{\ell=1}^k\lambda^\ell\Delta H_C^{(\ell)},
\end{equation}
and the exact residual,
\begin{equation}
  R_\lambda^{(k)}
  :=(W_\lambda^{(k)})^\dagger
  (H_{C,0}+\lambda V_C)W_\lambda^{(k)}
  -H_{C,\lambda}^{(k)}.
\end{equation}
Together with the anchor tail $U_0$, these definitions give an exact core--tail identity for every finite $\lambda$, while Proposition~\ref{prop:formal-higher-order-reduction} gives $\norm{R_\lambda^{(k)}}_\infty=O(|\lambda|^{k+1})$ at fixed volume.
Unlike the first-order construction, the proposition does not establish a volume-uniform local residual or efficiently implementable corrected core and tail.
Each of these properties must be proved separately before the formal recursion defines an operational higher-order SCTA.

\section{Numerical demonstration}
\label{sec:numerics}

In this section we numerically demonstrate the local reduction mechanism of Sec.~\ref{sec:local-reduction} in a one-dimensional open chain of qubits.
We use the graph-stabilizer anchor of Sec.~\ref{sec:graph-anchor} on an $N$-qubit open chain as an example.
The physical anchor Hamiltonian is
\begin{equation}
  H_0=-\sum_{i=1}^{N}h_iK_i,
  \label{eq:numerical-physical-anchor}
\end{equation}
where $K_i=Z_{i-1}X_iZ_{i+1}$, with $K_1=X_1Z_2$ and $K_N=Z_{N-1}X_N$ at the open boundaries.
We deform this anchor by the nearest-neighbor Ising interaction $V=J\sum_{i=1}^{N-1}Z_iZ_{i+1}$.
The deformed physical target is thus $H_{\lambda}=H_0+\lambda V$.
The graph-state tail $U_G$ in Eq.~\eqref{eq:graph-tail} maps these terms according to $U_G^\dagger K_iU_G=Z_i$ and $U_G^\dagger Z_iZ_{i+1}U_G=X_iX_{i+1}$.
The core-frame target is thus
\begin{equation}
  U_G^\dagger H_{\lambda}U_G
  =-\sum_{i=1}^{N}h_iZ_i
  +\lambda J\sum_{i=1}^{N-1}X_iX_{i+1}.
  \label{eq:numerical-core-hamiltonian}
\end{equation}
The first term on the right-hand side is $H_{C,0}$, and the second term gives the core-frame deformation $V_C = J\sum_{i=1}^{N-1}X_iX_{i+1}$.
For simplicity, we choose $J=1$ and a staggered field pattern for the anchor.
The staggered fields are
\begin{equation}
  h_i=\bar h+(-1)^i\delta,
  \label{eq:numerical-staggered-fields}
\end{equation}
with $\bar h=1.25$ and, unless stated otherwise, $\delta=0.25$.
Writing $X_iX_{i+1}$ in terms of the raising and lowering operators of $H_{C,0}$ separates the deformation on each bond into a pair process, which creates or removes two core excitations, and an exchange process, which transfers one excitation between neighboring sites.
The corresponding oriented transition energies are $-4\bar h$ and $-4(-1)^i\delta$, respectively.
At the default parameters, the smaller exchange energy makes the exchange channel the limiting one for the first-order construction.
The complete transition decomposition and perturbative analysis are given in Sec.~\smref{sec:supp-numerical-benchmark}{S5.A} of the SM.
We also note that the model in Eq.~\eqref{eq:numerical-core-hamiltonian} is an inhomogeneous transverse-field Ising chain after a local basis rotation and is therefore free-fermionic~\cite{Pfeuty1970}.
It consequently admits an exact algebraic representation using the occupation-number core and Gaussian tail introduced in Sec.~\ref{sec:quadratic-fermion-anchor}.
Here, however, we retain the graph-stabilizer anchor and apply the transition-generated correction because our purpose is to test the local reduction mechanism.
This exact solvability provides a controlled reference Gibbs state for the benchmark, but it is not used in the construction of the first-order correction.
Moreover, its exact Gaussian diagonalizer is generally nonlocal and does not replace the bounded-depth core--tail construction studied here.

Leaving the anchor core and tail unchanged gives the bare residual
\begin{equation}
  R^{(0)}_\lambda=\lambda V_C.
  \label{eq:numerical-bare-residual}
\end{equation}
The scheduled first-order correction instead gives
\begin{equation}
  R^{(1)}_\lambda=
  W_{\lambda,\mathrm{loc}}^{(1)\dagger}
  \bigl(H_{C,0}+\lambda V_C\bigr)W_{\lambda,\mathrm{loc}}^{(1)}-H_{C,0}.
  \label{eq:numerical-corrected-residual}
\end{equation}
Here $W_{\lambda,\mathrm{loc}}^{(1)}=\exp(\lambda S_{\rm odd})\exp(\lambda S_{\rm even})$ is the exact scheduled two-layer circuit derived in Sec.~\smref{sec:supp-graph-reduction}{S4.C} of the SM.
Thus, $R^{(1)}_\lambda$ includes all higher-order terms generated by the finite rotation $W_{\lambda,\mathrm{loc}}^{(1)}$.

For the Pauli expansion of the residual $R=\sum_{P\ne\id}c_PP$, we evaluate the Pauli incident strength:
\begin{equation}
  \varepsilon_P(R)
  :=\max_i\sum_{P:\,i\in\supp(P)}|c_P|.
  \label{eq:pauli-certificate}
\end{equation}
It sums the absolute coefficients of all Pauli strings acting on a given site and then maximizes over sites.
To compare it with the centered incident strength $\varepsilon_R$ in Eq.~\eqref{eq:incident-residual-strength}, we group all Pauli strings with the same support $X$ into one residual term $r_X=\sum_{P:\,\supp(P)=X}c_PP$.
The centered half-width of this term satisfies $J_X\le\norm{r_X}_\infty\le\sum_{P:\,\supp(P)=X}|c_P|$, so summing over the supports incident on each site gives $\varepsilon_R\le\varepsilon_P (R)$.
Thus $\varepsilon_P$ is a directly computable upper bound on the centered incident strength of this equal-support grouped decomposition, rather than generally its exact value.
The Pauli incident strength is particularly convenient for Pauli-based implementations used here because it can be assembled directly from the magnitudes of the resolved Pauli coefficients.
Determining $\varepsilon_R$ instead requires grouping all strings with the same support, reconstructing each local operator $r_X$, and finding its extremal eigenvalues.
Thus $\varepsilon_P$ provides a more straightforward, although generally looser estimate of the maximum incident strength on any site.

While the Pauli incident strength is a convenient estimate of the Hamiltonian-level residual strength, the physical-state error is the ultimate metric of interest.
We therefore further compare the exact target Gibbs state with the bare and first-order corrected preparations in the physical frame.
For this open chain, the local-state metric in Eq.~\eqref{eq:local-trace-distance} is the largest unhalved trace distance between corresponding nearest-neighbor reduced states:
\begin{equation}
  D_1(\rho,\sigma)
  =\max_{1\le i\le N-1}
  \norm{\rho_{i,i+1}-\sigma_{i,i+1}}_1.
  \label{eq:numerical-local-trace-distance}
\end{equation}
We also track the bond-correlation error
\begin{equation}
  \Delta_{ZZ}(\rho,\sigma)
  :=\max_{1\le i\le N-1}\abs{\Tr\!\left[Z_iZ_{i+1}(\rho-\sigma)\right]},
  \label{eq:numerical-zz-error}
\end{equation}
which measures the worst nearest-neighbor $ZZ$ correlation error.
All state comparisons reported here are deterministic endpoint calculations and do not verify a response bound along the interpolating Gibbs path.
The residual construction and finite-size Gibbs-state calculations are detailed in Secs.~\smref{sec:supp-numerical-benchmark}{S5.A} and \smref{sec:supp-pauli-propagation}{S5.B} of the SM.

\subsection{Residual and physical-state scaling}

We first test whether the scheduled correction removes the leading Hamiltonian mismatch and whether the resulting local residual diagnostic remains stable as the chain grows.
At $N=12$, we evaluate the bare and corrected Pauli incident strengths at ten logarithmically spaced couplings over $10^{-3}\le\lambda\lesssim 0.178$ and fit each dataset to $\varepsilon_P=C|\lambda|^p$.
The exponent $p$ distinguishes the linear bare residual from the expected quadratic corrected residual, while a separate scan at $\lambda=0.05$ tests the dependence on system size.
These coupling- and size-scaling results are shown in Figs.~\ref{fig:residual-scaling}(a) and \ref{fig:residual-scaling}(b), respectively.

\begin{figure}[t!]
  \centering
  \includegraphics[width=0.9\columnwidth]{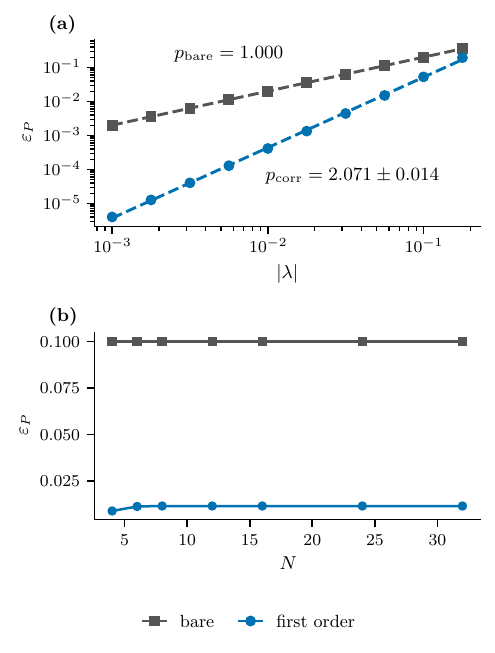}
  \caption{Local residual scaling for the graph-stabilizer benchmark.
    (a) Pauli incident strength $\varepsilon_P$ of the bare and first-order residuals versus $\abs{\lambda}$ at $N=12$; markers are the computed values, dashed lines are power-law fits over the complete sampled range, and the adjacent labels give the fitted exponents, including one-standard-error uncertainties.
    (b) Incident strength versus system size at fixed $\lambda=0.05$, showing that the first-order result saturates by $N=12$ in the tested size sequence.
  The bare and first-order constructions are indicated by gray squares and blue circles, respectively.}
  \label{fig:residual-scaling}
\end{figure}

As shown in Fig.~\ref{fig:residual-scaling}(a), the bare fit gives $p_{\rm bare}=1.000$ with a formal standard error below $10^{-15}$.
Since the fitting error is negligible, we report the fitted exponent only.
In comparison, the corrected fit gives $p_{\rm corr}=2.071\pm0.014$ with $R_{\rm fit}^2=0.9996$.
This demonstrates the expected quadratic behavior of the corrected residual.
The exact scheduled circuit therefore removes the linear term in the Pauli incident strength, in agreement with Theorem~\ref{thm:first-order-local-reduction}.

Fig.~\ref{fig:residual-scaling}(b) shows that, at $\lambda=0.05$, the corrected Pauli incident strength reaches $0.0116$ by $N=12$ and is unchanged at the displayed precision through $N=32$.
The corrected residual strengths are considerably smaller than the bare strengths across different system sizes, indicating that the first-order correction is effective in reducing the local residual strength.
Moreover, across the same size scan, the largest propagated Pauli support contains six sites and has diameter five.
This observed bounded-support saturation is a useful consistency check for the local reduction mechanism, but it does not constitute a rigorous proof of volume-independent boundedness because of the finite-size limitation of the numerical evidence.

Next, to test whether the Hamiltonian-level reduction is reflected in the output Gibbs state, we compare the bare and corrected states with the exact target at $N=8$, nine couplings, and three inverse temperatures $\beta\in\{0.5,1,2\}$.
Fig.~\ref{fig:thermal-accuracy}(a) reports the worst-case local-state error $D_1$, while Fig.~\ref{fig:thermal-accuracy}(b) reports the bond-correlation error $\Delta_{ZZ}$.
For each metric, the data at all three temperatures are fitted simultaneously to $y_\beta(\lambda)=A_\beta|\lambda|^p$.
The exponent $p$ is shared across temperatures, while each $\beta$ has an independent prefactor $A_\beta$ that captures the temperature dependence of the error magnitude.

\begin{figure}[t!]
  \centering
  \includegraphics[width=0.9\linewidth]{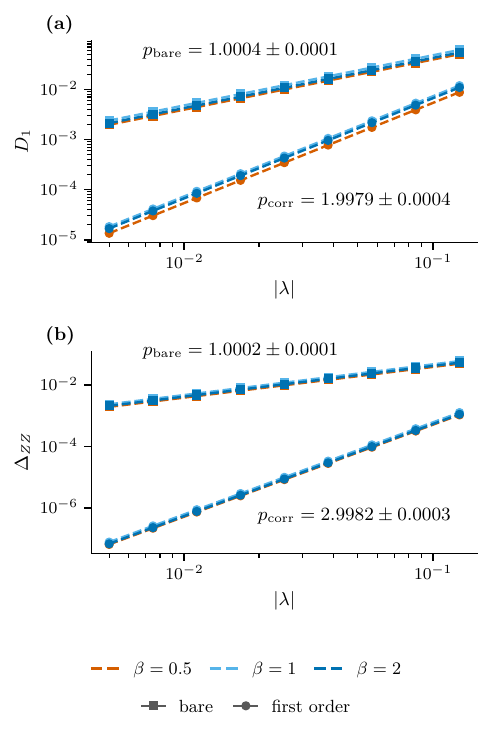}
  \captionof{figure}{Physical local Gibbs-state accuracy at $N=8$.
    (a) Maximum trace-norm distance $D_1$ between contiguous two-site marginals of the exact Gibbs state and those of the bare or first-order state, versus $\abs{\lambda}$ for three inverse temperatures $\beta\in\{0.5,1,2\}$.
    (b) Maximum error $\Delta_{ZZ}$ in the corresponding $Z_iZ_{i+1}$ expectation values.
    In both panels, markers are the computed values and dashed curves are grouped power-law fits over all nine sampled couplings; each grouped fit uses a common exponent and a temperature-dependent prefactor, and the adjacent labels give the common exponent with its one-standard-error uncertainty.
  The common legends encode inverse temperature by color---orange for $\beta=0.5$, light blue for $\beta=1$, and dark blue for $\beta=2$---and distinguish the bare and first-order constructions by squares and circles, respectively.}
  \label{fig:thermal-accuracy}
\end{figure}

In Fig.~\ref{fig:thermal-accuracy}(a), the grouped fits over $0.005\le\lambda\lesssim 0.128$ give $p_{\rm bare}=1.0004\pm0.0001$ and $p_{\rm corr}=1.9979\pm0.0004$.
Hence the Hamiltonian-level order reduction is visible in the worst nearest-neighbor marginal for all three tested temperatures.
Fig.~\ref{fig:thermal-accuracy}(b) gives the corresponding $\Delta_{ZZ}$ exponents $1.0002\pm0.0001$ and $2.9982\pm0.0003$, which again shows that the first-order correction removes the leading-order error in the bond correlation.
The quoted uncertainties are the ordinary least-squares slope errors within the stated finite-window model.

The near-cubic corrected scaling in Fig.~\ref{fig:thermal-accuracy}(b) follows from a sublattice symmetry that makes both the exact and corrected bond expectations odd functions of $\lambda$.
After first-order cancellation removes their linear difference, the next symmetry-allowed contribution is cubic.
This behavior is specific to the observable and model and does not imply a generic third-order local-state guarantee.
The fitting protocols and the symmetry argument are detailed in Secs.~\smref{sec:supp-fit-protocol}{S5.C} and \smref{sec:supp-ZZ-symmetry}{S5.D} of the SM.

\subsection{Breakdown of the first-order prescription and variational continuation}

The scaling results above concern nonzero transition energies and perturbatively small couplings, both of which could be violated in practice and lead to a breakdown of the first-order prescription.
The key failure mechanisms are encoded in the first-order angle assignments.
Writing $A_{i,\mathrm p}=\sigma_i^+\sigma_{i+1}^+$ and $A_{i,\mathrm e}=\sigma_i^+\sigma_{i+1}^-$ [cf.~Sec.~\smref{sec:supp-numerical-benchmark}{S5.A} of the SM], which denote the pair and exchange eigenoperators, respectively, we define the bond-gate family
\begin{equation}
  \begin{split}
    W_i(\theta_{\mathrm p},\theta_{\mathrm e})
    =\exp\!\big[&
      \theta_{\mathrm p}(A_{i,\mathrm p}-A_{i,\mathrm p}^{\dagger})
      \\
      &+(-1)^i\theta_{\mathrm e}(A_{i,\mathrm e}-A_{i,\mathrm e}^{\dagger})
    \big].
  \end{split}
  \label{eq:numerical-variational-bond-gate}
\end{equation}
For $\delta\ne0$, applying Eq.~\eqref{eq:first-order-generator-main} to the pair and exchange transition energies stated above fixes $\theta_{\mathrm p}^{(1)}=\lambda J/(4\bar h)$ and $\theta_{\mathrm e}^{(1)}=\lambda J/(4\delta)$.
The prescribed first-order correction circuit $W_{\lambda,\mathrm{loc}}^{(1)}$ then applies $W_i(\theta_{\mathrm p}^{(1)},\theta_{\mathrm e}^{(1)})$ in successive odd- and even-bond layers.
Two distinct limitations of that circuit are immediately visible.

The first limitation is resonance.
As $\delta\to0$, the exchange-transition energy vanishes and the corresponding first-order angle diverges, even though the physical Hamiltonian remains finite.
For small but nonzero $\delta$, the circuit is still well defined, but the growing exchange angle signals a progressive loss of perturbative control before exact resonance is reached.
The second limitation occurs away from resonance: at fixed $\delta$, increasing $\abs{\lambda}$ enlarges both correction angles until the higher-order terms generated by the finite rotations are no longer parametrically suppressed.

\begin{figure}[t!]
  \centering
  \includegraphics[width=0.9\linewidth]{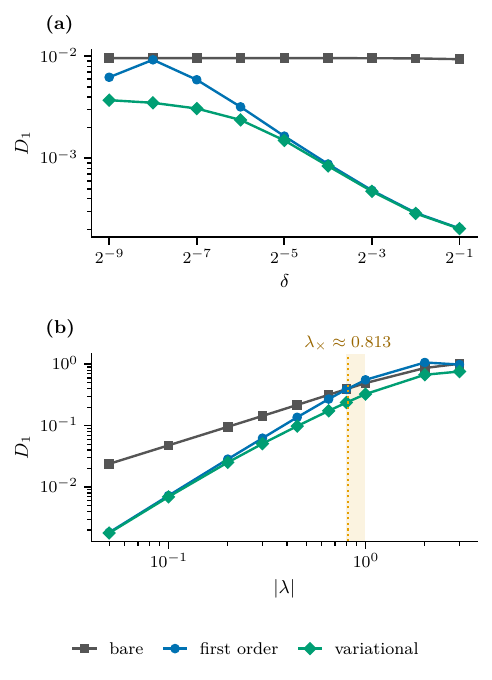}
  \captionof{figure}{Variational performance in the two regimes where the first-order construction loses perturbative control, evaluated at $N=8$ and $\beta=1$.
    (a) Maximum nearest-neighbor trace-norm error $D_1$ versus the field staggering $\delta$ at fixed $\lambda=0.02$, with resonance approached toward the left.
    (b) The same error versus $\abs{\lambda}$ at fixed $\delta=0.25$; the orange band marks the first sampled bracket $0.8\le\lambda\le1$ in which the first-order corrected error becomes larger than the bare error, and the dotted line marks $\lambda_{\times}\simeq0.813$, obtained by linear interpolation in $\ln\lambda$ within that bracket.
  In both panels, the bare, first-order corrected, and variational constructions are indicated by gray squares, blue circles, and green diamonds, respectively.}
  \label{fig:variational-failure}
\end{figure}

We probe these mechanisms at $N=8$ and $\beta=1$ by isolating the behavior of the prescribed first-order construction.
The details of the experimental setup are provided in Sec.~\smref{sec:supp-optimization-details}{S5.F} of the SM.
First, we scan the staggering $\delta$ at fixed $\lambda=0.02$ to approach resonance.
In the resonance scan of Fig.~\ref{fig:variational-failure}(a), the first-order error remains below the bare error at every displayed positive $\delta$, so the loss of perturbative control at the chosen parameters does not immediately produce a local-error crossover.
Nevertheless, the first-order error generally increases as $\delta$ decreases, despite the nonmonotonicity observed at the two smallest staggerings.
Its improvement over the bare Gibbs error generally increases as the system is moved farther from resonance.
At exact resonance, the exchange denominator vanishes and the first-order correction circuit must omit that channel; this separate pair-only comparison is reported in Sec.~\smref{sec:supp-optimization-details}{S5.F} of the SM.

Second, we scan the deformation parameter $\lambda$ in the range $0.05\le\lambda\le 3$ at fixed $\delta=0.25$ to probe the finite-deformation limitation.
The finite-deformation scan gives a more direct failure signature in the chosen metric.
In Fig.~\ref{fig:variational-failure}(b), the first-order error crosses the bare error within $0.8\le\lambda\le1$, with the interpolated crossing $\lambda_\times\simeq0.813$.
Beyond this point, fixing the circuit angles at their first-order values can be worse than leaving the anchor state uncorrected.
The first-order error falls slightly below the bare error again at $\lambda=3$.
This isolated re-entry occurs outside the small-angle regime and reflects the nonmonotonic finite-angle behavior of the exactly exponentiated circuit, rather than a recovery of perturbative control.

These failures concern the perturbative values assigned to the angles, not the physical relevance of the pair and exchange transition directions.
We therefore retain the bond-gate family in Eq.~\eqref{eq:numerical-variational-bond-gate} and its odd--even schedule while treating $\theta_{\mathrm p}$ and $\theta_{\mathrm e}$ as free parameters to be optimized together with the core populations.
The pair and exchange generators act in complementary two-qubit excitation sectors and commute on the same bond, so each gate is evaluated without a product-formula approximation.
We denote the resulting two-layer correction by $W(\theta_{\mathrm p},\theta_{\mathrm e})$.
The core populations are varied through two loader angles, one for each field sublattice.
With $s(i)=\mathrm o$ on odd sites and $s(i)=\mathrm e$ on even sites, the reduced core state is
\begin{equation}
  \rho_C(\alpha_{\mathrm o},\alpha_{\mathrm e})
  =\bigotimes_{i=1}^{N}
  \left[
    \cos^2\!\alpha_{s(i)}\ket0\!\bra0
    +\sin^2\!\alpha_{s(i)}\ket1\!\bra1
  \right].
  \label{eq:numerical-variational-core}
\end{equation}
The two-sublattice restriction follows the same alternating pattern as the physical fields in Eq.~\eqref{eq:numerical-staggered-fields}.
In a general ansatz, each site could have an independent population angle.
The four parameters in our setup are therefore $\boldsymbol\vartheta=(\theta_{\mathrm p},\theta_{\mathrm e},\alpha_{\mathrm o},\alpha_{\mathrm e})$, and the corresponding physical trial state is
\begin{equation}
  \rho_{\boldsymbol\vartheta}
  =U_GW(\theta_{\mathrm p},\theta_{\mathrm e})
  \rho_C(\alpha_{\mathrm o},\alpha_{\mathrm e})
  W(\theta_{\mathrm p},\theta_{\mathrm e})^{\dagger}U_G^{\dagger}.
  \label{eq:numerical-variational-state}
\end{equation}
The variational ansatz structure is shown in Fig.~\smref{fig:supp-variational-circuit}{S3} of the SM.

We determine $\boldsymbol\vartheta$ by minimizing the physical free-energy density
\begin{equation}
  f_\beta(\boldsymbol\vartheta)
  =\frac{1}{N}
  \left[
    \Tr(\rho_{\boldsymbol\vartheta}H_\lambda)
    -\beta^{-1}S(\rho_{\boldsymbol\vartheta})
  \right].
  \label{eq:numerical-variational-free-energy}
\end{equation}
The energy is evaluated with the physical target Hamiltonian, while the entropy is obtained directly from the product probabilities in Eq.~\eqref{eq:numerical-variational-core} because the correction and graph tail are unitary and therefore preserve the spectrum of the core state.
At every scan point, we use ten L-BFGS-B runs initialized from the bare parameters, the nonsingular first-order parameters, and eight random points.
The entropy formula and details of the variational circuit and its optimization are given in Secs.~\smref{sec:supp-variational-protocol}{S5.E} and \smref{sec:supp-optimization-details}{S5.F} of the SM.

All $200$ optimization runs reported formal convergence, but different initializations sometimes converged to solutions with different objective values; at some parameter points, only one of the ten final values lay within the declared tolerance of the smallest value found.
In the positive-$\delta$ resonance scan of Fig.~\ref{fig:variational-failure}(a), the variational state has the smallest $D_1$ among the three displayed constructions at all nine points.
Its advantage over the first-order state is small away from resonance and grows as the prescribed exchange angle becomes large, showing that the transition-generated directions remain useful after their perturbative coefficients lose control.
At exact resonance, however, the variational error is $3.93\times10^{-3}$, above the pair-only first-order value $2.81\times10^{-3}$, so the improvement is not uniform at the singular point.

In the finite-deformation scan of Fig.~\ref{fig:variational-failure}(b), the variational state also has the smallest $D_1$ at all ten sampled couplings.
At $\lambda=1$, its error is $0.327$, compared with $0.494$ for the bare state and $0.557$ for the first-order state.
At $\lambda=3$, the variational error reaches $0.763$ despite remaining below the bare and first-order values, $1.020$ and $0.990$, respectively.
Thus optimizing the transition-generated circuit extends its usefulness beyond the controlled first-order regime, but improvement over the two baselines does not imply quantitative accuracy at strong deformation.

It is important to note that these calculations establish a finite-system mechanism, not a scalable variational guarantee.
They do not address large-system trainability and exclude errors and overheads that would arise in a practical implementation.
Because the optimization changes both the loader and correction angles, the first-order residual bound also does not automatically apply to the optimized state.

\section{Discussion and outlook}
\label{sec:discussion}

In this work, we formalized the spectral core--tail architecture as a framework that separates quantum Gibbs-state preparation into a tractable core carrying the thermal populations, a unitary tail encoding the physical basis, and a residual measuring the remaining Hamiltonian mismatch.
This separation places core-preparation, tail-implementation, and Gibbs-modeling errors in a common local error budget.
To convert the residual into a local Gibbs-state error bound, we introduced the Gibbs path $\rho_s=\rhoG{H_C+sR}$, which connects the core Gibbs state to the target state in the core frame as $s:0\to1$, and expressed the response of local observables along this path through the Kubo--Mori covariance.
In particular, we isolated a shell-summability condition that ensures the local response to the residual remains bounded as the system grows, and we identified several physical regimes in which this condition can be established.
The modeling error bound is then volume uniform provided the residual strength, response prefactor, and bounded-region geometry are also controlled uniformly with system size.

We illustrated the architecture through the graph-stabilizer, independent-Rydberg-dimer, and quadratic-fermion anchors, which realize different pairs of tractable core and tail objects.
In these examples, the exact core--tail Hamiltonian identity \eqref{eq:exact-anchor-identity} is satisfied, and the residual is zero.
However, finite-precision gate synthesis and device errors remain separate operational considerations.
To extend SCTA from exact anchors to weakly deformed Hamiltonians controlled by a small parameter $\lambda$, we used a core-relative Schrieffer--Wolff procedure that absorbs core-compatible terms into the tractable core and cancels the leading basis-changing terms through a correction to the tail.
Under the stated locality and solvability assumptions, Theorem~\ref{thm:first-order-local-reduction} realizes this first-order construction with a bounded-depth circuit, reducing the problem to an exact residual whose local support is bounded independently of volume and whose incident strength is $O(\lambda^2)$.
With operational errors and response control supplied, Corollary~\ref{cor:certified-first-order-reduction} then converts this residual estimate into a local Gibbs-state error bound.
We also derived a formal recursion to arbitrary perturbative order, but it remains a fixed-volume algebraic result rather than a volume-uniform operational construction.

Finally, the finite-size graph-stabilizer benchmark illustrates the first-order reduction numerically.
Figs.~\ref{fig:residual-scaling} and \ref{fig:thermal-accuracy} show the predicted improvement from linear to approximately quadratic scaling in the residual strength and local-state error, together with symmetry-enhanced near-cubic scaling in the bond-correlation error, over the tested perturbative window.
Fig.~\ref{fig:variational-failure} shows that the transition-generated circuit also provides a useful variational ansatz beyond that window as the system approaches resonance or strong deformation.
The variational improvement is not uniform, however, and the local error remains large at the strongest deformation.

The central theoretical limitation is the model dependence of thermal response.
A small incident residual does not by itself control a local Gibbs-state error.
The imported quantum-stability route discussed in Sec.~\ref{sec:physical-response-regimes} requires suitable spatial covariance decay throughout the interpolation, together with uniform control of the Hamiltonian and geometry; endpoint decay alone is insufficient~\cite{CapelEtAl2025}.
One-dimensional finite-range systems and sufficiently high-temperature systems provide important regimes in which the needed spatial decay or related locality statements can be established~\cite{Araki1969,KlieschEtAl2014,BluhmCapelPerezHernandez2022}, while recent work extends parts of this picture to selected long-range interactions~\cite{KimuraKuwahara2025,MobusEtAl2026}.
Near a thermal critical point or at low temperature in higher dimensions, however, the response length or prefactor may grow and the shell sum may cease to be uniform.
Failure of a sufficient response condition makes the present certificate inconclusive.

Additionally, all locality statements are made relative to a specified physical geometry and encoding; their operational implications must therefore be reassessed if either is changed.
For example, the graph and dimer tails remain local only under their bounded-degree or bounded-block embeddings, and a generic Gaussian transformation can become deep or acquire long Jordan--Wigner strings even when the fermionic Hamiltonian is algebraically quadratic~\cite{JordanWigner1928,BravyiKitaev2002}.
The main-text analysis focuses on tails with a strictly finite backward radius, while Sec.~\smref{sec:supp-quasilocal-tails}{S1.C} of the SM gives an abstract extension for quasi-local backward propagation.
Systematic applications of that extension to long-range or quasi-local tails remain outside the present scope.

The reduction theorem has a similarly specific domain.
It assumes that the first-order homological equation has a uniformly local scheduled solution rather than proving that such a solution exists for every deformation.
The local transition-channel construction in Sec.~\ref{sec:local-transition-correction} supplies one route when various parameters are controlled uniformly.
However, small or vanishing transition energies can invalidate the direct first-order prescription.
Absorbing the corresponding channels may require a different local basis or an enlarged core family whose loading and entropy costs must be reassessed, and furthermore, it may not always be possible to find such solutions at all.
At higher orders, new resonances and growing supports can appear even when the first-order problem is well controlled, as is familiar from local and variational Schrieffer--Wolff constructions~\cite{BravyiDiVincenzoLoss2011,WurtzClaeysPolkovnikov2020}.

Furthermore, the variational construction used here is intentionally restricted to one odd--even sequence of pair and exchange gates with bond-independent angles and a product core with two shared sublattice angles.
This four-parameter ansatz cannot adapt the correction independently across bonds or introduce additional transition channels, which may limit its performance near resonance and at strong deformation.
More expressive settings such as bond-dependent angles and additional correction layers or channels could potentially provide more substantial improvements over the current approach, provided that entropy tractability and suitable circuit geometry are preserved.
Greater expressivity may also make the optimization landscape more difficult~\cite{CerezoEtAl2021Cost}.
Moreover, within a restricted ansatz, a lower global free energy corresponds to a state that is globally closer to the target Gibbs state but does not necessarily reduce every local error such as the worst local marginal error $D_1$.

These limitations point to several concrete extensions.
On the analytical side, a central objective is to reduce the dependence of local certification on separately supplied response estimates.
One possible route is to develop a dynamical stability criterion directly for the residual of a general core--tail pair.
If the Gibbs state of the chosen core is the stationary state of a rapidly mixing local sampler, and the residual induces only a suitably weak and quasi-local perturbation of that dynamics, stability results for dissipative systems could provide a volume-uniform bound on the resulting local Gibbs-state change~\cite{CubittEtAl2015}.
This would replace the need to establish covariance decay independently for every intermediate Hamiltonian by conditions formulated in terms of the core dynamics and the residual itself.
Recent perturbative rapid-mixing results for quantum Gibbs samplers suggest that such a program is feasible in selected weakly interacting regimes~\cite{SmidEtAl2025}, while extending it to more general core Hamiltonians and residuals remains an open problem.
A complementary route is to certify local thermal observables directly, rather than infer their accuracy from the response to a residual.
Recent work develops certified algorithms for computing expectation values of observables in thermal states of local quantum Hamiltonians, providing rigorous upper and lower bounds without requiring a response estimate along an interpolation path~\cite{FawziFawziScalet2024}.
Applied to a sufficiently complete set of observables on a bounded region, such bounds could be compared directly with an SCTA preparation to constrain its local error.
Its practical limitation is whether sufficiently tight bounds can be obtained before the algorithm becomes too classically costly.

On the constructive side, an important question is whether useful core--tail reductions can be obtained without resolving the core-incompatible residual into explicit transition channels.
One possibility is to determine the correction variationally, choosing local basis rotations that reduce the core-incompatible part of the Hamiltonian while keeping the correction simple and local~\cite{SelsPolkovnikov2017,WurtzClaeysPolkovnikov2020}.
Continuous unitary-flow methods offer a related alternative by gradually transforming the Hamiltonian toward a more tractable form~\cite{Wegner1994,Kehrein2006,ThomsonEisert2024}.
These approaches avoid explicit channel-by-channel energy denominators, although genuine resonances can still obstruct a perturbative basis rotation.

For variational applications, a promising direction is to use the constructive picture to guide how the ansatz is enlarged, rather than simply adding more layers or free parameters.
When a chosen core--tail pair leaves a significant local mismatch, one could selectively enrich the core or add physically motivated local rotations to the tail, while using residual and local-error diagnostics alongside the free energy to determine which changes are actually useful.
This may be particularly valuable near resonance or outside the perturbative regime, where the correction directions can remain physically relevant even after their perturbative coefficients lose validity, as observed in the numerical benchmark.
A second direction is to seek a common core--tail representation over a range of temperatures, allowing the thermal populations to change with temperature while reusing as much of the basis transformation as possible.
Such a representation could make SCTA more useful as a reusable thermal-state preparation framework rather than a separate variational optimization at each temperature.

\clearpage

\section*{Code availability}
The code supporting the numerical results and figures is publicly available at \url{https://github.com/ruihao-li/scta-gibbs-preparation}.

\section*{Acknowledgments}
We thank Khadijeh Najafi and Semeon Valgushev for helpful discussions in the early stages of this work.

GPT-5.6 Sol was used to assist with language editing, manuscript organization, literature discovery, code development, and the completion of a number of proofs.
The scientific ideas and conceptual development of this work originated solely with the author.
All AI-assisted materials were reviewed and verified by the author to the best of his knowledge.

\bibliography{references}

\clearpage
\onecolumngrid
\pagestyle{plain}

\renewcommand{\thesection}{S\arabic{section}}
\renewcommand{\thefigure}{S\arabic{figure}}
\renewcommand{\theHsection}{supp.\arabic{section}}
\renewcommand{\theHsubsection}{supp.\arabic{section}.\arabic{subsection}}
\renewcommand{\theHsubsubsection}{supp.\arabic{section}.\arabic{subsection}.\arabic{subsubsection}}
\renewcommand{\theHfigure}{supp.\arabic{figure}}
\setcounter{section}{0}
\setcounter{subsection}{0}
\setcounter{subsubsection}{0}
\setcounter{equation}{0}
\setcounter{figure}{0}
\addtocontents{toc}{\protect\sctarestoretoc}

\begin{center}
  {\large\bfseries Supplemental Material for ``Spectral Core--Tail Architecture for Locally Certified Gibbs-State Preparation''\par}
  \vspace{0.8em}
  Rui-Hao Li\par
\end{center}

This Supplemental Material supplies the proofs and construction details omitted from the condensed main text.
It is organized to preserve the logical separation in the main text.
Each proposition, theorem, and corollary stated in the main text has a dedicated proof subsection below.

\begingroup
\renewcommand{\baselinestretch}{1.0}\selectfont
\makeatletter
\let\sctaLsection\l@section
\let\sctaLsubsection\l@subsection
\let\sctaLparagraph\l@paragraph
\def\sctarestoretoc{%
  \let\l@section\sctaLsection
  \let\l@subsection\sctaLsubsection
  \let\l@paragraph\sctaLparagraph
}
\let\l@section\@gobbletwo
\let\l@subsection\@gobbletwo
\let\l@paragraph\@gobbletwo
\let\l@subsubsection\@gobbletwo
\tableofcontents
\makeatother
\endgroup
\clearpage

\section{Backward geometry and the operational error budget}
\label{sec:supp-geometry}

This section proves the backward-propagation proposition and operational error-budget theorem stated in Sec.~II of the main text, and then records their quasi-local extensions.
For Hermitian $X$, trace-norm duality reads
\begin{equation}
  \norm{X}_1
  =\sup_{\substack{O=O^\dagger\\\norm{O}_\infty\le1}}|\Tr(OX)|.
  \label{eq:supp-trace-duality}
\end{equation}
$\norm{O}_\infty$ denotes the operator norm,
\begin{equation*}
  \norm{O}_\infty
  :=\sup_{\norm{\psi}=1}\norm{O\ket{\psi}},
\end{equation*}
where the supremum runs over normalized state vectors.
For Hermitian $O$, this norm is the largest absolute value of its eigenvalues.

\subsection{Proof of Proposition~II.5}
\label{sec:supp-propagation-proof}

\begin{proof}
  Let
  \begin{equation*}
    \Delta=\rho-\sigma,
    \qquad
    B=B_U(A),
    \qquad
    \Delta'_A=\left[U\Delta U^\dagger\right]_A.
  \end{equation*}
  Because $\rho$ and $\sigma$ are Hermitian, both $\Delta$ and $\Delta'_A$ are Hermitian.
  Applying the trace-norm duality relation in Eq.~\eqref{eq:supp-trace-duality} to $\Delta'_A$ gives
  \begin{equation}
    \norm{\Delta'_A}_1
    =
    \sup_{\substack{O_A=O_A^\dagger\\\norm{O_A}_\infty\le1}}
    \abs{\Tr_A(O_A\Delta'_A)}.
    \label{eq:supp-region-cone-duality-A}
  \end{equation}
  Fix an arbitrary Hermitian contraction $O_A$ appearing in this supremum, and extend it to the full system as
  \begin{equation*}
    \widehat O_A=O_A\otimes\id_{A^c}.
  \end{equation*}
  By the defining property of the partial trace,
  \begin{equation*}
    \Tr_A(O_A\Delta'_A)
    =
    \Tr\!\left(\widehat O_AU\Delta U^\dagger\right).
  \end{equation*}
  Cyclicity of the full-system trace then yields
  \begin{equation*}
    \Tr\!\left(\widehat O_AU\Delta U^\dagger\right)
    =
    \Tr\!\left(U^\dagger\widehat O_AU\Delta\right).
  \end{equation*}
  By the definition of $B=B_U(A)$, the pulled-back observable $U^\dagger\widehat O_AU$ is supported on $B$.
  Consequently, there exists an operator $O_B$ on $\cH_B$ such that
  \begin{equation}
    U^\dagger\widehat O_AU=O_B\otimes\id_{B^c}.
    \label{eq:supp-region-cone-pulled-observable}
  \end{equation}
  Unitary conjugation preserves Hermiticity and the operator norm, while tensoring with an identity also does not change the operator norm.
  It follows that
  \begin{equation*}
    O_B=O_B^\dagger,
    \qquad
    \norm{O_B}_\infty
    =
    \norm{\widehat O_A}_\infty
    =
    \norm{O_A}_\infty
    \le1.
  \end{equation*}
  Thus $O_B$ is a Hermitian contraction on $B$.
  Using Eq.~\eqref{eq:supp-region-cone-pulled-observable} and the defining property of the partial trace once more, we obtain
  \begin{equation*}
    \Tr\!\left(U^\dagger\widehat O_AU\Delta\right)
    =
    \Tr\!\left[\left(O_B\otimes\id_{B^c}\right)\Delta\right]
    =
    \Tr_B\!\left(O_B\Delta_B\right),
  \end{equation*}
  where $\Delta_B=\left[\rho-\sigma\right]_B$.
  Using the fact that $O_B$ is a Hermitian contraction and applying the trace-norm duality relation in Eq.~\eqref{eq:supp-trace-duality} to $\Delta_B$ gives
  \begin{equation*}
    \abs{\Tr_B\!\left(O_B\Delta_B\right)}
    \le
    \norm{\Delta_B}_1
    =
    \norm{\left[\rho-\sigma\right]_{B_U(A)}}_1.
  \end{equation*}
  This bound holds for every Hermitian contraction $O_A$ in Eq.~\eqref{eq:supp-region-cone-duality-A}.
  Taking the supremum over all such $O_A$ therefore yields
  \begin{equation}
    \norm{\left[U\rho U^\dagger-U\sigma U^\dagger\right]_A}_1
    \le
    \norm{\left[\rho-\sigma\right]_{B_U(A)}}_1.
    \label{eq:supp-backward-region}
  \end{equation}

  We next prove the diameter and local-metric claims.
  Fix a nonempty output region $A$ with $\diam(A)\le r$.
  For any $L\ge0$ and any $x,x'\in A^{(L)}$, choose $y,y'\in A$ such that $\dist(x,y)\le L$ and $\dist(x',y')\le L$.
  The triangle inequality gives
  \begin{equation*}
    \dist(x,x')
    \le
    \dist(x,y)+\dist(y,y')+\dist(y',x')
    \le
    \diam(A)+2L,
  \end{equation*}
  where
  \begin{equation*}
    \diam(A)=\sup_{y,y'\in A}\dist(y,y').
  \end{equation*}
  Taking the supremum over $x,x'\in A^{(L)}$ gives
  \begin{equation*}
    \diam(A^{(L)})\le\diam(A)+2L.
  \end{equation*}
  Applying this estimate with $L=\ell_U(A)$, and using $B_U(A)\subseteq A^{(\ell_U(A))}$ together with $\ell_U(A)\le\ell_U(r)$, gives
  \begin{equation}
    \begin{aligned}
      \diam(B_U(A))
      &\le
      \diam\!\left(A^{(\ell_U(A))}\right)
      \\
      &\le
      \diam(A)+2\ell_U(A)
      \le
      r+2\ell_U(r).
    \end{aligned}
    \label{eq:supp-neighborhood-diameter}
  \end{equation}

  Hence $B_U(A)$ is an admissible region in the supremum defining $D_{r+2\ell_U(r)}(\rho,\sigma)$.
  Applying Eq.~\eqref{eq:supp-backward-region} yields
  \begin{equation*}
    \norm{\left[U\rho U^\dagger-U\sigma U^\dagger\right]_A}_1
    \le
    \norm{\left[\rho-\sigma\right]_{B_U(A)}}_1
    \le
    D_{r+2\ell_U(r)}(\rho,\sigma).
  \end{equation*}
  The right-hand side is independent of $A$.
  Taking the supremum of the left-hand side over every nonempty $A$ with $\diam(A)\le r$ gives
  \begin{equation}
    D_r(U\rho U^\dagger,U\sigma U^\dagger)
    \le
    D_{r+2\ell_U(r)}(\rho,\sigma).
    \label{eq:supp-finite-radius}
  \end{equation}

  Finally, suppose that $U$ is a depth-$d$ circuit whose gate supports have diameter at most $a$.
  Write the circuit as $U=U_d\cdots U_1$, where each $U_j$ is a layer of gates with pairwise disjoint supports of diameter at most $a$.
  Set $A_0=A$.
  For $k=1,\ldots,d$, let $A_k$ be the union of $A_{k-1}$ and the supports of all gates in layer $U_{d-k+1}$ whose supports intersect $A_{k-1}$:
  \begin{equation*}
    A_k
    =
    A_{k-1}\cup\bigcup_{\substack{S\in\supp(U_{d-k+1})\\S\cap A_{k-1}\neq\varnothing}}S.
  \end{equation*}
  A gate whose support does not intersect $A_{k-1}$ commutes with an observable supported on $A_{k-1}$ and therefore cannot enlarge its support.
  A gate that does intersect $A_{k-1}$ may spread the observable over its entire support $S$.
  Then after pulling an observable supported on $A$ backward through the last $k$ layers, its support is contained in $A_k$.
  Moreover, if a newly added site $x\in A_k\setminus A_{k-1}$, then $x$ belongs to a gate support $S$ that contains some $y\in A_{k-1}$.
  The gate-diameter assumption gives $\dist(x,A_{k-1})\le \dist(x,y)\le a$, and therefore
  \begin{equation*}
    A_k\subseteq A_{k-1}^{(a)}.
  \end{equation*}
  That is, every site in $A_k$ lies in the $a$-neighborhood of $A_{k-1}$.
  Iterating this inclusion gives
  \begin{equation*}
    A_k\subseteq A^{(ka)},
    \qquad
    k=0,\ldots,d.
  \end{equation*}
  Consequently, every observable initially supported on $A$ is supported within $A^{(ad)}$ after being pulled backward through the entire circuit.
  By the definition of the backward causal region, $B_U(A)\subseteq A^{(ad)}$, so $\ell_U(A)\le ad$.
  Taking the supremum over all nonempty $A$ with $\diam(A)\le r$ proves $\ell_U(r)\le ad$.
\end{proof}

\subsection{Proof of Theorem~II.6}
\label{sec:supp-error-budget-proof}

\begin{proof}
  Adopt the notation and hypotheses of Theorem~II.6, including the three error quantities defined in Sec.~II.C of the main text.
  Unitary covariance gives $\rhoG{H}=U\rho_RU^\dagger$.
  To prove the first bound, fix a nonempty output region $A\subseteq\Lambda$ with $\diam(A)\le r$.
  Consider the sequence
  \begin{equation*}
    \tau_0=\widetilde{\rho}_{\mathrm{out}},
    \qquad
    \tau_1=U\widetilde{\sigma}_CU^\dagger,
    \qquad
    \tau_2=U\rho_CU^\dagger,
    \qquad
    \tau_3=U\rho_RU^\dagger=\rhoG{H}.
  \end{equation*}
  The last equality follows from unitary covariance of the Gibbs state and the exact identity $U^\dagger HU=H_C+R$.
  This sequence changes one component of the preparation at a time: tail implementation, core preparation, and residual-induced Gibbs mismatch.

  Linearity of the partial trace and the triangle inequality give
  \begin{equation*}
    \begin{aligned}
      \norm{[\tau_0-\tau_3]_A}_1
      &\le
      \norm{[\tau_0-\tau_1]_A}_1
      \\
      &\quad+
      \norm{[\tau_1-\tau_2]_A}_1
      +
      \norm{[\tau_2-\tau_3]_A}_1.
    \end{aligned}
  \end{equation*}
  Since $A$ is admissible in the three suprema defining $D_r$, the three terms on the right are bounded by $\varepsilon_{\mathrm{tail}}(r)$, $\varepsilon_{\mathrm{core}}(r)$, and $\varepsilon_{\mathrm{Gibbs}}(r)$, respectively.
  Substituting the three estimates into the region-wise triangle inequality gives
  \begin{equation*}
    \norm{
      [\widetilde{\rho}_{\mathrm{out}}-\rhoG{H}]_A
    }_1
    \le
    \varepsilon_{\mathrm{tail}}(r)
    +
    \varepsilon_{\mathrm{core}}(r)
    +
    \varepsilon_{\mathrm{Gibbs}}(r).
  \end{equation*}
  The right-hand side is independent of $A$.
  Taking the supremum over all nonempty $A$ with $\diam(A)\le r$ establishes the sum of the three error terms in the main-text error budget.
  Moreover, the trace norm of the difference between any two reduced density operators is at most $2$.
  Taking the minimum with this universal bound proves the first claim.

  To prove the finite-radius bound, suppose that $\ell_U(r)\le\ell$.
  Then $B_U(A)\subseteq A^{(\ell)}$ for every admissible output region $A$, and Eq.~\eqref{eq:supp-neighborhood-diameter} gives
  \begin{equation*}
    \diam\!\left(A^{(\ell)}\right)
    \le
    \diam(A)+2\ell
    \le
    r+2\ell.
  \end{equation*}
  Applying Eq.~\eqref{eq:supp-backward-region} to the pair $\widetilde\sigma_C,\rho_C$, using $B_U(A)\subseteq A^{(\ell)}$, and then applying contractivity under the partial trace gives
  \begin{equation*}
    \begin{aligned}
      \norm{
        [U\widetilde\sigma_CU^\dagger-U\rho_CU^\dagger]_A
      }_1
      &\le
      \norm{
        [\widetilde{\sigma}_C-\rho_C]_{B_U(A)}
      }_1
      \\
      &\le
      \norm{
        [\widetilde{\sigma}_C-\rho_C]_{A^{(\ell)}}
      }_1
      \\
      &\le
      D_{r+2\ell}(\widetilde{\sigma}_C,\rho_C),
    \end{aligned}
  \end{equation*}
  and, by the same argument,
  \begin{equation*}
    \norm{
      [U\rho_RU^\dagger-U\rho_CU^\dagger]_A
    }_1
    \le
    D_{r+2\ell}(\rho_R,\rho_C).
  \end{equation*}
  Taking the supremum over $A$ therefore gives
  \begin{equation*}
    \varepsilon_{\mathrm{core}}(r)
    \le
    D_{r+2\ell}(\widetilde\sigma_C,\rho_C),
    \qquad
    \varepsilon_{\mathrm{Gibbs}}(r)
    \le
    D_{r+2\ell}(\rho_R,\rho_C).
  \end{equation*}
  Substituting these two estimates into the first error budget proves the second inequality in Theorem~II.6.
\end{proof}

\subsection{Quasi-local tails}
\label{sec:supp-quasilocal-tails}

For a tail without a strict causal cone, define its worst-case best local approximation error by
\begin{equation}
  \begin{split}
    \eta_U(r,L)
    :=\sup_{\substack{\varnothing\ne A\subseteq\Lambda,\ \diam(A)\le r\\
    O_A=O_A^\dagger,\ \norm{O_A}_\infty\le1}}
    \inf_{\substack{Q=Q^\dagger,\ \supp(Q)\subseteq A^{(L)}\\
    \norm{Q}_\infty\le1}}
    \norm{U^\dagger O_AU-Q}_\infty.
    \label{eq:supp-leakage}
  \end{split}
\end{equation}
The order of the two optimizations in Eq.~\eqref{eq:supp-leakage} is important.
For fixed $A$ and $O_A$, the inner infimum selects the best Hermitian contraction $Q$ supported on $A^{(L)}$, so $Q$ may depend on both.
The outer supremum then takes the worst case over all output regions of diameter at most $r$ and all admissible local observables.
Thus $\eta_U(r,L)$ is the worst-case error remaining after each backward-evolved observable is given its own best approximation supported within $A^{(L)}$.
It vanishes when every such backward-evolved observable is supported within $A^{(L)}$ and is nonincreasing in $L$, because enlarging the neighborhood enlarges the set of admissible approximants.

\begin{proposition}[Quasi-local propagation]
  \label{prop:supp-quasilocal}
  For every $L\ge0$,
  \begin{equation}
    D_r(U\rho U^\dagger,U\sigma U^\dagger)
    \le\min\{2,D_{r+2L}(\rho,\sigma)+2\eta_U(r,L)\}.
    \label{eq:supp-quasilocal}
  \end{equation}
  The right-hand side may be minimized over $L$.
\end{proposition}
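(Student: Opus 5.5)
The plan is to repeat the duality argument from the proof of Proposition~II.5, with the exact pulled-back support replaced by the best approximant in the definition of $\eta_U(r,L)$. Fix a nonempty output region $A$ with $\diam(A)\le r$ and write $\Delta=\rho-\sigma$. By trace-norm duality, Eq.~\eqref{eq:supp-trace-duality}, we have $\norm{[U\Delta U^\dagger]_A}_1=\sup_{O_A}\abs{\Tr(U^\dagger\widehat O_AU\,\Delta)}$, where $O_A$ ranges over Hermitian contractions on $A$ and $\widehat O_A=O_A\otimes\id_{A^c}$. As before, partial-trace identities and cyclicity move the tail onto the observable.

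Fix such an $O_A$ and any $\epsilon>0$. By definition of the inner infimum in Eq.~\eqref{eq:supp-leakage}, we can choose a Hermitian contraction $Q$ supported on $A^{(L)}$ with $\norm{U^\dagger\widehat O_AU-Q}_\infty\le\eta_U(r,L)+\epsilon$. Split the expectation as
\begin{equation*}
\Tr(U^\dagger\widehat O_AU\Delta)=\Tr(Q\Delta)+\Tr\!\left[(U^\dagger\widehat O_AU-Q)\Delta\right].
\end{equation*}
For the second term, use Hölder: $\abs{\Tr(X\Delta)}\le\norm{X}_\infty\norm{\Delta}_1\le2\norm{X}_\infty$, since $\norm{\rho-\sigma}_1\le2$. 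For the first term, write $Q=Q_{A^{(L)}}\otimes\id$ and apply duality on $A^{(L)}$ to get $\abs{\Tr(Q\Delta)}\le\norm{[\rho-\sigma]_{A^{(L)}}}_1$. The diameter estimate Eq.~\eqref{eq:supp-neighborhood-diameter} gives $\diam(A^{(L)})\le r+2L$, so this is at most $D_{r+2L}(\rho,\sigma)$.

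Combining the two terms gives $\abs{\Tr(U^\dagger\widehat O_AU\Delta)}\le D_{r+2L}(\rho,\sigma)+2\eta_U(r,L)+2\epsilon$. We then let $\epsilon\to0$, take the supremum over $O_A$, and take the supremum over admissible $A$. The remaining bound of $2$ holds because any two reduced density operators are at most trace distance $2$ apart. Since the inequality holds for every $L\ge0$, the right-hand side may be minimized over $L$.

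The only delicate point is the approximant. The infimum need not be attained, which the $\epsilon$ argument handles; in finite dimensions, compactness actually yields a minimizer. The factor $2$ comes from bounding $\norm{\Delta}_1$ by $2$ rather than by a local quantity. If $A^{(L)}$ is the whole lattice, the claim still holds because $D_{r+2L}$ then covers the full system.
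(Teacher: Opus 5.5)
Your proposal is correct and follows the paper's proof essentially step for step. The paper likewise picks an $\varepsilon$-optimal Hermitian contraction $Q$ supported on $A^{(L)}$, bounds $\abs{\Tr(Q\Delta)}$ by $D_{r+2L}(\rho,\sigma)$ using $\diam(A^{(L)})\le r+2L$, and bounds the leakage term via H\"older's inequality and $\norm{\Delta}_1\le 2$. It then lets $\varepsilon\to0$, takes suprema over $O_A$ and $A$, and adds the universal bound $2$.
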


\begin{proof}
  Fix a nonempty output region $A$ with $\diam(A)\le r$ and a Hermitian contraction $O_A$ supported on $A$.
  We identify $O_A$ with its extension $O_A\otimes\id_{A^c}$ to the full system and set $\Delta=\rho-\sigma$.
  By cyclicity of the trace,
  \begin{equation*}
    \Tr\!\left[O_A\left(U\rho U^\dagger-U\sigma U^\dagger\right)\right]
    =
    \Tr\!\left(U^\dagger O_AU\Delta\right).
  \end{equation*}
  Our goal is to bound the absolute value of this quantity.

  \medskip
  \noindent\textit{(Step 1: Choose an approximately optimal local representative.)}
  For this fixed $A$ and $O_A$, define the minimal leakage beyond distance $L$ by
  \begin{equation*}
    \delta_{A,O_A}(L)
    =
    \inf_{\substack{Q=Q^\dagger,\,\supp Q\subseteq A^{(L)}\\
    \norm{Q}_\infty\le1}}
    \norm{U^\dagger O_AU-Q}_\infty.
  \end{equation*}
  For every $\varepsilon>0$, the definition of the infimum provides an admissible $Q$ such that
  \begin{equation*}
    \norm{U^\dagger O_AU-Q}_\infty
    \le
    \delta_{A,O_A}(L)+\varepsilon.
  \end{equation*}
  Since $A$ and $O_A$ are admissible in the outer supremum defining $\eta_U(r,L)$,
  \begin{equation*}
    \delta_{A,O_A}(L)\le\eta_U(r,L).
  \end{equation*}
  Let $E=U^\dagger O_AU-Q$.
  Then
  \begin{equation}
    \norm{E}_\infty\le\eta_U(r,L)+\varepsilon
    \label{eq:supp-leakage-bound}
  \end{equation}
  and
  \begin{equation*}
    \Tr\!\left(U^\dagger O_AU\Delta\right)
    =
    \Tr\!\left(Q\Delta\right)
    +
    \Tr\!\left(E\Delta\right).
  \end{equation*}
  By the triangle inequality,
  \begin{equation*}
    \abs{\Tr\!\left(U^\dagger O_AU\Delta\right)}
    \le
    \abs{\Tr\!\left(Q\Delta\right)}
    +
    \abs{\Tr\!\left(E\Delta\right)}.
  \end{equation*}
  The first term on the right-hand side represents the contribution of the local representative $Q$, while the second term captures the leakage contribution from the nonlocal part $E$.
  Next, we bound the two terms on the right-hand side separately.

  \medskip
  \noindent\textit{(Step 2: Bound the contribution of the local representative.)}
  Let $B=A^{(L)}$.
  For any $x,x'\in B$, choose $y,y'\in A$ such that $\dist(x,y)\le L$ and $\dist(x',y')\le L$.
  The triangle inequality gives
  \begin{equation*}
    \dist(x,x')
    \le
    \dist(x,y)+\dist(y,y')+\dist(y',x')
    \le
    \diam(A)+2L.
  \end{equation*}
  Therefore,
  \begin{equation*}
    \diam(B)
    =
    \diam\!\left(A^{(L)}\right)
    \le
    \diam(A)+2L
    \le
    r+2L.
  \end{equation*}
  Because $Q$ is a Hermitian contraction supported on $B$, trace-norm duality and the definition of $D_{r+2L}$ yield
  \begin{equation*}
    \begin{aligned}
      \abs{\Tr(Q\Delta)}
      &=
      \abs{\Tr_B\!\left[Q_B(\rho_B-\sigma_B)\right]}
      \\
      &\le
      \norm{Q_B}_\infty\norm{\rho_B-\sigma_B}_1
      \\
      &\le
      D_{r+2L}(\rho,\sigma).
    \end{aligned}
  \end{equation*}
  In the second line we used H\"older's inequality $\abs{\Tr(XY)}\le\norm{X}_\infty\norm{Y}_1$ for any operators $X$ and $Y$.

  \medskip
  \noindent\textit{(Step 3: Bound the leakage contribution.)}
  Since $\rho$ and $\sigma$ are states (i.e., density matrices), the triangle inequality gives $\norm{\Delta}_1\le\norm{\rho}_1+\norm{\sigma}_1=2$.
  H\"older's inequality together with the bound on $\norm{E}_\infty$ [cf.~Eq.~\eqref{eq:supp-leakage-bound}] then gives
  \begin{equation*}
    \begin{aligned}
      \abs{\Tr(E\Delta)}
      &\le
      \norm{E}_\infty\norm{\Delta}_1
      \\
      &\le
      2\norm{E}_\infty
      \\
      &\le
      2\eta_U(r,L)+2\varepsilon.
    \end{aligned}
  \end{equation*}

  \medskip
  \noindent\textit{(Step 4: Combine the bounds.)}
  Using $U^\dagger O_AU=Q+E$ and the preceding bounds, we obtain
  \begin{equation*}
    \abs{\Tr\!\left[O_A\left(U\rho U^\dagger-U\sigma U^\dagger\right)\right]}
    \le
    D_{r+2L}(\rho,\sigma)
    +2\eta_U(r,L)
    +2\varepsilon.
  \end{equation*}
  Letting $\varepsilon\to 0^+$ removes the final term.
  Taking the supremum over all Hermitian contractions $O_A$ and using trace-norm duality gives
  \begin{equation*}
    \norm{\left[U\rho U^\dagger-U\sigma U^\dagger\right]_A}_1
    \le
    D_{r+2L}(\rho,\sigma)+2\eta_U(r,L).
  \end{equation*}
  The right-hand side is independent of $A$.
  Taking the supremum over all nonempty $A$ with $\diam(A)\le r$ therefore yields
  \begin{equation*}
    D_r(U\rho U^\dagger,U\sigma U^\dagger)
    \le
    D_{r+2L}(\rho,\sigma)+2\eta_U(r,L).
  \end{equation*}

  \medskip
  \noindent\textit{(Step 5: Include the universal trace-distance bound.)}
  Finally, every reduced state is a density operator, so independently,
  \begin{equation*}
    D_r(U\rho U^\dagger,U\sigma U^\dagger)\le2.
  \end{equation*}
  Combining this universal bound with the estimate from Step 4 proves Eq.~\eqref{eq:supp-quasilocal}.
  Because the inequality holds for every $L\ge0$, one may finally take the infimum of its right-hand side over $L$.
\end{proof}

The physical-frame error decomposition of Theorem~II.6 in the main text remains valid for a quasi-local tail, but its conversion to finite core-frame regions incurs a localization error.
Applying Proposition~\ref{prop:supp-quasilocal} only once to the pair $(\widetilde\sigma_C,\rho_R)$ avoids paying that error separately for the core-preparation and Gibbs-modeling comparisons.
The result is
\begin{equation}
  \begin{split}
    D_r(\widetilde\rho_{\mathrm{out}},\rhoG{H})
    \le\min\Big\{2,\varepsilon_{\mathrm{tail}}(r)+\inf_{L\ge0}\big[{}
        &D_{r+2L}(\widetilde\sigma_C,\rho_C)\\
    &+D_{r+2L}(\rho_C,\rho_R)+2\eta_U(r,L)\big]\Big\}.
    \label{eq:supp-quasilocal-budget}
  \end{split}
\end{equation}
To verify Eq.~\eqref{eq:supp-quasilocal-budget}, compare the implemented output directly with the ideal tail applied to the exact residual-corrected core state.
The triangle inequality and unitary covariance of the Gibbs state give
\begin{equation*}
  \begin{aligned}
    D_r\!\left(
      \widetilde{\rho}_{\mathrm{out}},
      \rhoG{H}
    \right)
    &\le
    \varepsilon_{\mathrm{tail}}(r)
    +
    D_r\!\left(
      U\widetilde{\sigma}_CU^\dagger,
      U\rho_RU^\dagger
    \right).
  \end{aligned}
\end{equation*}
Applying Proposition~\ref{prop:supp-quasilocal} once to the pair $(\widetilde{\sigma}_C,\rho_R)$ yields, for every $L\ge0$,
\begin{equation*}
  \begin{aligned}
    D_r\!\left(
      U\widetilde{\sigma}_CU^\dagger,
      U\rho_RU^\dagger
    \right)
    &\le
    D_{r+2L}(\widetilde{\sigma}_C,\rho_R)
    +2\eta_U(r,L)
    \\
    &\le
    D_{r+2L}(\widetilde{\sigma}_C,\rho_C)
    +D_{r+2L}(\rho_C,\rho_R)
    +2\eta_U(r,L).
  \end{aligned}
\end{equation*}
The second line inserts the ideal core Gibbs state $\rho_C$ and uses the triangle inequality.
The leakage penalty is therefore incurred only once, rather than separately for the core-preparation and Gibbs-modeling comparisons.
Taking the infimum over $L$ and then using the universal bound $D_r\le2$ proves Eq.~\eqref{eq:supp-quasilocal-budget}.

\section{Thermal response and the imported stability theorem}
\label{sec:supp-response}

This section supplies the response-theoretic details used in the main text.
The Kubo--Mori identity is standard~\cite{Kubo1957,Mori1965}.
The response-profile bounds proved below are method-independent consequences of that identity and of elementary spatial accumulation bookkeeping; they are not new versions of the stability theorem of \citet{CapelEtAl2025}.
The latter theorem is used as an imported result, without reproducing its quantum-belief-propagation proof.

\subsection{Proof of Proposition~II.7}
\label{sec:supp-response-identity-proof}

\begin{proof}
  Adopt the notation of Proposition~II.7 and Eqs.~(II.17)--(II.20) of the main text.
  Let $G_s=\ee^{-\beta H_s}$ and $Z_s=\Tr(G_s)$, so that $\rho_s=G_s/Z_s$.
  Duhamel's formula gives
  \begin{equation}
    \frac{dG_s}{ds}
    =-\beta\int_0^1
    \ee^{-\beta(1-\tau)H_s}R\ee^{-\beta\tau H_s}\,d\tau.
    \label{eq:supp-duhamel}
  \end{equation}
  Taking the trace and using cyclicity yields
  \begin{equation}
    \frac{dZ_s}{ds}
    =
    -\beta\Tr(RG_s)
    =
    -\beta Z_s\langle R\rangle_s.
    \label{eq:supp-partition-derivative}
  \end{equation}
  For the unnormalized numerator $N_O(s)=\Tr(OG_s)$, Eq.~\eqref{eq:supp-duhamel}, the substitution $\tau\mapsto1-\tau$, and cyclicity give
  \begin{equation}
    \frac{dN_O(s)}{ds}
    =-\beta Z_s\int_0^1
    \Tr(\rho_s^\tau R\rho_s^{1-\tau}O)\,d\tau.
    \label{eq:supp-numerator-derivative}
  \end{equation}
  Differentiating $\langle O\rangle_s=N_O(s)/Z_s$ and inserting Eqs.~\eqref{eq:supp-partition-derivative} and \eqref{eq:supp-numerator-derivative} therefore gives
  \begin{equation*}
    \begin{aligned}
      \frac{d}{ds}\langle O\rangle_s
      &=-\beta\int_0^1
      \Tr(\rho_s^\tau R\rho_s^{1-\tau}O)\,d\tau
      +\beta\langle R\rangle_s\langle O\rangle_s
      \\
      &=-\beta\cK_s(R,O),
    \end{aligned}
  \end{equation*}
  which proves the response identity.

  For the remaining properties, let $V$ and $O$ be arbitrary Hermitian operators.
  Bilinearity of $\cK_s$ follows directly from its definition in Eq.~(II.20) of the main text.
  To prove symmetry, substitute $u=1-\tau$ in the integral and use cyclicity of the trace to obtain $\cK_s(V,O)=\cK_s(O,V)$.
  For Hermitian $V$ and $O$, complex conjugation of the integral interchanges the two arguments, so symmetry also implies that $\cK_s(V,O)$ is real.
  Moreover, the same definition gives $\cK_s(V+c\id,O+d\id)=\cK_s(V,O)$ for all real $c$ and $d$.

  It remains to prove positivity and the spectral-half-width bound.
  Diagonalize $\rho_s=\sum_ip_i\ket{i}\!\bra{i}$ and set $\widetilde V=V-\langle V\rangle_s\id$.
  Direct evaluation gives
  \begin{equation}
    \cK_s(V,V)
    =
    \sum_{i,j}L(p_i,p_j)\abs{\widetilde V_{ij}}^2,
    \label{eq:supp-log-mean-form}
  \end{equation}
  where the logarithmic mean is
  \begin{equation*}
    L(x,y)
    =
    \begin{cases}
      \dfrac{x-y}{\log x-\log y},&x\ne y,\\[0.6em]
      x,&x=y.
    \end{cases}
  \end{equation*}
  Every Gibbs eigenvalue $p_i$ is positive and thus $L(p_i,p_j)>0$, so Eq.~\eqref{eq:supp-log-mean-form} proves positive semidefiniteness.
  The Cauchy--Schwarz inequality for a positive semidefinite real bilinear form consequently gives
  \begin{equation}
    \abs{\cK_s(V,O)}^2
    \le
    \cK_s(V,V)\cK_s(O,O).
    \label{eq:supp-km-cauchy-schwarz}
  \end{equation}
  For every Hermitian $X$, set $\widetilde X=X-\langle X\rangle_s\id$.
  The scalar inequality $L(x,y)\le(x+y)/2$ and Eq.~\eqref{eq:supp-log-mean-form} give
  \begin{equation}
    \begin{aligned}
      \cK_s(X,X)
      &\le
      \frac12\sum_{i,j}(p_i+p_j)\abs{\widetilde X_{ij}}^2
      \\
      &=\Tr(\rho_s\widetilde X^2)
      \\
      &=\Tr(\rho_sX^2)-\langle X\rangle_s^2
      \le\Tr(\rho_sX^2)
      \le\norm{X}_\infty^2.
    \end{aligned}
    \label{eq:supp-km-variance}
  \end{equation}
  In the first equality, the contributions weighted by $p_i$ and $p_j$ coincide after interchanging the dummy indices because $\abs{\widetilde X_{ij}}^2=\abs{\widetilde X_{ji}}^2$.
  The final inequality follows from $X^2\le\norm{X}_\infty^2\id$.
  Applying Eqs.~\eqref{eq:supp-km-cauchy-schwarz} and \eqref{eq:supp-km-variance} to $V-c\id$ and $O-d\id$ gives
  \begin{equation*}
    \abs{\cK_s(V,O)}
    \le
    \norm{V-c\id}_\infty\norm{O-d\id}_\infty.
  \end{equation*}
  Taking the infimum independently over $c,d\in\mathbb R$ proves the spectral-half-width bound and completes the proof of Proposition~II.7.
\end{proof}

\subsection{Proof of Corollary~II.8}
\label{sec:supp-width-bound-proof}

\begin{proof}
  Adopt the notation of Corollary~II.8 and let $\Delta=\rho_1-\rho_0$ and $\Delta_B=[\rho_1-\rho_0]_B$.
  Trace-norm contractivity of the partial trace on Hermitian operators gives $\norm{\Delta_B}_1\le\norm{\Delta}_1$, while normalization and the triangle inequality give $\norm{\Delta}_1\le2$.
  For any Hermitian contraction $O$, integrating Eq.~(II.19) of the main text and applying the spectral-half-width bound of Proposition~II.7 give
  \begin{equation*}
    \begin{aligned}
      \abs{\Tr(O\Delta)}
      &\le
      \beta\int_0^1\abs{\cK_s(R,O)}\,ds
      \\
      &\le
      \beta w(R)w(O)
      \le
      \beta w(R).
    \end{aligned}
  \end{equation*}
  Trace-norm duality for the Hermitian operator $\Delta$ therefore yields $\norm{\Delta}_1\le\beta w(R)$.
  Combining the three estimates proves Corollary~II.8.
\end{proof}

\subsection{Proof of Theorem~II.9}
\label{sec:supp-localized-residual-proof}

\begin{proof}
  Adopt the notation of Theorem~II.9 and set $c_R=\sum_Xc_X^\star$, so that $R=c_R\id+\sum_X\bar r_X$.
  Let $\Delta_B=[\rho_1-\rho_0]_B$, and fix a Hermitian contraction $O_B$.
  Eq.~(II.18) of the main text and the response identity in Eq.~(II.19) give
  \begin{equation*}
    \Tr(O_B\Delta_B)
    =
    -\beta\int_0^1\cK_s(R,O_B)\,ds.
  \end{equation*}
  Bilinearity and scalar-shift invariance of $\cK_s$, together with $R=c_R\id+\sum_X\bar r_X$, imply
  \begin{equation*}
    \cK_s(R,O_B)
    =
    \sum_X\cK_s(\bar r_X,O_B).
  \end{equation*}
  Since the lattice is finite, the residual sum is finite, and hence the sum and path integral may be interchanged.
  Taking the absolute value and using the definition of $\chi_s$ yields
  \begin{equation*}
    \begin{aligned}
      \abs{\Tr(O_B\Delta_B)}
      &\le
      \beta\sum_X\int_0^1
      \abs{\cK_s(\bar r_X,O_B)}\,ds
      \\
      &\le
      \beta\sum_XJ_X\int_0^1\chi_s(X\to B)\,ds
      \\
      &=
      \beta\sum_XJ_X\bar\chi(X\to B).
    \end{aligned}
  \end{equation*}
  Trace-norm duality for the Hermitian operator $\Delta_B$ gives
  \begin{equation*}
    \norm{\Delta_B}_1
    =
    \sup_{\substack{O_B=O_B^\dagger\\\norm{O_B}_\infty\le1}}
    \abs{\Tr(O_B\Delta_B)},
  \end{equation*}
  so taking the supremum proves the residual-dependent estimate in Theorem~II.9.
  Finally, $\rho_{0,B}$ and $\rho_{1,B}$ are normalized density operators, and therefore $\norm{\Delta_B}_1\le2$.
  Combining the two estimates proves Theorem~II.9.
\end{proof}

\subsection{Proof of Theorem~II.10}
\label{sec:supp-shell-bound-proof}

\begin{proof}
  Adopt the notation and response-envelope hypothesis of Theorem~II.10.
  It is enough to bound the response sum in Theorem~II.9.
  Split that sum into supports that overlap $B$ and supports that are disjoint from $B$.

  For the overlapping supports, the universal bound $\bar\chi(X\to B)\le1$ gives
  \begin{equation*}
    \begin{aligned}
      \sum_{X:X\cap B\ne\varnothing}
      J_X\bar\chi(X\to B)
      &\le
      \sum_{X:X\cap B\ne\varnothing}J_X
      \\
      &\le
      \sum_{x\in B}\sum_{X\ni x}J_X
      \\
      &\le
      \varepsilon_R\abs{B}.
    \end{aligned}
  \end{equation*}
  The second inequality holds because every support $X$ with $X\cap B\ne\varnothing$ contains at least one site $x\in B$, so its weight $J_X$ appears at least once in the double sum.
  Then if $X$ intersects $B$ at several sites, it is counted several times, which only enlarges the upper bound.

  For each support $X$ disjoint from $B$, choose a nearest anchor $x_X\in X$ satisfying $\dist(x_X,B)=\dist(X,B)$.
  Such an anchor exists because the lattice is finite, and the assumed envelope then gives
  \begin{equation*}
    \bar\chi(X\to B)
    \le
    C_\beta(B)g(\dist(x_X,B)).
  \end{equation*}
  Grouping the residual terms by their nearest anchors therefore yields
  \begin{equation*}
    \begin{aligned}
      \sum_{X:X\cap B=\varnothing}
      J_X\bar\chi(X\to B)
      &\le
      C_\beta(B)
      \sum_{x\in\Lambda\setminus B}
      g(\dist(x,B))
      \sum_{\substack{X:X\cap B=\varnothing\\x_X=x}}J_X
      \\
      &\le
      C_\beta(B)\varepsilon_R
      \sum_{x\in\Lambda\setminus B}g(\dist(x,B)).
    \end{aligned}
  \end{equation*}
  The last inequality follows because $x_X=x$ implies $x\in X$, so the inner anchored sum is a sub-sum of $\sum_{X\ni x}J_X$, which itself is bounded by $\varepsilon_R$.

  Finally, partition $\Lambda\setminus B$ into the unit-width shells defined before Theorem~II.10 in the main text.
  If $x\in\mathcal S_B(n)$, then $\dist(x,B)\ge n$, and monotonicity of $g$ gives $g(\dist(x,B))\le g(n)$.
  Consequently,
  \begin{equation*}
    \sum_{x\in\Lambda\setminus B}g(\dist(x,B))
    \le
    \sum_{n=0}^\infty N_B(n)g(n)
    =
    \Sigma_g(B;\Lambda).
  \end{equation*}
  Adding the overlapping and disjoint estimates and inserting the result into Theorem~II.9 proves Theorem~II.10.
\end{proof}

\subsection{Proof of Corollary~II.11}
\label{sec:supp-volume-uniform-proof}

\begin{proof}
  Adopt the lattice sequence and uniform quantities defined in Corollary~II.11 and the preceding main-text discussion.
  Fix $N\ge1$ and a nonempty region $B\subseteq\Lambda_N$ with $\diam(B)\le r$.
  Choose any $x\in B$.
  Every $y\in B$ obeys $\dist(x,y)\le r$, so the definition of $v(r)$ gives $\abs{B}\le v(r)$.
  Applying Theorem~II.10 on $\Lambda_N$ and using the definitions of $C_{\beta,r}$ and $\Sigma_r$ yields
  \begin{equation*}
    \norm{[\rho_1^{(N)}-\rho_0^{(N)}]_B}_1
    \le
    \min\left\{2,
      \beta\varepsilon_R
      \left[v(r)+C_{\beta,r}\Sigma_r\right]
    \right\}.
  \end{equation*}
  The right-hand side is independent of both $B$ and $N$.
  Taking the supremum first over all nonempty $B\subseteq\Lambda_N$ with $\diam(B)\le r$ and then over $N$ proves Corollary~II.11.
\end{proof}

\subsection{Factorized paths and collective block response}
\label{sec:supp-factorized-paths}

This subsection supplies the factorized-path result summarized in Sec.~II.F of the main text.
Let $\{\Lambda_\alpha\}_{\alpha\in\mathcal I}$ be a fixed partition of $\Lambda$ into pairwise disjoint nonempty blocks, and assume that their size and diameter are bounded independently of the total volume:
\begin{equation}
  q_{\rm prod}
  :=
  \sup_{\alpha\in\mathcal I}\abs{\Lambda_\alpha}
  <
  \infty,
  \qquad
  a_{\rm prod}
  :=
  \sup_{\alpha\in\mathcal I}\diam(\Lambda_\alpha)
  <
  \infty.
  \label{eq:supp-product-block-bounds}
\end{equation}
Suppose that the complete comparison path has the block form
\begin{equation}
  \begin{aligned}
    H_C&=\sum_{\alpha\in\mathcal I}h_{C,\alpha},
    &
    R&=\sum_{\alpha\in\mathcal I}r_\alpha,
    \\
    H_s&=\sum_{\alpha\in\mathcal I}\bigl(h_{C,\alpha}+sr_\alpha\bigr),
    &
    \rho_s&=\bigotimes_{\alpha\in\mathcal I}\rho_{s,\alpha},
  \end{aligned}
  \label{eq:supp-product-path}
\end{equation}
where $\supp(h_{C,\alpha}),\supp(r_\alpha)\subseteq\Lambda_\alpha$ and $\rho_{s,\alpha}:=\rhoG{h_{C,\alpha}+sr_\alpha}$.
The operators within one block may be arbitrary and need not commute.
The essential assumption is that the same partition factorizes the complete path; factorization only at $s=0$ and $s=1$ does not control the intermediate response.
If several residual contributions occur within one block, combine them into the single block residual $r_\alpha$ before applying the following argument.
Center each block residual as in Eq.~(II.14) of the main text and write
\begin{equation}
  \bar r_\alpha
  =
  r_\alpha-c_\alpha^\star\id_{\Lambda_\alpha},
  \qquad
  J_\alpha
  =
  \norm{\bar r_\alpha}_\infty.
  \label{eq:supp-product-centered-residual}
\end{equation}
For an observed region $B$, let
\begin{equation}
  \mathcal I_B
  :=
  \{\alpha\in\mathcal I:\Lambda_\alpha\cap B\ne\varnothing\}
  \label{eq:supp-product-intersecting-blocks}
\end{equation}
denote the blocks intersecting $B$.

\begin{proposition}[Collective product-path bound]
  \label{prop:supp-product-collective}
  Under Eqs.~\eqref{eq:supp-product-block-bounds}--\eqref{eq:supp-product-intersecting-blocks}, every region $B$ satisfies
  \begin{equation}
    \begin{aligned}
      \norm{[\rho_1-\rho_0]_B}_1
      &\le
      \min\left\{2,
        \beta\int_0^1
        \left[
          \sum_{\alpha\in\mathcal I_B}
          \cK_s(\bar r_\alpha,\bar r_\alpha)
        \right]^{1/2}ds
      \right\}
      \\
      &\le
      \min\left\{2,
        \beta
        \left(
          \sum_{\alpha\in\mathcal I_B}J_\alpha^2
        \right)^{1/2}
      \right\}.
    \end{aligned}
    \label{eq:supp-product-collective}
  \end{equation}
\end{proposition}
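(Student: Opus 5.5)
We start from the exact response identity of Proposition~II.7 on the factorized path. Fix a Hermitian contraction $O_B$, extended trivially off $B$. Integrating Eq.~(II.19) gives
\begin{equation*}
\Tr(O_B\Delta_B)=-\beta\int_0^1\cK_s(R,O_B)\,ds .
\end{equation*}
Scalar-shift invariance and bilinearity then let us replace $R$ by the centered collective operator $\bar R=\sum_\alpha\bar r_\alpha$ and split the response into block contributions.

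The key structural step is to show that blocks not meeting $B$ contribute nothing. Because $\rho_s=\bigotimes_\alpha\rho_{s,\alpha}$, every power $\rho_s^\tau$ also factorizes over blocks. For $\alpha\notin\mathcal I_B$, the operators $\bar r_\alpha$ and $O_B$ act on disjoint tensor factors. Hence
\begin{equation*}
\Tr(\rho_s^\tau\bar r_\alpha\rho_s^{1-\tau}O_B)=\Tr(\rho_{s,\alpha}\bar r_\alpha)\,\langle O_B\rangle_s=\langle\bar r_\alpha\rangle_s\langle O_B\rangle_s ,
\end{equation*}
so $\cK_s(\bar r_\alpha,O_B)=0$. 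In the same way, $\cK_s(\bar r_\alpha,\bar r_{\alpha'})=0$ for $\alpha\ne\alpha'$, since these operators act on distinct factors.

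Next we apply Cauchy--Schwarz for the positive semidefinite form $\cK_s$ (Eq.~\eqref{eq:supp-km-cauchy-schwarz}) to the collective operator $\bar R_B:=\sum_{\alpha\in\mathcal I_B}\bar r_\alpha$. This gives
\begin{equation*}
|\cK_s(\bar R_B,O_B)|\le\cK_s(\bar R_B,\bar R_B)^{1/2}\,\cK_s(O_B,O_B)^{1/2}.
\end{equation*}
By the orthogonality established above, the first factor equals $\bigl[\sum_{\alpha\in\mathcal I_B}\cK_s(\bar r_\alpha,\bar r_\alpha)\bigr]^{1/2}$. The second factor is at most $\norm{O_B}_\infty\le1$ by Eq.~\eqref{eq:supp-km-variance}. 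Integrating over $s$, taking the supremum over $O_B$ via trace-norm duality, and combining with the trivial bound $2$ yields the first inequality. For the second inequality, Eq.~\eqref{eq:supp-km-variance} gives $\cK_s(\bar r_\alpha,\bar r_\alpha)\le\norm{\bar r_\alpha}_\infty^2=J_\alpha^2$ for every $s$, and the resulting bound is independent of $s$.

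The main obstacle is making the vanishing of cross terms rigorous for noncommuting blocks. The plan is to write $\rho_s^\tau=\bigotimes_\alpha\rho_{s,\alpha}^\tau$, which is valid because $H_s$ is a sum of commuting block terms for every $s$. The trace then factorizes into a product over blocks, so each mixed term reduces to a product of single-block expectations and cancels the subtracted covariance term exactly. It also has to be stressed that this uses factorization along the entire path, not only at $s=0$ and $s=1$.
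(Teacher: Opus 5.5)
Your proposal is correct and follows the paper's proof essentially step for step: centering via scalar-shift invariance, the vanishing of $\cK_s(\bar r_\alpha,O_B)$ for blocks not meeting $B$ and of the cross terms $\cK_s(\bar r_\alpha,\bar r_\gamma)$ for $\alpha\ne\gamma$ via the tensor factorization, Cauchy--Schwarz for the Kubo--Mori form with $\cK_s(O_B,O_B)\le1$, and then $\cK_s(\bar r_\alpha,\bar r_\alpha)\le J_\alpha^2$. Your point that the factorization must hold along the entire path, not only at $s=0$ and $s=1$, is the same caveat the paper states.
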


\begin{proof}
  Fix a Hermitian contraction $O_B$ supported on $B$.
  We first identify the block residuals that can contribute to its response.
  Since $R$ differs from $\sum_{\alpha\in\mathcal I}\bar r_\alpha$ only by a scalar, scalar-shift invariance gives
  \begin{equation*}
    \cK_s(R,O_B)
    =
    \sum_{\alpha\in\mathcal I}\cK_s(\bar r_\alpha,O_B).
  \end{equation*}
  If $\alpha\notin\mathcal I_B$, then $\Lambda_\alpha\cap B=\varnothing$, so $O_B$ acts as the identity on the factor block $\Lambda_\alpha$.
  Write $\rho_s=\rho_{s,\alpha}\otimes\rho_{s,\Lambda_\alpha^c}$, where $\rho_{s,\Lambda_\alpha^c}$ is the product state on all remaining blocks.
  Using Eq.~\eqref{eq:supp-product-path}, the trace in the Kubo--Mori covariance factorizes for every $\tau\in[0,1]$ as
  \begin{equation*}
    \begin{aligned}
      \Tr\!\left(\rho_s^\tau\bar r_\alpha\rho_s^{1-\tau}O_B\right)
      &={}
      \Tr_{\Lambda_\alpha}\!\left(
        \rho_{s,\alpha}^\tau\bar r_\alpha\rho_{s,\alpha}^{1-\tau}
      \right)
      \Tr_{\Lambda_\alpha^c}\!\left(\rho_{s,\Lambda_\alpha^c}O_B\right)
      \\
      &={}
      \Tr_{\Lambda_\alpha}(\rho_{s,\alpha}\bar r_\alpha)
      \Tr(\rho_sO_B)
      \\
      &={}
      \langle\bar r_\alpha\rangle_s\langle O_B\rangle_s.
    \end{aligned}
  \end{equation*}
  In the second line, cyclicity combines $\rho_{s,\alpha}^\tau$ and $\rho_{s,\alpha}^{1-\tau}$ into $\rho_{s,\alpha}$.
  The final expression is independent of $\tau$ and equals the disconnected term subtracted in Eq.~(II.20) of the main text.
  Hence
  \begin{equation}
    \cK_s(\bar r_\alpha,O_B)
    =
    0,
    \qquad
    \alpha\notin\mathcal I_B.
    \label{eq:supp-product-outside-zero}
  \end{equation}
  Define the centered residual on the blocks touched by $B$ by
  \begin{equation*}
    R_{\mathcal I_B}
    :=
    \sum_{\alpha\in\mathcal I_B}\bar r_\alpha.
  \end{equation*}
  Eq.~\eqref{eq:supp-product-outside-zero} gives $\cK_s(R,O_B)=\cK_s(R_{\mathcal I_B},O_B)$, so Eq.~(II.19) of the main text becomes
  \begin{equation*}
    \abs{\frac{d}{ds}\langle O_B\rangle_s}
    =
    \beta\abs{\cK_s(R_{\mathcal I_B},O_B)}.
  \end{equation*}
  The Cauchy--Schwarz inequality for the Kubo--Mori covariance and Eq.~(II.21) of the main text imply
  \begin{equation*}
    \abs{\cK_s(R_{\mathcal I_B},O_B)}
    \le
    \cK_s(R_{\mathcal I_B},R_{\mathcal I_B})^{1/2}
    \cK_s(O_B,O_B)^{1/2}
    \le
    \cK_s(R_{\mathcal I_B},R_{\mathcal I_B})^{1/2}.
  \end{equation*}
  Expanding the remaining quadratic form gives
  \begin{equation*}
    \cK_s(R_{\mathcal I_B},R_{\mathcal I_B})
    =
    \sum_{\alpha,\gamma\in\mathcal I_B}
    \cK_s(\bar r_\alpha,\bar r_\gamma).
  \end{equation*}
  If $\alpha\ne\gamma$, product factorization gives
  \begin{equation*}
    \Tr\!\left(\rho_s^\tau\bar r_\alpha\rho_s^{1-\tau}\bar r_\gamma\right)
    =
    \langle\bar r_\alpha\rangle_s\langle\bar r_\gamma\rangle_s
  \end{equation*}
  for every $\tau\in[0,1]$, and therefore $\cK_s(\bar r_\alpha,\bar r_\gamma)=0$.
  Thus
  \begin{equation*}
    \cK_s(R_{\mathcal I_B},R_{\mathcal I_B})
    =
    \sum_{\alpha\in\mathcal I_B}
    \cK_s(\bar r_\alpha,\bar r_\alpha).
  \end{equation*}
  Combining these identities gives
  \begin{equation*}
    \abs{\frac{d}{ds}\langle O_B\rangle_s}
    \le
    \beta
    \left[
      \sum_{\alpha\in\mathcal I_B}
      \cK_s(\bar r_\alpha,\bar r_\alpha)
    \right]^{1/2}.
  \end{equation*}
  Integrating over $s$, taking the supremum over Hermitian contractions $O_B$, and using trace-norm duality proves the first residual-dependent bound in Eq.~\eqref{eq:supp-product-collective}.
  Eq.~(II.21) and Eq.~\eqref{eq:supp-product-centered-residual} give
  \begin{equation*}
    \cK_s(\bar r_\alpha,\bar r_\alpha)
    \le
    w(\bar r_\alpha)^2
    =
    J_\alpha^2.
  \end{equation*}
  The integration interval has unit length, so substitution proves the second residual-dependent bound.
  Finally, the trace norm of the difference between two reduced states is at most $2$, which completes the proof.
\end{proof}

The same factorization realizes Eq.~(II.25) of the main text with an exact finite-range envelope.
Writing $X_\alpha:=\supp(r_\alpha)$, if $\dist(X_\alpha,B)>a_{\rm prod}$, then $\Lambda_\alpha$ cannot intersect $B$, and Eq.~\eqref{eq:supp-product-outside-zero} gives $\bar\chi(X_\alpha\!\to\!B)=0$.
Together with the universal bound $\bar\chi\le1$ stated after Eq.~(II.23), this yields $C_\beta(B)=1$ and
\begin{equation}
  g_{\rm prod}(r)
  =
  \begin{cases}
    1,&r\le a_{\rm prod},
    \\
    0,&r>a_{\rm prod}.
  \end{cases}
  \label{eq:supp-product-envelope}
\end{equation}
The associated shell sum contains only sites within the fixed distance $a_{\rm prod}$ of $B$ and is therefore volume uniform whenever fixed-radius neighborhoods have uniformly bounded cardinality across the lattice sequence.
Proposition~\ref{prop:supp-product-collective} is stronger than applying the main-text shell theorem term by term because it retains cross-block orthogonality and combines the contributing block strengths through a root-sum-square bound.

Single-site factors are recovered by taking $\Lambda_\alpha$ to contain one site, while a disjoint dimer covering provides the simplest nonsingleton example.
Each block Hamiltonian and residual may contain strong classical or quantum correlations internally, but an observable can respond only to residuals in the blocks it intersects.
The disjoint-block requirement is essential: overlapping local terms generally couple the proposed tensor factors, and commutativity alone does not restore product factorization.

\subsection{Classical diagonal paths and Dobrushin contraction}
\label{sec:supp-dobrushin-paths}

This subsection derives the classical response bound quoted in Sec.~II.F of the main text.
Assume that every $H_s=H_C+sR$ is diagonal in one fixed local product basis and that the centered residual terms $\bar r_X$ are diagonal in that basis.
Write a classical configuration as $\sigma=(\sigma_x)_{x\in\Lambda}$ and let $H_s(\sigma)$ denote the corresponding diagonal matrix element.
The associated Gibbs law is
\begin{equation}
  \mu_s(\sigma)
  =
  \frac{\ee^{-\beta H_s(\sigma)}}{Z_s},
  \qquad
  Z_s
  =
  \sum_\sigma\ee^{-\beta H_s(\sigma)}.
  \label{eq:supp-classical-law}
\end{equation}
Replacing a test observable $O_B$ by its diagonal part leaves its Kubo--Mori pairing with $\bar r_X$ unchanged and cannot increase its operator norm.
The supremum defining the response profile may therefore be restricted to diagonal $O_B$, which can be identified with ordinary classical functions.
Because $\bar r_X$, $O_B$, and $\rho_s$ then commute, the Kubo--Mori covariance reduces to
\begin{equation}
  \cK_s(\bar r_X,O_B)
  =
  \operatorname{Cov}_{\mu_s}(\bar r_X,O_B).
  \label{eq:supp-classical-km-covariance}
\end{equation}
Local Gibbs certification is thus reduced to controlling the propagation of classical statistical influence.

For $i\ne j$, define the direct-influence coefficient
\begin{equation}
  C^{(s)}_{ij}
  =
  \sup_{\substack{\sigma,\sigma'\\
  \sigma_{\Lambda\setminus\{j\}}=\sigma'_{\Lambda\setminus\{j\}}}}
  \left\|
  \mu_{s,i}(\,\cdot\mid\sigma_{\Lambda\setminus\{i\}})
  -
  \mu_{s,i}(\,\cdot\mid\sigma'_{\Lambda\setminus\{i\}})
  \right\|_{\rm TV},
  \label{eq:supp-Dobrushin-direct-influence}
\end{equation}
and set $C^{(s)}_{ii}=0$.
The two configurations agree everywhere except possibly at site $j$, so $C^{(s)}_{ij}$ measures the largest change in the conditional distribution at $i$ caused by changing the value at $j$.
Here $\norm{p-q}_{\rm TV}:=\frac12\sum_u\abs{p(u)-q(u)}$.
For a path-uniform estimate, assume that one nonnegative matrix $C^{\rm Dob}$ bounds all direct-influence matrices,
\begin{equation}
  C^{(s)}_{ij}
  \le
  C^{\rm Dob}_{ij}
  \qquad
  \text{for every }s\in[0,1],
  \label{eq:supp-Dobrushin-path-majorant}
\end{equation}
and impose the Dobrushin contraction condition
\begin{equation}
  \alpha
  :=
  \sup_i\sum_jC^{\rm Dob}_{ij}
  <
  1.
  \label{eq:supp-Dobrushin-contraction}
\end{equation}
The quantity $\alpha$ bounds the total direct influence entering any one site.
Because this is a worst-case sufficient condition, $\alpha\ge1$ makes the Dobrushin argument inconclusive but does not imply the absence of uniqueness or correlation decay.

One multiplication by $C^{\rm Dob}$ represents one step of influence propagation, so $(C^{\rm Dob})^m_{ij}$ sums the weights of $m$-step influence chains from $j$ to $i$.
The maximal row sum of $(C^{\rm Dob})^m$ is at most $\alpha^m$, and Eq.~\eqref{eq:supp-Dobrushin-contraction} therefore makes the total influence summable.
Define
\begin{equation}
  D^{\rm Dob}
  :=
  (I-C^{\rm Dob})^{-1}
  =
  \sum_{m=0}^{\infty}(C^{\rm Dob})^m.
  \label{eq:supp-Dobrushin-resolvent}
\end{equation}
The identity term describes zero-step influence, $C^{\rm Dob}$ describes direct influence, and higher powers describe influence transmitted through intermediate sites.

For a classical function $F$, define its coordinate oscillation by
\begin{equation}
  \delta_i(F)
  =
  \sup_{\substack{\sigma,\sigma'\\
  \sigma_{\Lambda\setminus\{i\}}=\sigma'_{\Lambda\setminus\{i\}}}}
  \abs{F(\sigma)-F(\sigma')}.
  \label{eq:supp-coordinate-oscillation}
\end{equation}
The Dobrushin--F\"ollmer covariance estimate gives
\begin{equation}
  \abs{\operatorname{Cov}_{\mu_s}(F,G)}
  \le
  \frac14
  \sum_{i,j}
  \delta_i(G)D^{\rm Dob}_{ij}\delta_j(F)
  \label{eq:supp-Follmer-covariance}
\end{equation}
for classical functions $F$ and $G$~\cite{Follmer1982,RebeschiniVanHandel2014}.
For $F=\bar r_X$ and a diagonal Hermitian contraction $G=O_B$, respectively,
\begin{equation}
  \delta_j(\bar r_X)
  \le
  2J_X\mathbf1_{j\in X},
  \qquad
  \delta_i(O_B)
  \le
  2\mathbf1_{i\in B}.
  \label{eq:supp-Dobrushin-oscillation-bounds}
\end{equation}
Changing a variable outside the relevant support has no effect, while changing a variable inside it can change a function by at most its full range.

Fix a diagonal Hermitian contraction $O_B$ and suppose first that $J_X>0$.
Eqs.~\eqref{eq:supp-classical-km-covariance}, \eqref{eq:supp-Follmer-covariance}, and \eqref{eq:supp-Dobrushin-oscillation-bounds} give
\begin{equation*}
  \abs{\cK_s(\bar r_X,O_B)}
  \le
  J_X
  \sum_{i\in B}\sum_{j\in X}D^{\rm Dob}_{ij}.
\end{equation*}
After division by $J_X$, the right-hand side is independent of both $O_B$ and $s$.
Taking the supremum over $O_B$ and integrating over the path therefore yields
\begin{equation}
  \bar\chi(X\!\to\!B)
  \le
  \sum_{i\in B}\sum_{j\in X}D^{\rm Dob}_{ij}.
  \label{eq:supp-Dobrushin-response-profile}
\end{equation}
The same conclusion holds when $J_X=0$ by the zero-profile convention following Eq.~(II.23) of the main text.

Multiplying Eq.~\eqref{eq:supp-Dobrushin-response-profile} by $J_X$, summing over the residual supports, and interchanging the finite sums gives
\begin{equation*}
  \sum_XJ_X\bar\chi(X\!\to\!B)
  \le
  \sum_{i\in B}\sum_jD^{\rm Dob}_{ij}
  \sum_{X\ni j}J_X.
\end{equation*}
The incident-strength assumption in Eq.~(II.16) of the main text implies
\begin{equation*}
  \sum_XJ_X\bar\chi(X\!\to\!B)
  \le
  \varepsilon_R
  \sum_{i\in B}\sum_jD^{\rm Dob}_{ij}.
\end{equation*}
The Neumann expansion in Eq.~\eqref{eq:supp-Dobrushin-resolvent} and the row-sum bound give
\begin{equation}
  \sum_jD^{\rm Dob}_{ij}
  \le
  \sum_{m=0}^{\infty}\alpha^m
  =
  \frac{1}{1-\alpha}.
  \label{eq:supp-Dobrushin-row-sum}
\end{equation}
Consequently,
\begin{equation}
  \sum_XJ_X\bar\chi(X\!\to\!B)
  \le
  \frac{\varepsilon_R\abs{B}}{1-\alpha}.
  \label{eq:supp-Dobrushin-weighted-response}
\end{equation}
Substitution into Theorem~II.9 of the main text proves
\begin{equation}
  \norm{[\rho_1-\rho_0]_B}_1
  \le
  \min\left\{2,
    \frac{\beta\varepsilon_R\abs{B}}{1-\alpha}
  \right\}.
  \label{eq:supp-Dobrushin-local-bound}
\end{equation}
At fixed $\beta$, this bound is independent of the total volume whenever $\abs{B}$, $\varepsilon_R$, and a common upper bound $\alpha<1$ are controlled uniformly along the lattice sequence.

The same argument can be converted into the termwise decay envelope of Sec.~II.E when $C^{\rm Dob}$ has finite influence range $a_{\rm Dob}$, meaning $C^{\rm Dob}_{ij}=0$ whenever $\dist(i,j)>a_{\rm Dob}$.
An $m$-step influence chain can then connect $j$ to $i$ only if $m\ge\lceil\dist(i,j)/a_{\rm Dob}\rceil$, so
\begin{equation}
  D^{\rm Dob}_{ij}
  \le
  \frac{\alpha^{\lceil\dist(i,j)/a_{\rm Dob}\rceil}}{1-\alpha}.
  \label{eq:supp-Dobrushin-spatial-decay}
\end{equation}
For disjoint $X$ and $B$, Eq.~\eqref{eq:supp-Dobrushin-response-profile} and $\abs{X}\le q$ therefore give Eq.~(II.25) of the main text with
\begin{equation}
  C_\beta(B)
  =
  \frac{q\abs{B}}{1-\alpha},
  \qquad
  g(r)
  =
  \alpha^{\lceil r/a_{\rm Dob}\rceil}.
  \label{eq:supp-Dobrushin-envelope}
\end{equation}
The direct matrix bound in Eq.~\eqref{eq:supp-Dobrushin-local-bound} is sharper for the final certificate because it sums the influence rows before replacing them by a scalar distance envelope.
The finite-range assumption is needed only for Eq.~\eqref{eq:supp-Dobrushin-envelope}, not for the direct bound.

As an explicit illustration of both the criterion and its limitation, consider the zero-field nearest-neighbor ferromagnetic Ising chain with coupling $J>0$.
Changing one neighboring spin gives the standard influence bound $C^{(s)}_{ij}\le\tanh(\beta J)$~\cite{Georgii2011}, so an interior site obeys the elementary Dobrushin test
\begin{equation}
  2\tanh(\beta J)
  <
  1,
  \qquad\text{or}\qquad
  \beta J
  <
  \frac12\log3.
  \label{eq:supp-Ising-Dobrushin-test}
\end{equation}
Thus the Dobrushin route certifies the required response bound for this example when $\beta J<\frac12\log3$; outside this interval, the criterion is inconclusive and another response estimate would be needed.

\subsection{Dictionary for Theorem 34 of Capel \emph{et al.}}
\label{sec:supp-capel-dictionary}

We now give the exact interface to Ref.~\cite[Theorem~34]{CapelEtAl2025}.
Let $\nu\in\mathbb N$, and let $\Lambda\Subset\mathbb Z^\nu$ carry the $\ell^1$ metric and a fixed-dimensional onsite spin space.
For this finite-range specialization, assume in addition that every residual support satisfies $\diam(X)\le a_R$ for a common finite $a_R$.
Physically, an interaction $\Psi$ is the collection of local energy terms that defines the lattice model, including onsite fields and couplings among groups of sites.
For each finite $X\subseteq\Lambda$, the term $\Psi(X)$ is Hermitian and supported on $X$, and the finite-volume Hamiltonian is their sum,
\begin{equation}
  H_\Lambda(\Psi)=\sum_{X\subseteq\Lambda}\Psi(X).
\end{equation}
For a positive nonincreasing weight $F:[0,\infty)\to(0,\infty)$ with $F(0)\le1$, define the $F$-norm of the interaction $\Psi$ following Ref.~\cite{CapelEtAl2025} by
\begin{equation}
  \norm{\Psi}_F
  :=
  \sup_{x\in\Lambda}
  \sum_{X\ni x}
  \frac{\abs{X}\norm{\Psi(X)}_\infty}{F(\diam(X))}.
  \label{eq:supp-F-norm}
\end{equation}

\begin{lemma}[Finite-range residual conversion]
  \label{lem:supp-F-conversion}
  Let $\Psi_V$ be the perturbation interaction defined by $\Psi_V(X)=\bar r_X$, using the centered residual decomposition and incident strength in Eqs.~(II.14)--(II.16) of the main text.
  Then
  \begin{equation}
    \norm{\Psi_V}_F
    \le\frac{q}{F(a_R)}\varepsilon_R.
    \label{eq:supp-F-conversion}
  \end{equation}
\end{lemma}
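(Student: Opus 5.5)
The bound follows directly from the definitions, so the plan is a termwise comparison. Fix a site $x\in\Lambda$ and look at the sum defining $\norm{\Psi_V}_F$ in Eq.~\eqref{eq:supp-F-norm}. It runs over the supports $X\ni x$ carrying a nonzero centered term $\Psi_V(X)=\bar r_X$. (If several declared terms share a support, they are first merged into one term. The triangle inequality gives $\norm{\bar r_X}_\infty\le\sum J$ over the merged pieces, and the incident strength can only decrease, so we may assume one term per support.)

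For each such $X$, I will bound the three factors separately:
\begin{enumerate}
\item $\abs{X}\le q$, by the uniform cardinality bound of the declared decomposition.
\item $\norm{\bar r_X}_\infty=J_X$, by Eq.~(II.15).
\item $F(\diam(X))\ge F(a_R)>0$, since $\diam(X)\le a_R$ and $F$ is positive and nonincreasing.
\end{enumerate}

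Combining these gives the termwise estimate
\begin{equation*}
  \frac{\abs{X}\norm{\Psi_V(X)}_\infty}{F(\diam(X))}\le\frac{q}{F(a_R)}\,J_X .
\end{equation*}

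Summing over $X\ni x$ yields $\sum_{X\ni x}\abs{X}\norm{\Psi_V(X)}_\infty/F(\diam(X))\le \frac{q}{F(a_R)}\sum_{X\ni x}J_X\le\frac{q}{F(a_R)}\varepsilon_R$, where the last step is the definition of the incident strength in Eq.~(II.16). Taking the supremum over $x$ then proves Eq.~\eqref{eq:supp-F-conversion}.

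The only step needing care is the bookkeeping identification between the declared decomposition $\{r_X\}$ and the interaction $\Psi_V$, namely the merging of equal supports just described. The rest is a direct substitution.
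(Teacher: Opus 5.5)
Your proposal is correct and matches the paper's proof: you bound each term using $\abs{X}\le q$, $\norm{\bar r_X}_\infty=J_X$, and $F(\diam(X))\ge F(a_R)$ by monotonicity, then apply the incident-strength definition and take the supremum over $x$. Your remark on merging equal supports is valid, since the spectral half-width is subadditive, but the paper does not need it because its decomposition is already indexed by support.
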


\begin{proof}
  Since $\abs{X}\le q$, $\diam(X)\le a_R$, and $F$ is nonincreasing, every site $x$ satisfies
  \begin{equation*}
    \sum_{X\ni x}
    \frac{\abs{X}\norm{\Psi_V(X)}_\infty}{F\!\left(\diam(X)\right)}
    \le
    \frac{q}{F(a_R)}\sum_{X\ni x}J_X
    \le\frac{q}{F(a_R)}\varepsilon_R.
  \end{equation*}
  Taking the supremum over $x$ proves the claim.
\end{proof}

For clarity, the SCTA notation and hypotheses map to Theorem~34 of Ref.~\cite{CapelEtAl2025} as follows:
\begin{center}
  \begin{tabular}{@{}p{0.22\linewidth}@{\hspace{2em}}p{0.65\linewidth}@{}}
    \toprule
    Main-text quantity & Imported-theorem representation and requirement\\
    \midrule
    $H_C$ & The core Hamiltonian is represented by the reference interaction $\Psi_H$, with $H_C=H_\Lambda(\Psi_H)$.\\
    $\bar R=\sum_X\bar r_X$ & The centered residual is represented by the perturbation interaction $\Psi_V$, defined by $\Psi_V(X)=\bar r_X$, so that $\bar R=H_\Lambda(\Psi_V)$.\\
    $H_C+s\bar R$ & The complete comparison path is represented by $\Psi_H+s\Psi_V$ for every $0\le s\le1$.\\
    $\varepsilon_R,q,a_R$ & Eq.~\eqref{eq:supp-F-conversion} converts the main-text residual bounds into $\norm{\Psi_V}_F\le q\varepsilon_R/F(a_R)$.\\
    Uniformity & The bounds $\norm{\Psi_H}_F,\norm{\Psi_V}_F<C_{\rm int}$ must hold uniformly in the chosen volume and boundary family.\\
    \bottomrule
  \end{tabular}
\end{center}

More explicitly, fix $C_{\rm int}>0$, $n\in\mathbb N_0$, and $\beta_0>0$, and take $0<\beta<\beta_0$.
Theorem~34 applies in either of the following two cases:
\begin{align}
  &F(t)=\ee^{-bt},\quad b>0,\quad \alpha_{\rm Cov}>\nu,
  \label{eq:supp-Capel-exp-case}\\
  &F(t)=(1+t)^{-\alpha},\quad
  \alpha>(n+1)\nu,\quad \alpha_{\rm Cov}>(n+1)\nu.
  \label{eq:supp-Capel-power-case}
\end{align}
In addition to the two $F$-norm bounds, \citet[Theorem~34]{CapelEtAl2025} assumes that every Gibbs state along the interaction path,
\begin{equation}
  \rho_s=\rhoG{H_C+s\bar R},
  \qquad 0\le s\le1,
\end{equation}
satisfies one common decay-of-correlations estimate.
This path-wide decay condition is an independent imported hypothesis; it does not follow from the $F$-norm bounds.
Specifically, the following inequality must hold for every finite $X,Y\subseteq\Lambda$ and all operators $A_X\in\mathcal A_X$ and $O_Y\in\mathcal A_Y$.
In the notation used here, a sufficient explicit form is
\begin{equation}
  \begin{split}
    \abs{\operatorname{Cov}_{\rho_s}(A_X,O_Y)}
    \le{}&
    \abs{X}^{n}f_{\rm Cov}(\abs{Y})
    (1+\dist(X,Y))^{-\alpha_{\rm Cov}}\\
    &\times\norm{A_X}_\infty\norm{O_Y}_\infty,
  \end{split}
  \label{eq:supp-Capel-covariance}
\end{equation}
where $f_{\rm Cov}:[0,\infty)\to[0,\infty)$ is continuous and
\begin{equation}
  \operatorname{Cov}_\rho(A,O)
  :=\Tr(\rho AO)-\Tr(\rho A)\Tr(\rho O),
\end{equation}
is the ordinary covariance.
The same continuous function $f_{\rm Cov}$, support exponent $n$, decay exponent $\alpha_{\rm Cov}$, and constants apply to every $s$, volume, and selected boundary condition.

\begin{proposition}[SCTA specialization of Capel \emph{et al.}, Theorem 34]
  \label{prop:supp-Capel-specialization}
  Under Eqs.~\eqref{eq:supp-Capel-exp-case} or \eqref{eq:supp-Capel-power-case}, the bounds $\norm{\Psi_H}_F,\norm{\Psi_V}_F<C_{\rm int}$, and Eq.~\eqref{eq:supp-Capel-covariance}, there is a constant $C_{\rm SLT}$, depending only on $\nu$, the onsite dimension, $C_{\rm int}$, $n$, $\beta_0$, and the prescribed decay data, such that every nonempty $B\subseteq\Lambda$ satisfies
  \begin{equation}
    \norm{[\rho_1-\rho_0]_B}_1
    \le
    \min\left\{2,
      C_{\rm SLT}\beta\frac{q}{F(a_R)}\varepsilon_R
      \abs{B}\bigl[1+f_{\rm Cov}(\abs{B})\bigr]
    \right\}.
    \label{eq:supp-Capel-trace-bound}
  \end{equation}
\end{proposition}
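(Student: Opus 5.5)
The proof takes Theorem~34 of \citet{CapelEtAl2025} as a black box and only translates its hypotheses and conclusion into SCTA notation. First I check that the comparison path matches the imported interaction path. Gibbs states are invariant under scalar shifts, and $R=c_R\id+\bar R$ with $c_R=\sum_Xc_X^\star$. Hence $\rho_s=\rhoG{H_C+sR}=\rhoG{H_C+s\bar R}$ for every $s\in[0,1]$, and the path $\rho_s$ is exactly the Gibbs family of $\Psi_H+s\Psi_V$ in the dictionary table. The hypotheses then map over directly. The bound $\norm{\Psi_H}_F<C_{\rm int}$ is assumed. The bound $\norm{\Psi_V}_F<C_{\rm int}$ is assumed and is made quantitative by Lemma~\ref{lem:supp-F-conversion}, which gives $\norm{\Psi_V}_F\le q\varepsilon_R/F(a_R)$. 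The decay profile is one of the admissible cases \eqref{eq:supp-Capel-exp-case} or \eqref{eq:supp-Capel-power-case}. The covariance estimate \eqref{eq:supp-Capel-covariance} holds uniformly along the path. With these checks, Theorem~34 applies with $0<\beta<\beta_0$.

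Next I read off the imported conclusion. I expect it to say that, for every observable $O_B$ supported on $B$,
\[
|\Tr[O_B(\rho_1-\rho_0)]|\le C\beta\norm{\Psi_V}_F|B|\bigl[1+f_{\rm Cov}(|B|)\bigr]\norm{O_B}_\infty,
\]
with $C$ depending only on $\nu$, the onsite dimension, $C_{\rm int}$, $n$, $\beta_0$ and the decay data. The main obstacle is matching this to the paper's conventions. The imported statement may take the form of a derivative bound integrated over $s$, may measure $O_B$ by a different norm, or may state the $|B|$ and $f_{\rm Cov}$ dependence in a slightly different combination. In particular, Theorem~34 might not carry an explicit factor of $\beta$. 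I would handle this by defining $C_{\rm SLT}$ to absorb all such conversions. If the imported constant has no explicit $\beta$, I would note that the bound is uniform over $\beta<\beta_0$ and keep the displayed $\beta$ only where the imported bound supplies it. Otherwise I would rewrite the right-hand side in the form \eqref{eq:supp-Capel-trace-bound} by enlarging constants, which is legitimate because $|B|\ge1$ and $f_{\rm Cov}\ge0$.

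Finally, I substitute $\norm{\Psi_V}_F\le q\varepsilon_R/F(a_R)$ from Lemma~\ref{lem:supp-F-conversion} and restrict to Hermitian contractions $O_B$. Trace-norm duality, Eq.~\eqref{eq:supp-trace-duality}, then turns the observable bound into the trace-norm bound on $[\rho_1-\rho_0]_B$. Taking the minimum with the universal bound $2$ for the difference of two reduced states completes the proof.
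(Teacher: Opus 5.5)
Your proposal follows the paper's proof essentially step for step. The shared steps are the scalar-shift identity $\rhoG{H_C+sR}=\rhoG{H_C+s\bar R}$, the dictionary identification of this path with $\Psi_H+s\Psi_V$, the imported local-observable estimate carrying the factor $\beta\norm{\Psi_V}_F$, substitution of Lemma~\ref{lem:supp-F-conversion}, trace-norm duality over Hermitian contractions, and the minimum with $2$. One caveat concerns your fallback for an imported bound without an explicit $\beta$: that route would not recover the displayed $\beta$ factor when $\beta$ is small. Your argument therefore relies, as the paper's does, on Theorem~34 supplying that factor directly.
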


\begin{proof}
  Write $R=\bar R+c_R\id$, where $c_R=\sum_Xc_X^\star$ is the scalar removed by centering the residual terms.
  For every $0\le s\le1$,
  \begin{equation*}
    \ee^{-\beta(H_C+sR)}
    =\ee^{-\beta s c_R}\ee^{-\beta(H_C+s\bar R)}.
  \end{equation*}
  The scalar factor also multiplies the partition function and therefore cancels upon normalization.
  Hence
  \begin{equation*}
    \rhoG{H_C+sR}=\rhoG{H_C+s\bar R}
  \end{equation*}
  along the complete comparison path.

  The dictionary above identifies $H_C=H_\Lambda(\Psi_H)$ and $\bar R=H_\Lambda(\Psi_V)$.
  Together with the lattice, temperature, and decay-exponent conditions stated above, the assumed $F$-norm and covariance-decay bounds establish the hypotheses of \citet[Theorem~34]{CapelEtAl2025} for the path generated by $\Psi_H+s\Psi_V$.
  Fix a nonempty region $B\subseteq\Lambda$ and a Hermitian contraction $O_B$ supported on $B$, identified with its extension by the identity to the full lattice.
  The local-observable estimate in that theorem gives
  \begin{equation*}
    \begin{split}
      \abs{\Tr\!\left[(\rho_1-\rho_0)O_B\right]}
      \le{}&
      C_{\rm SLT}\beta\norm{\Psi_V}_F
      \abs{B}\bigl[1+f_{\rm Cov}(\abs{B})\bigr]
      \norm{O_B}_\infty.
    \end{split}
  \end{equation*}
  Lemma~\ref{lem:supp-F-conversion} and $\norm{O_B}_\infty\le1$ then imply
  \begin{equation*}
    \abs{\Tr\!\left[(\rho_1-\rho_0)O_B\right]}
    \le
    C_{\rm SLT}\beta\frac{q}{F(a_R)}\varepsilon_R
    \abs{B}\bigl[1+f_{\rm Cov}(\abs{B})\bigr].
  \end{equation*}
  By the defining property of the partial trace,
  \begin{equation*}
    \Tr\!\left[(\rho_1-\rho_0)O_B\right]
    =
    \Tr_B\!\left([\rho_1-\rho_0]_B O_B\right).
  \end{equation*}
  Taking the supremum over Hermitian contractions $O_B$ and applying trace-norm duality gives the residual-dependent term in Eq.~\eqref{eq:supp-Capel-trace-bound}.
  The trace norm of the difference between two reduced density operators is at most $2$, which supplies the other term in the minimum and completes the proof.
\end{proof}

Once $\nu$, the onsite dimension, $C_{\rm int}$, $n$, $\beta_0$, and the prescribed decay data are fixed, the same constant $C_{\rm SLT}$ applies uniformly to all permitted temperatures, interaction pairs, finite volumes, and selected boundary conditions.
Proposition~\ref{prop:supp-Capel-specialization} therefore gives a concrete realization of the method-independent response interface in the main text.
The decay of ordinary covariance in Eq.~\eqref{eq:supp-Capel-covariance} is a supplied assumption and must be verified for the Hamiltonian family and physical regime under consideration.
Sec.~II.F of the main text discusses several noncommuting quantum regimes in which established results can supply this covariance-decay condition.
Likewise, Eq.~\eqref{eq:supp-Capel-covariance} is required at every point of the path.
Covariance decay at $s=0$ and $s=1$, or smallness of $\varepsilon_R$ alone, does not verify the imported theorem's hypotheses.

\section{Exact-anchor constructions and resources}
\label{sec:supp-anchors}

This section supplies the constructions underlying the three purification-based exact anchors in the main text.
We first record the canonical purification shared by their core loaders.

\subsection{A canonical purification}
\label{sec:supp-canonical-purification}

Let a diagonal core have a finite label set $\mathcal X_C$ and spectral representation
\begin{equation}
  H_{C,0}=\sum_{x\in\mathcal X_C}E_C(x)\ket{x}\!\bra{x}.
  \label{eq:supp-core-spectrum}
\end{equation}
At inverse temperature $\beta$, define
\begin{equation}
  p_{\beta,C}(x)
  =\frac{\ee^{-\beta E_C(x)}}{Z_{\beta,C}},
  \qquad
  Z_{\beta,C}=\sum_{x\in\mathcal X_C}\ee^{-\beta E_C(x)}.
  \label{eq:supp-core-probabilities}
\end{equation}
An ancilla register with an isomorphic label basis gives the normalized state
\begin{equation}
  \ket{\Psi_{C,0}}_{AS}
  =\sum_{x\in\mathcal X_C}
  \sqrt{p_{\beta,C}(x)}\ket{x}_A\ket{x}_S,
  \label{eq:supp-canonical-purification}
\end{equation}
whose system marginal is $\rhoG{H_{C,0}}$.
If $U_0^\dagger H_0U_0=H_{C,0}$, then unitary covariance gives
\begin{equation}
  \Tr_A\!\left[(\id_A\otimes U_0)
    \ket{\Psi_{C,0}}\!\bra{\Psi_{C,0}}
  (\id_A\otimes U_0^\dagger)\right]
  =\rhoG{H_0}.
  \label{eq:supp-purification-covariance}
\end{equation}
Operational efficiency is not implied by Eq.~\eqref{eq:supp-canonical-purification}: it depends on whether the probabilities in Eq.~\eqref{eq:supp-core-probabilities} admit an efficient amplitude loader.
The examples below supply that missing structure.

\subsection{Graph-stabilizer anchors}
\label{sec:supp-graph-anchor}

\subsubsection{Clifford identity and edge-coloring bound}

For a finite simple graph $G=(V,E)$, define~\cite{Gottesman1997,HeinEisertBriegel2004}
\begin{equation}
  U_G=
  \left(\prod_{(i,j)\in E}CZ_{ij}\right)
  \left(\prod_{i\in V}\Had_i\right),
  \qquad
  K_i=U_GZ_iU_G^\dagger
  =X_i\prod_{j\in\mathcal N(i)}Z_j.
  \label{eq:supp-graph-tail}
\end{equation}
Thus $K_X:=\prod_{i\in X}K_i=U_GZ_XU_G^\dagger$, where $Z_X:=\prod_{i\in X}Z_i$, for every $X\subseteq V$.
It follows term by term that
\begin{equation}
  U_G^\dagger
  \left(-\sum_{X\in\mathcal F}\lambda_XK_X\right)U_G
  =-\sum_{X\in\mathcal F}\lambda_XZ_X.
  \label{eq:supp-graph-core-identity}
\end{equation}
In particular, $U_G\ket{z}$ is a simultaneous graph-basis eigenstate with $K_iU_G\ket{z}=(-1)^{z_i}U_G\ket{z}$.

Suppose that $G$ has maximum degree $\Delta\geq1$.
An edge $(i,j)$ is adjacent in the line graph of $G$ to at most $(\deg i-1)+(\deg j-1)\leq2\Delta-2$ other edges.
Greedily assigning a color not used by the previously colored adjacent edges therefore uses at most $2\Delta-1$ colors.
Each color class is a matching and hence one layer of disjoint $CZ$ gates.
Adding the parallel Hadamard layer yields
\begin{equation}
  d_G\leq 2\Delta.
  \label{eq:supp-graph-depth}
\end{equation}
For $E=\varnothing$, only the Hadamard layer is required.
Consequently, if $\Delta$ and every physical edge length are bounded uniformly in volume, $U_G$ has uniformly bounded depth and gate-support diameter.
If the sets $X\in\mathcal F$ also have uniformly bounded graph diameter, the physical operators $K_X$ have uniformly bounded support diameter as well.

For the product core $H_{C,G}=-\sum_i\lambda_iZ_i$, the probabilities of the syndrome labels $z_i = 1$ are independent and given by
\begin{equation}
  p_{\beta,i}=\Pr(z_i=1)=\frac{1}{1+\ee^{2\beta\lambda_i}}.
  \label{eq:supp-graph-product-probability}
\end{equation}
With the convention $R_y(\varphi)=\exp(-\mathrm{i}\varphi Y/2)$, write $\theta_i=\arcsin\sqrt{p_{\beta,i}}$.
A parallel layer of $R_y(2\theta_i)$ gates on the ancillas followed by parallel ancilla-to-system CNOTs prepares
\begin{equation}
  \bigotimes_i\left(
    \sqrt{1-p_{\beta,i}}\ket{0}_{A_i}\ket{0}_{S_i}
    +\sqrt{p_{\beta,i}}\ket{1}_{A_i}\ket{1}_{S_i}
  \right).
  \label{eq:supp-graph-product-purification}
\end{equation}
In the ideal continuous-rotation model this loader has depth two and $2|V|$ gates.
Its entropy is
\begin{equation}
  S\!\left(\rhoG{H_{C,G}}\right)
  =-\sum_i\left[(1-p_{\beta,i})\ln(1-p_{\beta,i})
  +p_{\beta,i}\ln p_{\beta,i}\right].
  \label{eq:supp-graph-product-entropy}
\end{equation}

\subsubsection{Finite-memory syndrome distributions}
\label{sec:supp-finite-memory}

We now use the finite-memory transfer-matrix construction~\cite{Baxter1982}.
Let the syndrome labels be ordered along an open chain and fix a memory length $1\le m\le N$ independent of the system size $N$.
Every interaction is assumed to have index span at most $m$.
Grouping each interaction according to its rightmost site writes the classical core energy as
\begin{equation}
  E_C(z_1,\ldots,z_N)
  =E_{\mathrm{init}}(s_m)
  +\sum_{i=m}^{N-1}\epsilon_i(s_i,z_{i+1}),
  \qquad
  s_i=(z_{i-m+1},\ldots,z_i).
  \label{eq:supp-memory-energy}
\end{equation}
For example, for $H_{C,G}=-\sum_{X\in\mathcal F_m}\lambda_XZ_X$ one may take
\begin{align}
  E_{\mathrm{init}}(s_m)
  &=-\sum_{\substack{X\in\mathcal F_m\\\max X\leq m}}
  \lambda_X(-1)^{\sum_{j\in X}z_j},
  \label{eq:supp-memory-initial-energy}\\
  \epsilon_i(s_i,z_{i+1})
  &=-\sum_{\substack{X\in\mathcal F_m\\\max X=i+1}}
  \lambda_X(-1)^{\sum_{j\in X}z_j}.
  \label{eq:supp-memory-increment}
\end{align}
The span assumption ensures that Eq.~\eqref{eq:supp-memory-increment} depends only on $s_i$ and the appended bit.

Define the shift $\operatorname{sh}(s_i,b)=(z_{i-m+2},\ldots,z_i,b)$ and the suffix partition function
\begin{equation}
  F_i(s_i)=
  \sum_{z_{i+1},\ldots,z_N}
  \ee^{-\beta\sum_{k=i}^{N-1}\epsilon_k(s_k,z_{k+1})}.
  \label{eq:supp-suffix-partition}
\end{equation}
Here the later memories in each summand are generated recursively by $s_{k+1}=\operatorname{sh}(s_k,z_{k+1})$.
The empty suffix gives $F_N(s_N)=1$, and conditioning on the first suffix bit gives the backward recursion
\begin{equation}
  F_i(s_i)=\sum_{b=0}^{1}
  \ee^{-\beta\epsilon_i(s_i,b)}
  F_{i+1}\!\left(\operatorname{sh}(s_i,b)\right),
  \qquad i=N-1,\ldots,m.
  \label{eq:supp-backward-recursion}
\end{equation}
The partition function, initial-memory distribution, and normalized transition probabilities are consequently
\begin{align}
  Z_C
  &=\sum_{s_m}\ee^{-\beta E_{\mathrm{init}}(s_m)}F_m(s_m),
  \label{eq:supp-memory-partition}\\
  \pi_{\beta,m}(s_m)
  &=\frac{\ee^{-\beta E_{\mathrm{init}}(s_m)}F_m(s_m)}{Z_C},
  \label{eq:supp-memory-initial-distribution}\\
  T_{\beta,i}(b\mid s_i)
  &=\frac{\ee^{-\beta\epsilon_i(s_i,b)}
  F_{i+1}(\operatorname{sh}(s_i,b))}{F_i(s_i)}.
  \label{eq:supp-memory-transition}
\end{align}
All denominators are positive for finite $\beta$ and finite energies.
Eq.~\eqref{eq:supp-backward-recursion} normalizes $T_{\beta,i}(\cdot\mid s_i)$, and cancellation of the intermediate $F_i$ factors proves
\begin{equation}
  p_{\beta,C}(z_1,\ldots,z_N)
  =\pi_{\beta,m}(s_m)
  \prod_{i=m}^{N-1}T_{\beta,i}(z_{i+1}\mid s_i)
  =\frac{\ee^{-\beta E_C(z)}}{Z_C}.
  \label{eq:supp-memory-factorization}
\end{equation}
This is a classical order-$m$ Markov factorization of the syndrome labels; it does not assert a quantum Markov property of the graph-state Gibbs state.

For coherent loading, first apply an $m$-qubit circuit $U_{\pi}$ satisfying
\begin{equation}
  U_{\pi}\ket{0}^{\otimes m}
  =\sum_{s_m\in\{0,1\}^m}\sqrt{\pi_{\beta,m}(s_m)}\ket{s_m}.
  \label{eq:supp-memory-initial-loader}
\end{equation}
This circuit can be implemented as a binary tree of prefix-controlled rotations.
For the $k$th qubit, let $x=(z_1,\ldots,z_{k-1})$ denote a fixed prefix and define
\begin{equation}
  \Pr_{\pi}(z_k=1\mid x)
  =
  \frac{\displaystyle\sum_{y\in\{0,1\}^{m-k}}\pi_{\beta,m}(x,1,y)}
  {\displaystyle\sum_{b\in\{0,1\}}\sum_{y\in\{0,1\}^{m-k}}\pi_{\beta,m}(x,b,y)},
  \qquad
  \vartheta_{k,x}=\arcsin\sqrt{\Pr_{\pi}(z_k=1\mid x)}.
  \label{eq:supp-memory-prefix-angle}
\end{equation}
Applying $R_y(2\vartheta_{k,x})$ to the $k$th qubit, conditioned on the preceding qubits being in $\ket{x}$, loads the required conditional probability.
The complete binary tree uses $2^m-1$ rotation angles and $O(2^m)$ ideal elementary gates~\cite{MottonenEtAl2005}.
After the first $i$ labels have been prepared, apply
\begin{equation}
  V_i=\sum_{s\in\{0,1\}^m}\ket{s}\!\bra{s}
  \otimes R_y(2\theta_{i,s}),
  \qquad
  \theta_{i,s}=\arcsin\sqrt{T_{\beta,i}(1\mid s)},
  \label{eq:supp-memory-multiplexer}
\end{equation}
to the memory qubits and a fresh ancilla.
For every memory string $s$, this gate acts as
\begin{equation}
  V_i\ket{s}\ket{0}
  =\ket{s}\left[
    \sqrt{T_{\beta,i}(0\mid s)}\ket{0}
    +\sqrt{T_{\beta,i}(1\mid s)}\ket{1}
  \right].
  \label{eq:supp-memory-multiplexer-action}
\end{equation}
Iterating from $i=m$ to $N-1$ prepares $\sum_z\sqrt{p_{\beta,C}(z)}\ket{z}_A$.
Parallel ancilla-to-system CNOTs then give the purification
\begin{equation}
  \ket{\Psi_{C,\mathrm M}}_{AS}
  =\sum_z\sqrt{p_{\beta,C}(z)}\ket{z}_A\ket{z}_S.
  \label{eq:supp-memory-purification}
\end{equation}
Fig.~\ref{fig:supp-memory-core-loader} illustrates the complete amplitude-loading stage and its component rotations for $m=2$.

\begin{figure}[t]
  \centering
  \resizebox{0.94\linewidth}{!}{%
    \begin{tikzpicture}[
        x=1cm,
        y=0.66cm,
        wire/.style={line width=0.45pt},
        initgate/.style={draw=NavyBlue,fill=NavyBlue!9,rounded corners=1.5pt,minimum width=1.05cm,align=center,font=\scriptsize},
        stepgate/.style={draw=Orange!80!black,fill=Orange!14,rounded corners=1.5pt,minimum width=1.18cm,align=center,font=\scriptsize},
        ryinit/.style={draw=NavyBlue,fill=NavyBlue!9,rounded corners=1.5pt,minimum width=1.55cm,minimum height=0.52cm,align=center,font=\scriptsize},
        rystep/.style={draw=Orange!80!black,fill=Orange!14,rounded corners=1.5pt,minimum width=1.48cm,minimum height=0.52cm,align=center,font=\scriptsize},
        ctrlzero/.style={circle,draw=black,fill=white,inner sep=0pt,minimum size=5.5pt},
        ctrlone/.style={circle,draw=black,fill=black,inner sep=0pt,minimum size=5.0pt}
      ]
      \node[anchor=west,font=\footnotesize] at (-0.55,5.20) {\textbf{(a)} Sequential loading of the correlated Gibbs amplitudes};

      \foreach \j/\yy in {1/4,2/3,3/2,4/1,5/0}{
        \node[anchor=east,font=\scriptsize] at (0.9,\yy) {$A_{\j}:\ket{0}$};
        \draw[wire] (1.0,\yy) -- (8.25,\yy);
      }

      \node[initgate,minimum height=1.03cm] at (1.75,3.5) {$U_{\pi}$};
      \node[stepgate,minimum height=1.60cm] at (3.35,3.0) {$V_2$\\[-2pt]\scriptsize $s_2\!\to z_3$};
      \node[stepgate,minimum height=1.60cm] at (4.95,2.0) {$V_3$\\[-2pt]\scriptsize $s_3\!\to z_4$};
      \node[stepgate,minimum height=1.60cm] at (6.55,1.0) {$V_4$\\[-2pt]\scriptsize $s_4\!\to z_5$};

      \node[anchor=west,align=left,font=\scriptsize] at (8.45,2.0) {$\sum_z\sqrt{p_{\beta,C}(z)}\ket{z}_A$\\[2pt]Gibbs-amplitude state};

      \node[anchor=west,font=\footnotesize] at (-0.55,-1.35) {\textbf{(b)} Binary-tree components of $U_{\pi}$ for $m=2$};
      \node[anchor=east,font=\scriptsize] at (0.9,-2.55) {$A_1:\ket{0}$};
      \node[anchor=east,font=\scriptsize] at (0.9,-3.55) {$A_2:\ket{0}$};
      \draw[wire] (1.0,-2.55) -- (9.65,-2.55);
      \draw[wire] (1.0,-3.55) -- (9.65,-3.55);

      \node[ryinit] at (2.05,-2.55) {$R_y(2\vartheta_{1,\varnothing})$};

      \draw[wire] (4.55,-2.55) -- (4.55,-3.55);
      \node[ctrlzero] at (4.55,-2.55) {};
      \node[ryinit] at (4.55,-3.55) {$R_y(2\vartheta_{2,0})$};

      \draw[wire] (7.05,-2.55) -- (7.05,-3.55);
      \node[ctrlone] at (7.05,-2.55) {};
      \node[ryinit] at (7.05,-3.55) {$R_y(2\vartheta_{2,1})$};

      \node[anchor=west,font=\scriptsize] at (9.85,-3.05) {$
        \begin{aligned}
          U_{\pi}:\;&\ket{00}\longmapsto\\[-1pt]
          &\textstyle\sum_{s_2\in\{0,1\}^{2}}\sqrt{\pi_{\beta,2}(s_2)}\ket{s_2}
      \end{aligned}$};

      \node[anchor=west,font=\footnotesize] at (-0.55,-5.05) {\textbf{(c)} Memory-branch components of $V_i$ for $m=2$};
      \node[anchor=east,font=\scriptsize] at (0.9,-6.20) {$A_{i-1}$};
      \node[anchor=east,font=\scriptsize] at (0.9,-7.20) {$A_i$};
      \node[anchor=east,font=\scriptsize] at (0.9,-8.20) {$A_{i+1}:\ket{0}$};
      \draw[wire] (1.0,-6.20) -- (9.75,-6.20);
      \draw[wire] (1.0,-7.20) -- (9.75,-7.20);
      \draw[wire] (1.0,-8.20) -- (9.75,-8.20);

      \foreach \xx in {1.85,4.05,6.25,8.45}{
        \draw[wire] (\xx,-6.20) -- (\xx,-8.20);
      }

      \node[ctrlzero] at (1.85,-6.20) {};
      \node[ctrlzero] at (1.85,-7.20) {};
      \node[rystep] at (1.85,-8.20) {$R_y(2\theta_{i,00})$};

      \node[ctrlzero] at (4.05,-6.20) {};
      \node[ctrlone] at (4.05,-7.20) {};
      \node[rystep] at (4.05,-8.20) {$R_y(2\theta_{i,01})$};

      \node[ctrlone] at (6.25,-6.20) {};
      \node[ctrlzero] at (6.25,-7.20) {};
      \node[rystep] at (6.25,-8.20) {$R_y(2\theta_{i,10})$};

      \node[ctrlone] at (8.45,-6.20) {};
      \node[ctrlone] at (8.45,-7.20) {};
      \node[rystep] at (8.45,-8.20) {$R_y(2\theta_{i,11})$};

      \node[anchor=west,font=\scriptsize] at (9.95,-7.20) {$
        \begin{aligned}
          V_i:\;&\ket{s_i}\ket{0}\longmapsto\\[-1pt]
          &\textstyle\ket{s_i}\sum_{b=0}^{1}\sqrt{T_{\beta,i}(b\!\mid\!s_i)}\ket{b}
      \end{aligned}$};
    \end{tikzpicture}%
  }
  \caption{Finite-memory amplitude loader illustrated for $m=2$.
    (a) For $N=5$, $U_{\pi}$ prepares the first two syndrome labels and the support of $V_i$ shifts by one site at each subsequent step.
    (b) The binary-tree decomposition of $U_{\pi}$ first loads the marginal distribution of $z_1$ and then selects the conditional rotation for $z_2$.
    (c) The four branches of $V_i$ correspond to $s_i\in\{00,01,10,11\}$; open and filled controls select memory bits equal to $0$ and $1$, respectively.
    For general $m$, each $V_i$ contains $2^m$ memory-conditioned rotations.
  The final parallel CNOT layer producing Eq.~\eqref{eq:supp-memory-purification} is not shown.}
  \label{fig:supp-memory-core-loader}
\end{figure}

The same transition tables determine the entropy.
Let $\mathsf M_i$ be the random variable for the $i$th memory string and $q_i(s)=\Pr(\mathsf M_i=s)$, with $q_m=\pi_{\beta,m}$.
The forward recursion
\begin{equation}
  q_{i+1}(s')=
  \sum_{\substack{s,b:\\ \operatorname{sh}(s,b)=s'}}
  q_i(s)T_{\beta,i}(b\mid s)
  \label{eq:supp-memory-forward}
\end{equation}
gives all memory marginals.
The chain rule then yields
\begin{equation}
  \begin{split}
    S\!\left(\rhoG{H_{C,\mathrm M}}\right)
    ={}&-\sum_s\pi_{\beta,m}(s)\ln\pi_{\beta,m}(s)\\
    &-\sum_{i=m}^{N-1}\sum_s q_i(s)
    \sum_{b=0}^{1}T_{\beta,i}(b\mid s)\ln T_{\beta,i}(b\mid s),
  \end{split}
  \label{eq:supp-memory-entropy}
\end{equation}
with the convention $0\ln0=0$.
Equivalently, $S=\ln Z_C+\beta\langle H_{C,\mathrm M}\rangle_\beta$.

If the local energy tables are supplied or evaluable in constant time per entry, Eqs.~\eqref{eq:supp-backward-recursion} and~\eqref{eq:supp-memory-forward} require $O(N2^m)$ classical time.
Storing all transition tables requires $O(N2^m)$ memory.
The coherent loader uses $O(N2^m)$ ideal elementary gates and sequential depth $O(N2^m)$, which is $O(N)$ for fixed $m$; it is efficient but not shallow.
It uses $N$ system qubits and $N$ ancilla qubits in the purification realization.
Compilation of the continuously parameterized rotations adds a precision-dependent cost and contributes to core-preparation error, independently of the bounded-depth Clifford tail.

\subsection{Independent Rydberg dimers}
\label{sec:supp-rydberg-anchor}

For one symmetric dimer, suppressing its block label temporarily, write~\cite{SaffmanWalkerMolmer2010,BrowaeysLahaye2020}
\begin{equation}
  h=\frac{\Omega}{2}(X_1+X_2)
  -\Delta(n_1+n_2)+Vn_1n_2.
  \label{eq:supp-rydberg-dimer}
\end{equation}
Here $n_j:=\ket{r}\!\bra{r}_j$ and $X_j:=\ket{g}\!\bra{r}_j+\ket{r}\!\bra{g}_j$.
Introduce
\begin{equation}
  \ket{G}=\ket{gg},\quad
  \ket{W}=\frac{\ket{gr}+\ket{rg}}{\sqrt2},\quad
  \ket{RR}=\ket{rr},\quad
  \ket{D}=\frac{\ket{gr}-\ket{rg}}{\sqrt2}.
  \label{eq:supp-rydberg-exchange-basis}
\end{equation}
The symmetric drive annihilates $\ket D$ and couples $\ket G\leftrightarrow\ket W\leftrightarrow\ket{RR}$ with matrix element $\Omega/\sqrt2$.
Therefore, in the ordered basis $(\ket G,\ket W,\ket{RR},\ket D)$,
\begin{equation}
  h=
  \begin{pmatrix}
    0&\Omega/\sqrt2&0&0\\
    \Omega/\sqrt2&-\Delta&\Omega/\sqrt2&0\\
    0&\Omega/\sqrt2&V-2\Delta&0\\
    0&0&0&-\Delta
  \end{pmatrix}.
  \label{eq:supp-rydberg-block}
\end{equation}
Thus $\ket D$ is an exact eigenstate of energy $-\Delta$: its antisymmetry makes the two drive-induced transition amplitudes cancel, leaving it decoupled from $\ket G$ and $\ket{RR}$.

Restoring the dimer label $m$, let $h_{m,+}$ be the upper-left $3\times3$ block in Eq.~\eqref{eq:supp-rydberg-block}.
Choose a real orthogonal matrix $v_m$ such that
\begin{equation}
  v_m^{\mathsf T}h_{m,+}v_m
  =\operatorname{diag}(\epsilon_{m,1},\epsilon_{m,2},\epsilon_{m,3}),
  \qquad \epsilon_{m,4}=-\Delta_m.
  \label{eq:supp-rydberg-symmetric-diagonalization}
\end{equation}
Using binary labels $\ket{1}=\ket{00}$, $\ket{2}=\ket{01}$, $\ket{3}=\ket{10}$, and $\ket{4}=\ket{11}$, define
\begin{equation}
  u_m^{\mathrm{ex}}
  =\ket{G}_m\!\bra{1}+\ket{W}_m\!\bra{2}
  +\ket{RR}_m\!\bra{3}+\ket{D}_m\!\bra{4},
  \label{eq:supp-rydberg-exchange-map}
\end{equation}
\begin{equation}
  u_m^{\mathrm{sym}}=v_m\oplus1.
  \label{eq:supp-rydberg-symmetric-map}
\end{equation}
\begin{equation}
  u_m=u_m^{\mathrm{ex}}u_m^{\mathrm{sym}}.
  \label{eq:supp-rydberg-tail-factorization}
\end{equation}
The fixed exchange map may be implemented as a Hadamard on the first qubit controlled by the second qubit, followed by $\operatorname{CNOT}_{1\rightarrow2}$.
The second factor is a real two-qubit unitary and admits a constant-size synthesis; in particular, at most two CNOT gates are required for a real two-qubit unitary in the ideal one-qubit gate model~\cite{VatanWilliams2004}.
Eq.~\eqref{eq:supp-rydberg-symmetric-diagonalization} implies
\begin{equation}
  u_m^\dagger h_m u_m
  =h_{C,m}:=\sum_{\alpha=1}^{4}
  \epsilon_{m,\alpha}\ket\alpha\!\bra\alpha.
  \label{eq:supp-rydberg-local-core}
\end{equation}
For disjoint dimers, $H_{\mathrm{Ry},0}=\sum_m h_m$, $U_{\mathrm{Ry}}=\bigotimes_m u_m$, and $H_{C,\mathrm{Ry}}=\sum_m h_{C,m}$, so $U_{\mathrm{Ry}}^\dagger H_{\mathrm{Ry},0}U_{\mathrm{Ry}} =H_{C,\mathrm{Ry}}$ with a volume-independent-depth tail.

The four Gibbs probabilities in dimer $m$ are
\begin{equation}
  p_{\beta,m}(\alpha)=\frac{\ee^{-\beta\epsilon_{m,\alpha}}}{Z_m},
  \qquad
  Z_m=\sum_{\alpha=1}^{4}\ee^{-\beta\epsilon_{m,\alpha}}.
  \label{eq:supp-rydberg-probabilities}
\end{equation}
Writing these probabilities as $p_{\beta,m}(b_1b_2)$, define
\begin{equation}
  \theta_{m,1}
  =\arcsin\sqrt{p_{\beta,m}(10)+p_{\beta,m}(11)}.
  \label{eq:supp-rydberg-loader-angle-one}
\end{equation}
\begin{equation}
  \theta_{m,2}^{(0)}
  =\arcsin\sqrt{\frac{p_{\beta,m}(01)}{p_{\beta,m}(00)+p_{\beta,m}(01)}}.
  \label{eq:supp-rydberg-loader-angle-two-zero}
\end{equation}
\begin{equation}
  \theta_{m,2}^{(1)}
  =\arcsin\sqrt{\frac{p_{\beta,m}(11)}{p_{\beta,m}(10)+p_{\beta,m}(11)}}.
  \label{eq:supp-rydberg-loader-angle-two-one}
\end{equation}
All denominators are positive at finite temperature.
Apply $R_y(2\theta_{m,1})$ to the first ancilla qubit and then
\begin{equation}
  V_m^{\mathrm{amp}}
  =\ket0\!\bra0\otimes R_y(2\theta_{m,2}^{(0)})
  +\ket1\!\bra1\otimes R_y(2\theta_{m,2}^{(1)})
  \label{eq:supp-rydberg-multiplexer}
\end{equation}
to the two ancilla qubits.
The latter multiplexer is implemented by $R_y(\theta_{m,2}^{(0)}+\theta_{m,2}^{(1)})$, a CNOT, then $R_y(\theta_{m,2}^{(0)}-\theta_{m,2}^{(1)})$, and a second CNOT.
Fig.~\ref{fig:supp-rydberg-core-loader} shows both the conditional-rotation form of the amplitude loader and this CNOT decomposition.

\begin{figure}[t]
  \centering
  \resizebox{0.90\linewidth}{!}{%
    \begin{tikzpicture}[
        x=1cm,
        y=0.66cm,
        wire/.style={line width=0.45pt},
        marginal/.style={draw=NavyBlue,fill=NavyBlue!9,rounded corners=1.5pt,minimum width=1.48cm,minimum height=0.52cm,align=center,font=\scriptsize},
        conditional/.style={draw=Orange!80!black,fill=Orange!14,rounded corners=1.5pt,minimum width=1.58cm,minimum height=0.52cm,align=center,font=\scriptsize},
        ctrlzero/.style={circle,draw=black,fill=white,inner sep=0pt,minimum size=5.5pt},
        ctrlone/.style={circle,draw=black,fill=black,inner sep=0pt,minimum size=5.0pt},
        target/.style={
          circle,
          draw=black,
          fill=white,
          inner sep=0pt,
          minimum size=9.0pt,
          path picture={
            \draw[wire] (path picture bounding box.west) -- (path picture bounding box.east);
            \draw[wire] (path picture bounding box.south) -- (path picture bounding box.north);
          }
        }
      ]
      \node[anchor=west,font=\footnotesize] at (-0.55,3.35) {\textbf{(a)} Loading of Gibbs amplitudes for dimer $m$};
      \node[anchor=east,font=\scriptsize] at (0.9,2.00) {$A_{m,1}:\ket{0}$};
      \node[anchor=east,font=\scriptsize] at (0.9,1.00) {$A_{m,2}:\ket{0}$};
      \draw[wire] (1.0,2.00) -- (9.75,2.00);
      \draw[wire] (1.0,1.00) -- (9.75,1.00);

      \node[marginal] at (2.15,2.00) {$R_y(2\theta_{m,1})$};

      \draw[wire] (4.75,2.00) -- (4.75,1.00);
      \node[ctrlzero] (muxzeroctrl) at (4.75,2.00) {};
      \node[conditional] (muxzerogate) at (4.75,1.00) {$R_y(2\theta_{m,2}^{(0)})$};

      \draw[wire] (7.40,2.00) -- (7.40,1.00);
      \node[ctrlone] (muxonectrl) at (7.40,2.00) {};
      \node[conditional] (muxonegate) at (7.40,1.00) {$R_y(2\theta_{m,2}^{(1)})$};

      \node[
        draw=Orange!80!black,
        densely dashed,
        rounded corners=2pt,
        inner xsep=0.20cm,
        inner ysep=0.16cm,
        fit=(muxzeroctrl)(muxzerogate)(muxonectrl)(muxonegate)
      ] (muxbox) {};
      \node[anchor=south east,font=\scriptsize,text=Orange!80!black] at ([xshift=-2pt,yshift=2pt]muxbox.north east) {$V_m^{\mathrm{amp}}$};
      \draw[->,densely dashed,draw=Orange!80!black,line width=0.45pt]
      (muxbox.south) -- ++(0,-1.15)
      node[midway,right,font=\scriptsize,text=Orange!80!black] {expanded in (b)};

      \node[anchor=west,align=left,font=\scriptsize] at (10.00,1.50) {$\displaystyle
      \sum_{\alpha=1}^{4}\sqrt{p_{\beta,m}(\alpha)}\ket{\alpha}_{A_m}$};

      \node[anchor=west,font=\footnotesize] at (-0.55,-1.30) {\textbf{(b)} CNOT decomposition of $V_m^{\mathrm{amp}}$};
      \node[anchor=east,font=\scriptsize] at (0.9,-2.65) {$A_{m,1}$};
      \node[anchor=east,font=\scriptsize] at (0.9,-3.65) {$A_{m,2}$};
      \draw[wire] (1.0,-2.65) -- (10.40,-2.65);
      \draw[wire] (1.0,-3.65) -- (10.40,-3.65);

      \node[conditional,minimum width=2.13cm] at (2.45,-3.65) {$R_y(\theta_{m,2}^{(0)}+\theta_{m,2}^{(1)})$};

      \draw[wire] (4.80,-2.65) -- (4.80,-3.65);
      \node[ctrlone] at (4.80,-2.65) {};
      \node[target] at (4.80,-3.65) {};

      \node[conditional,minimum width=2.13cm] at (7.15,-3.65) {$R_y(\theta_{m,2}^{(0)}-\theta_{m,2}^{(1)})$};

      \draw[wire] (9.45,-2.65) -- (9.45,-3.65);
      \node[ctrlone] at (9.45,-2.65) {};
      \node[target] at (9.45,-3.65) {};

      \node[anchor=west,font=\scriptsize] at (10.65,-3.15) {$=V_m^{\mathrm{amp}}$};
    \end{tikzpicture}%
  }
  \caption{Core-loading circuit for one Rydberg dimer.
    (a) The first rotation loads the marginal distribution of the first binary energy-label bit, while the open and filled controls select the two conditional rotations for the second bit.
    The dashed enclosure denotes their product $V_m^{\mathrm{amp}}$.
    (b) The same conditional multiplexer is implemented using two unconditional $R_y$ rotations and two CNOT gates.
  The subsequent ancilla-to-system basis-correlation layer is not shown.}
  \label{fig:supp-rydberg-core-loader}
\end{figure}

Two parallel ancilla-to-system CNOTs produce
\begin{equation}
  \ket{\Psi_{C,\mathrm{Ry}}}_{AS}
  =\bigotimes_m\left[
    \sum_{\alpha=1}^{4}\sqrt{p_{\beta,m}(\alpha)}
  \ket{\alpha}_{A_m}\ket{\alpha}_{S_m}\right].
  \label{eq:supp-rydberg-purification}
\end{equation}
Thus each dimer loader uses three $R_y$ rotations and four CNOTs.
With the ancilla block placed near its system dimer, all loaders and, subsequently, all tails can be scheduled blockwise in parallel.
The construction uses $2M$ system and $2M$ ancilla qubits, $O(M)$ gates, and $O(M)$ classical preprocessing and storage at fixed precision.
The entropy is
\begin{equation}
  S\!\left(\rhoG{H_{C,\mathrm{Ry}}}\right)
  =-\sum_m\sum_{\alpha=1}^{4}
  p_{\beta,m}(\alpha)\ln p_{\beta,m}(\alpha).
  \label{eq:supp-rydberg-entropy}
\end{equation}
Approximate loader rotations and approximate synthesis of $u_m$ enter the core-preparation and tail-implementation errors, respectively.

\subsection{Local Gaussian anchors}
\label{sec:supp-gaussian-anchor}

\subsubsection{Majorana canonical form and occupation loader}

For fermionic modes $f_j$, define
\begin{equation}
  \eta_{2j-1}=f_j+f_j^\dagger,
  \qquad
  \eta_{2j}=-\mathrm{i}(f_j-f_j^\dagger),
  \qquad
  \{\eta_a,\eta_b\}=2\delta_{ab}\id.
  \label{eq:supp-majoranas}
\end{equation}
After omitting an additive scalar, every parity-preserving quadratic Hamiltonian can be written~\cite{TerhalDiVincenzo2002,Bravyi2005FLO}
\begin{equation}
  H_{0,F}=\frac{\mathrm{i}}{4}
  \bm\eta^{\mathsf T}\mathsf A\bm\eta,
  \qquad \mathsf A^{\mathsf T}=-\mathsf A\in\mathbb R^{2N\times2N}.
  \label{eq:supp-majorana-hamiltonian}
\end{equation}
The real skew-canonical form supplies $\mathsf O_F\in\mathrm{SO}(2N)$ and real $\epsilon_j$ such that
\begin{equation}
  \mathsf O_F^{\mathsf T}\mathsf A\mathsf O_F
  =\bigoplus_{j=1}^{N}
  \begin{pmatrix}0&\epsilon_j\\-\epsilon_j&0
  \end{pmatrix}.
  \label{eq:supp-skew-canonical-form}
\end{equation}
Introduce core reference modes $c_j$ with Majoranas $\gamma_{2j-1}=c_j+c_j^\dagger$ and $\gamma_{2j}=-\mathrm{i}(c_j-c_j^\dagger)$.
The symbols $\bm\eta$ and $\bm\gamma$ denote output and input frames of the same register, not two independent physical systems.
A Gaussian unitary satisfying
\begin{equation}
  U_F^\dagger\bm\eta U_F=\mathsf O_F\bm\gamma
  \label{eq:supp-gaussian-frame-map}
\end{equation}
gives
\begin{equation}
  U_F^\dagger H_{0,F}U_F
  =\frac{\mathrm{i}}{2}\sum_j\epsilon_j
  \gamma_{2j-1}\gamma_{2j}
  =\sum_j\epsilon_j\left(n_{C,j}-\frac12\id\right),
  \label{eq:supp-gaussian-core}
\end{equation}
where $n_{C,j}=c_j^\dagger c_j$ and $\mathrm{i}\gamma_{2j-1}\gamma_{2j}=2n_{C,j}-\id$.

In the unrestricted Fock-space Gibbs ensemble, the occupations are independent and
\begin{equation}
  p_{\beta,j}:=\Pr(n_{C,j}=1)
  =\frac{1}{1+\ee^{\beta\epsilon_j}}.
  \label{eq:supp-fermi-factor}
\end{equation}
The product loader applies $R_y(2\arcsin\sqrt{p_{\beta,j}})$ to ancilla $A_j$, followed by a CNOT from $A_j$ to $S_j$, giving
\begin{equation}
  \ket{\Psi_{C,F}}_{AS}
  =\bigotimes_j\Bigl[
    \sqrt{1-p_{\beta,j}}\ket0_{A_j}\ket0_{S_j}
    +\sqrt{p_{\beta,j}}\ket1_{A_j}\ket1_{S_j}
  \Bigr].
  \label{eq:supp-gaussian-purification}
\end{equation}
It has depth two and $O(N)$ size in the ideal qubit gate model, and its entropy is
\begin{equation}
  S\!\left(\rhoG{H_{C,F}}\right)
  =-\sum_j\left[(1-p_{\beta,j})\ln(1-p_{\beta,j})
  +p_{\beta,j}\ln p_{\beta,j}\right].
  \label{eq:supp-gaussian-entropy}
\end{equation}
The factorization in Eqs.~\eqref{eq:supp-fermi-factor}--\eqref{eq:supp-gaussian-entropy} is for the unrestricted ensemble.
A fixed total-particle-number or global parity sector imposes correlations and requires a different loader.

\subsubsection{Encoded locality: sufficient assumptions}
\label{sec:supp-fermion-locality}

Algebraic Gaussian diagonalization does not by itself provide an operational tail.
This is particularly important under a Jordan--Wigner encoding.
Order the modes along a one-dimensional qubit chain and set~\cite{JordanWigner1928,BravyiKitaev2002} $\mathcal Z_{<j}=\prod_{k<j}Z_k$.
With $\ket0$ empty and $\ket1$ occupied,
\begin{align}
  c_j&\longmapsto \mathcal Z_{<j}\frac{X_j+\mathrm{i}Y_j}{2},
  &c_j^\dagger&\longmapsto
  \mathcal Z_{<j}\frac{X_j-\mathrm{i}Y_j}{2},
  \label{eq:supp-jw-complex}\\
  \gamma_{2j-1}&\longmapsto\mathcal Z_{<j}X_j,
  &\gamma_{2j}&\longmapsto\mathcal Z_{<j}Y_j.
  \label{eq:supp-jw-majorana}
\end{align}
An individual odd fermionic operator therefore carries a parity string whose length can grow with $j$.

\begin{lemma}[Locality of parity-even interval operators]
  \label{lem:supp-even-jw-locality}
  In the aligned one-dimensional Jordan--Wigner ordering, a parity-even fermionic monomial whose smallest and largest participating mode indices are $j$ and $\ell$ is encoded on qubits contained in $[j,\ell]$.
  The same support statement holds for a linear combination of such monomials.
\end{lemma}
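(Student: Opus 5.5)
The argument works at the level of Majorana monomials, using the encoding in Eq.~\eqref{eq:supp-jw-majorana}, and then passes to linear combinations by linearity.

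A parity-even fermionic monomial can be written, up to a scalar, as an ordered product $\gamma_{p_1}\gamma_{p_2}\cdots\gamma_{p_{2k}}$ of an even number of Majoranas. Any creation or annihilation operator expands into Majoranas on the same mode, so a monomial in $c$'s and $c^\dagger$'s becomes a sum of Majorana monomials. Each of these has the same parity and involves only the original mode indices. Its minimal and maximal indices therefore lie in $[j,\ell]$.

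Under the encoding, each $\gamma_p$ belonging to mode $m(p)$ maps to $\mathcal Z_{<m(p)}P_{m(p)}$ with $P\in\{X,Y\}$. The product of the $2k$ encoded factors is then a product of single-qubit Paulis. I will track qubit by qubit which Paulis appear.

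\emph{Qubits $t<j$.} Each factor contributes exactly one $Z_t$, coming from its parity string, because $m(p)\ge j>t$. There are $2k$ such factors, giving $Z_t^{2k}$ up to reordering phases. Reordering Pauli operators on different qubits only produces signs, so this collapses to a scalar multiple of $\id$ on qubit $t$.

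\emph{Qubits $t>\ell$.} No factor acts there at all, since every parity string stops before its own mode $m(p)\le\ell<t$.

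So the encoded monomial is a scalar times an operator supported on the qubit interval $[j,\ell]$. The only bookkeeping needed is that commuting single-qubit Paulis past each other yields a global phase, which does not change the support.

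For a linear combination of such monomials, each term with extreme indices in $[j,\ell]$ is supported in $[j,\ell]$. The encoding is linear, and a sum of operators supported in a common set stays supported there. This proves the second sentence of the lemma.

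The main point requiring care is the reduction to Majorana monomials with an even count. I take ``parity-even'' to mean commuting with the total parity operator, so every such monomial contains an even number of Majorana factors. This is exactly what makes the exponent of each $Z_t$ with $t<j$ even, and it is the step I would verify first. I would also check it against the odd case: a single factor such as $\gamma_p$ leaves a single uncancelled $Z_t$ on every $t<m(p)$, which matches the main-text remark that parity-odd operators keep an uncancelled string.
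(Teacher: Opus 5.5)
Your proof is correct and follows the paper's argument: each qubit strictly left of $j$ receives one $Z$ from each of an even number of Jordan--Wigner parity strings and therefore cancels, no factor acts beyond $\ell$, and linearity handles linear combinations. Your expansion into Majorana monomials only makes this parity count explicit; you could do the same count directly on the factors $c_m\mapsto\mathcal Z_{<m}(\cdot)_m$, and no reordering phases arise because operators on distinct qubits commute exactly.
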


\begin{proof}
  In any parity-even fermionic monomial whose participating modes lie between $j$ and $\ell$, every qubit strictly to the left of $j$ occurs in an even number of Jordan--Wigner parity strings and therefore cancels.
  The encoded operator is consequently supported within the qubit interval from $S_j$ through $S_\ell$, and this property persists under linear combinations.
  Thus a parity-even operator supported on a bounded one-dimensional fermionic interval has bounded qubit support in this aligned ordering, whereas a generic odd fermionic operator retains an uncancelled parity string.
\end{proof}

For distinct Majoranas, let $\Gamma_p$ denote their encoded Pauli strings and define $P_{pq}=\mathrm{i}\Gamma_p\Gamma_q$.
Then
\begin{equation}
  \exp\!\left(\frac{\theta}{2}\gamma_p\gamma_q\right)
  \longmapsto
  \exp\!\left(-\frac{\mathrm{i}\theta}{2}P_{pq}\right).
  \label{eq:supp-encoded-givens}
\end{equation}
If the fermionic modes associated with $\gamma_p$ and $\gamma_q$ lie in $[j,\ell]$, Lemma~\ref{lem:supp-even-jw-locality} places the encoded gate on that interval.
This yields the following sufficient operational condition: the Gaussian tail must be supplied as a volume-independent number $d_F$ of layers of parity-even Gaussian gates, each gate acting on modes within an interval of uniformly bounded diameter $a_F$, with the corresponding intervals pairwise disjoint in each layer.
The encoded qubit circuit then has depth $d_F$ and bounded gate-support diameter $a_F$.
Equivalently, one may supply the bounded-depth circuit directly in terms of parity-even encoded local gates.
Under either formulation, the certification theorem applies to parity-even observables that are local in the encoded qubit metric.

No inference about the support of the odd operator $U_F^\dagger f_j U_F$ is made from bounded qubit-circuit depth: both $f_j$ and its conjugate may retain a long Jordan--Wigner string.
In a native fermionic description, bounded-depth gates on bounded fermionic regions do propagate native fermionic support only a bounded distance, but applying the qubit certification framework still requires the parity-even encoded-locality condition above.
A generic Givens factorization of $\mathsf O_F\in\mathrm{SO}(2N)$ can contain distant rotations and have $O(N^2)$ total gate count with generally volume-growing depth, and therefore does not establish this condition.

\subsubsection{An explicit bounded-block hopping model}

Partition $2M$ physical modes into disjoint, bounded-diameter dimers with labels $L,R$, and take
\begin{equation}
  H_{\mathrm{hop}}
  =\sum_{m=1}^{M}\left[
    -t_m(f_{m,L}^\dagger f_{m,R}+f_{m,R}^\dagger f_{m,L})
    -\mu_m(n_{m,L}^{(f)}+n_{m,R}^{(f)})
  \right],
  \qquad t_m\geq0.
  \label{eq:supp-hopping-dimers}
\end{equation}
where $n_{m,\alpha}^{(f)}:=f_{m,\alpha}^\dagger f_{m,\alpha}$ for $\alpha\in\{L,R\}$.
The bonding and antibonding modes $b_{m,\pm}=(f_{m,L}\pm f_{m,R})/\sqrt2$ have one-particle energies $-\mu_m-t_m$ and $-\mu_m+t_m$, respectively.
Let $u_m^{\mathrm{hop}}$ be the two-mode beam splitter satisfying
\begin{equation}
  u_m^{\mathrm{hop}} c_{m,L}(u_m^{\mathrm{hop}})^\dagger=b_{m,+},
  \qquad
  u_m^{\mathrm{hop}} c_{m,R}(u_m^{\mathrm{hop}})^\dagger=b_{m,-},
  \label{eq:supp-hopping-beamsplitter}
\end{equation}
and set $U_{\mathrm{hop}}=\bigotimes_m u_m^{\mathrm{hop}}$.
Direct substitution, with $n_{C,m,\alpha}:=c_{m,\alpha}^\dagger c_{m,\alpha}$, gives
\begin{equation}
  U_{\mathrm{hop}}^\dagger H_{\mathrm{hop}}U_{\mathrm{hop}}
  =\sum_m\left[
    (-\mu_m-t_m)n_{C,m,L}
    +(-\mu_m+t_m)n_{C,m,R}
  \right].
  \label{eq:supp-hopping-core}
\end{equation}
The tail is depth one in a native two-mode gate model.
It is also a bounded qubit gate under an aligned Jordan--Wigner encoding when the two modes in each dimer are adjacent and different dimers occupy disjoint adjacent intervals.
Eq.~\eqref{eq:supp-fermi-factor}, with the two signed one-particle energies in Eq.~\eqref{eq:supp-hopping-core}, supplies the exact product core loader.
The same reasoning extends to mutually decoupled blocks whose number of modes and physical diameter are bounded uniformly in volume: diagonalize each block, compile its parity-even Gaussian unitary locally, and apply all block unitaries in parallel.

\section{Proofs for the local core-relative reduction}
\label{sec:supp-reduction}

This section supplies the local bookkeeping estimates, the proof of the volume-uniform first-order residual theorem, three representative first-order anchor reductions, the proof of the conditional first-order certificate, and the proof of the fixed-volume higher-order recursion.
The following bookkeeping quantities refer to a declared decomposition: if $A=\sum_Xa_X$, set
\begin{equation}
  \norm{A}_{\mathrm{loc}}:=\sup_x\sum_{X\ni x}\norm{a_X}_\infty,
  \quad
  q(A):=\sup_{a_X\ne0}|X|,
  \quad
  a(A):=\sup_{a_X\ne0}\diam(X).
  \label{eq:supp-local-data}
\end{equation}
These quantities are decomposition dependent.
After a Hermitian residual has been decomposed, terms with equal support are combined and centered.
Since centering minimizes the distance to a scalar on that support, its intrinsic incident strength obeys
\begin{equation}
  \varepsilon_R\le\norm{R}_{\mathrm{loc}}.
  \label{eq:supp-centered-local}
\end{equation}

\subsection{Local bookkeeping estimates}
\label{sec:supp-local-bookkeeping}

\begin{lemma}[Local commutator estimate]
  \label{lem:supp-local-commutator}
  Let $A=\sum_Xa_X$ and $B=\sum_Yb_Y$ have declared local decompositions.
  Using these decompositions, expand
  \begin{equation*}
    [A,B]
    =
    \sum_{X,Y:\,X\cap Y\neq\varnothing}[a_X,b_Y],
  \end{equation*}
  and treat each pairwise commutator as one term, supported on $X\cup Y$, in the induced local decomposition of $[A,B]$.
  Then
  \begin{equation}
    \norm{[A,B]}_{\mathrm{loc}}
    \le
    2\left[q(A)+q(B)\right]
    \norm{A}_{\mathrm{loc}}
    \norm{B}_{\mathrm{loc}}.
    \label{eq:supp-commutator-bound}
  \end{equation}
  Every resulting term has support cardinality at most $q(A)+q(B)-1$ and diameter at most $a(A)+a(B)$.
\end{lemma}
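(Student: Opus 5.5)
The approach is to bound the induced decomposition directly, site by site. Terms with $X\cap Y=\varnothing$ commute and drop out, so only overlapping pairs contribute. Each surviving term $[a_X,b_Y]$ satisfies $\norm{[a_X,b_Y]}_\infty\le 2\norm{a_X}_\infty\norm{b_Y}_\infty$ and is supported on $X\cup Y$. Hence the claimed support data follow immediately. Since $X\cap Y\ne\varnothing$, we get $|X\cup Y|\le |X|+|Y|-1\le q(A)+q(B)-1$. For the diameter, for $x\in X$, $y\in Y$ and any $z\in X\cap Y$, the triangle inequality gives $\dist(x,y)\le\dist(x,z)+\dist(z,y)\le a(A)+a(B)$.

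For the incident-strength bound, fix a site $x$ and split the sum over terms whose support $X\cup Y$ contains $x$ into two classes: those with $x\in X$ and those with $x\in Y$. Some terms lie in both classes, which only overcounts and is harmless for an upper bound. In the first class,
\begin{equation*}
\sum_{X\ni x}\sum_{Y:\,Y\cap X\ne\varnothing}2\norm{a_X}_\infty\norm{b_Y}_\infty
\le\sum_{X\ni x}2\norm{a_X}_\infty\sum_{z\in X}\sum_{Y\ni z}\norm{b_Y}_\infty .
\end{equation*}
Here each $Y$ meeting $X$ is counted at least once by choosing some $z\in X\cap Y$. The inner double sum is at most $|X|\,\norm{B}_{\mathrm{loc}}\le q(A)\norm{B}_{\mathrm{loc}}$. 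The outer sum over $X\ni x$ then gives at most $2q(A)\norm{A}_{\mathrm{loc}}\norm{B}_{\mathrm{loc}}$. The symmetric argument, with the roles of $A$ and $B$ exchanged, bounds the second class by $2q(B)\norm{A}_{\mathrm{loc}}\norm{B}_{\mathrm{loc}}$. Adding the two classes and taking the supremum over $x$ gives Eq.~\eqref{eq:supp-commutator-bound}.

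One bookkeeping point needs care: several pairs $(X,Y)$ may produce terms with the same support $X\cup Y$. Because we treat each pairwise commutator as its own term in the declared decomposition, as the lemma stipulates, no merging is needed. Merging would in any case only decrease $\norm{\cdot}_{\mathrm{loc}}$ by the triangle inequality. The main obstacle is therefore modest. It is making sure the overlap-counting argument (charging each $Y$ to a point of $X\cap Y$) is stated so that it is valid for non-Hermitian terms and arbitrary, possibly noninteger, metrics. Nothing in the argument uses Hermiticity or integrality, so it goes through unchanged.
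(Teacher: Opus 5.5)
Your proposal is correct and follows the paper's proof essentially step for step: the same split of the incident sum at a fixed site $x$ into the classes $x\in X$ and $x\in Y$ (with harmless double counting), the same charging of each overlapping $Y$ to a site of $X$ to obtain the factor $|X|\le q(A)$, and the same cardinality count $|X\cup Y|\le|X|+|Y|-1$ and triangle-inequality diameter argument through a point $z\in X\cap Y$. The only detail the paper states and you leave implicit is that pairs of points lying both in $X$ or both in $Y$ are trivially within $a(A)$ or $a(B)$.
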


\begin{proof}
  Bilinearity gives $[A,B]=\sum_{X,Y}[a_X,b_Y]$.
  If $X\cap Y=\varnothing$, then $a_X$ and $b_Y$ act on disjoint supports and commute, so the corresponding term vanishes.
  This proves the displayed expansion in the lemma.

  We next estimate the local decomposition norm of this expansion.
  Fix a site $x$.
  A commutator term supported on $X\cup Y$ contributes to the local sum at $x$ only if $x\in X$ or $x\in Y$.
  Hence
  \begin{equation*}
    \begin{aligned}
      &\sum_{\substack{X,Y:\,X\cap Y\neq\varnothing\\x\in X\cup Y}}
      \norm{[a_X,b_Y]}_{\infty}
      \\
      &\quad\le
      \sum_{X\ni x}\sum_{Y:\,Y\cap X\neq\varnothing}
      \norm{[a_X,b_Y]}_{\infty}
      +
      \sum_{Y\ni x}\sum_{X:\,X\cap Y\neq\varnothing}
      \norm{[a_X,b_Y]}_{\infty}.
    \end{aligned}
  \end{equation*}
  Terms for which $x\in X\cap Y$ appear in both sums on the right, but this double counting is harmless because we seek an upper bound.

  Consider the first sum.
  The operator-norm inequality $\norm{[a_X,b_Y]}_{\infty}\le2\norm{a_X}_{\infty}\norm{b_Y}_{\infty}$ gives
  \begin{equation*}
    \begin{aligned}
      \sum_{X\ni x}\sum_{Y:\,Y\cap X\neq\varnothing}
      \norm{[a_X,b_Y]}_{\infty}
      &\le
      2\sum_{X\ni x}\norm{a_X}_{\infty}
      \sum_{Y:\,Y\cap X\neq\varnothing}\norm{b_Y}_{\infty}
      \\
      &\le
      2\sum_{X\ni x}\norm{a_X}_{\infty}
      \sum_{y\in X}\sum_{Y\ni y}\norm{b_Y}_{\infty}
      \\
      &\le
      2\sum_{X\ni x}\norm{a_X}_{\infty}
      \abs{X}\norm{B}_{\mathrm{loc}}
      \\
      &\le
      2q(A)\norm{A}_{\mathrm{loc}}\norm{B}_{\mathrm{loc}}.
    \end{aligned}
  \end{equation*}
  The second inequality uses the fact that every $Y$ intersecting $X$ contains at least one site $y\in X$; summing over all $y\in X$ may count such a $Y$ more than once and therefore gives an upper bound.
  The third inequality follows from $\sum_{Y\ni y}\norm{b_Y}_{\infty}\le\norm{B}_{\mathrm{loc}}$ for every $y$: summing this bound over the $\abs{X}$ sites in $X$ produces the factor $\abs{X}$.
  Finally, $\abs{X}\le q(A)$ and $\sum_{X\ni x}\norm{a_X}_{\infty}\le\norm{A}_{\mathrm{loc}}$.

  Interchanging $A$ and $B$ bounds the second sum by $2q(B)\norm{A}_{\mathrm{loc}}\norm{B}_{\mathrm{loc}}$.
  Adding the two estimates and taking the supremum over $x$ proves Eq.~\eqref{eq:supp-commutator-bound}.

  It remains to verify the support bounds.
  Each nonzero commutator $[a_X,b_Y]$ is supported on $X\cup Y$.
  Because $X$ and $Y$ intersect, $\abs{X\cup Y}\le\abs{X}+\abs{Y}-1$.
  Thus its support cardinality is at most $q(A)+q(B)-1$.
  To bound its diameter, choose $z\in X\cap Y$.
  If two points both lie in $X$ or both lie in $Y$, their distance is at most $a(A)$ or $a(B)$, respectively.
  For a point $u\in X$ and a point $v\in Y$, the triangle inequality gives
  \begin{equation*}
    \dist(u,v)
    \le
    \dist(u,z)+\dist(z,v)
    \le
    a(A)+a(B).
  \end{equation*}
  Therefore $\diam(X\cup Y)\le a(A)+a(B)$, completing the proof.
\end{proof}

\begin{lemma}[One disjoint conjugation layer]
  \label{lem:supp-disjoint-layer}
  Let $T=\sum_{Y\in\mathcal C}S_Y$, where the anti-Hermitian $S_Y$ have pairwise disjoint supports, $|Y|\le q_s$, and $\diam(Y)\le a_s$.
  For the induced termwise decomposition of $\ee^{-uT}A\ee^{uT}$,
  \begin{equation}
    \norm{\ee^{-uT}A\ee^{uT}}_{\mathrm{loc}}
    \le q_s\norm{A}_{\mathrm{loc}}.
    \label{eq:supp-layer-bound}
  \end{equation}
  Each output term has cardinality at most $q_sq(A)$ and diameter at most $a(A)+2a_s$.
\end{lemma}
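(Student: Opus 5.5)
The plan is to build the induced decomposition explicitly. Because the supports in $\mathcal C$ are pairwise disjoint, the factors $\ee^{uS_Y}$ commute, and
\[
\ee^{uT}=\prod_{Y\in\mathcal C}\ee^{uS_Y}.
\]
For each term $a_X$ of $A$, all factors whose support misses $X$ commute with $a_X$ and cancel in the conjugation. This gives
\[
\ee^{-uT}a_X\ee^{uT}=W_X^\dagger a_XW_X,
\qquad
W_X:=\prod_{Y\in\mathcal C,\ Y\cap X\ne\varnothing}\ee^{uS_Y}.
\]
Each $\ee^{uS_Y}$ is unitary, since $S_Y$ is anti-Hermitian. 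The induced decomposition then assigns one output term $a'_{X}:=W_X^\dagger a_XW_X$ to each input term. Its support is contained in
\[
X':=X\cup\bigcup_{Y\cap X\ne\varnothing}Y,
\]
and unitary invariance gives $\norm{a'_X}_\infty=\norm{a_X}_\infty$.

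\emph{Support bounds.} The sets $Y$ meeting $X$ are disjoint, so there are at most $|X|$ of them. Each has at most $q_s$ sites, so $|X'|\le |X|+|X|q_s$. This overshoots the claimed $q_sq(A)$ by one factor. I would tighten it by noting that each site of $X$ either lies in exactly one $Y$, contributing at most $q_s$ sites including itself, or lies in none, contributing $1\le q_s$. This gives $|X'|\le q_s|X|\le q_sq(A)$, with the convention $q_s\ge1$.

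For the diameter, take $u,v\in X'$. Each point lies either in $X$ or in some $Y$ meeting $X$ at a point $z$, and $\dist(u,z)\le a_s$ in the latter case. The triangle inequality then gives
\[
\dist(u,v)\le a_s+a(A)+a_s .
\]

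\emph{Local norm bound.} Fix a site $x$. An output term $a'_X$ contributes at $x$ only if $x\in X'$. That means either $x\in X$, or $x$ lies in the unique layer support $Y_x\ni x$ (if one exists) and $X\cap Y_x\ne\varnothing$. In both cases $X$ meets the set $Y_x\cup\{x\}$, which has at most $q_s$ elements. The sum of $\norm{a_X}_\infty$ over such $X$ is therefore bounded by summing over the sites $y$ of that set the incident sums $\sum_{X\ni y}\norm{a_X}_\infty\le\norm{A}_{\mathrm{loc}}$. This gives at most $q_s\norm{A}_{\mathrm{loc}}$, and taking the supremum over $x$ proves Eq.~\eqref{eq:supp-layer-bound}.

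The main obstacle I anticipate is only bookkeeping care. It consists of getting the exact constant $q_s$ rather than $q_s+1$ in both the cardinality and the norm counts, and using disjointness to ensure a site lies in at most one $Y$. The analytic content is trivial because conjugation is exact, not expanded in $u$.
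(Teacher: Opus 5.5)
Your proposal is correct and follows the paper's proof: factorize the layer unitary, cancel the non-intersecting factors, use unitary invariance of the term norms, and count incident terms through the unique layer support $Y_x$ containing the site. The differences are cosmetic. You bound $|X'|$ by covering it site by site with the sets $\{x\}\cup Y_x$, where the paper uses the count $|X|+r_X(q_s-1)$. You also merge the paper's two cases for the output site into one argument via $Y_x\cup\{x\}$.
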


\begin{proof}
  Because the supports in $\mathcal C$ are disjoint, the operators $S_Y$ commute with one another and the layer unitary factorizes as
  \begin{equation*}
    \ee^{uT}
    =
    \prod_{Y\in\mathcal C}\ee^{uS_Y}.
  \end{equation*}
  Fix one input term $a_X$.
  If $Y\cap X=\varnothing$, then $S_Y$ commutes with $a_X$, so the corresponding unitary factors cancel from its conjugation.
  Therefore
  \begin{equation*}
    a_X^{(T)}(u)
    =
    \left(
      \prod_{\substack{Y\in\mathcal C\\Y\cap X\neq\varnothing}}
      \ee^{-uS_Y}
    \right)
    a_X
    \left(
      \prod_{\substack{Y\in\mathcal C\\Y\cap X\neq\varnothing}}
      \ee^{uS_Y}
    \right).
  \end{equation*}
  Thus $a_X^{(T)}(u)$ is supported on
  \begin{equation*}
    \widetilde X^{(T)}
    :=
    X
    \cup
    \bigcup_{\substack{Y\in\mathcal C\\Y\cap X\neq\varnothing}}Y.
  \end{equation*}

  Let $r_X$ be the number of layer supports $Y$ that intersect $X$.
  Since these supports are pairwise disjoint, each one contains a distinct site of $X$, and hence $r_X\le\abs{X}$.
  Moreover, an intersecting support $Y$ adds at most $q_s-1$ new sites to $X$.
  It follows that
  \begin{equation*}
    \abs{\widetilde X^{(T)}}
    \le
    \abs{X}+r_X(q_s-1)
    \le
    q_s\abs{X}
    \le
    q_sq(A).
  \end{equation*}
  For the diameter bound, every point of $\widetilde X^{(T)}$ lies within distance at most $a_s$ of some point of $X$.
  The triangle inequality therefore gives
  \begin{equation*}
    \diam(\widetilde X^{(T)})
    \le
    a_s+a(A)+a_s
    =
    a(A)+2a_s.
  \end{equation*}

  We now bound the local decomposition norm.
  Since $T$ is anti-Hermitian, $\ee^{uT}$ is unitary, and hence
  \begin{equation*}
    \norm{a_X^{(T)}(u)}_{\infty}
    =
    \norm{a_X}_{\infty}.
  \end{equation*}
  Fix an output site $x$.
  If $x$ lies in no layer support, i.e., $x\notin\bigcup_{Y\in\mathcal C}Y$, then $x\in\widetilde X^{(T)}$ only if $x\in X$.
  The incident output sum at $x$ is consequently at most
  \begin{equation*}
    \sum_{X\ni x}\norm{a_X}_{\infty}
    \le
    \norm{A}_{\mathrm{loc}}.
  \end{equation*}
  If instead $x$ lies in a layer support, disjointness makes this support a unique set $Y_x$.
  In this case $x\in\widetilde X^{(T)}$ precisely when $X\cap Y_x\neq\varnothing$.
  The corresponding incident sum therefore obeys
  \begin{equation*}
    \begin{aligned}
      \sum_{X:\,x\in\widetilde X^{(T)}}
      \norm{a_X^{(T)}(u)}_{\infty}
      &=
      \sum_{X:\,X\cap Y_x\neq\varnothing}
      \norm{a_X}_{\infty}
      \\
      &\le
      \sum_{y\in Y_x}\sum_{X\ni y}\norm{a_X}_{\infty}
      \\
      &\le
      \abs{Y_x}\norm{A}_{\mathrm{loc}}
      \\
      &\le
      q_s\norm{A}_{\mathrm{loc}},
    \end{aligned}
  \end{equation*}
  where the first inequality follows from the fact that every $X$ intersecting $Y_x$ contains at least one site $y\in Y_x$, and summing over all $y\in Y_x$ may count such an $X$ more than once.
  The first case satisfies the same bound because $q_s\ge1$.
  Taking the supremum over $x$ proves Eq.~\eqref{eq:supp-layer-bound} and completes the proof.
\end{proof}

\subsection{Proof of Theorem~IV.1}
\label{sec:supp-first-order-proof}

\begin{proof}
  Adopt the notation and hypotheses of Theorem~IV.1 in the main text.
  The proof below constructs explicit volume-independent constants for the residual support and incident strength.
  By assumption (i), choose $N$-independent bounds $j_0$, $j_V$, $q_h$, and $a_h$ such that
  \begin{equation*}
    \begin{aligned}
      \norm{H_{C,0}}_{\mathrm{loc}}&\le j_0,
      &\norm{V_C}_{\mathrm{loc}}&\le j_V,\\
      q(H_{C,0}),q(V_C)&\le q_h,
      &a(H_{C,0}),a(V_C)&\le a_h.
    \end{aligned}
  \end{equation*}
  By assumption (ii), choose $N$-independent schedule bounds $m$, $q_s$, $a_s$, and $s_0$ as defined in the main text.
  First observe that no separate locality assumption on $P_C(V_C)$ is required.
  The scheduled-solution hypothesis gives
  \begin{equation*}
    \norm{S_1}_{\mathrm{loc}}
    \le
    \sum_{j=1}^{m}\norm{T_j}_{\mathrm{loc}}
    \le
    ms_0,
    \qquad
    q(S_1)\le q_s,
    \qquad
    a(S_1)\le a_s.
  \end{equation*}
  The strong homological equation implies
  \begin{equation*}
    P_C(V_C)
    =
    V_C+[H_{C,0},S_1].
  \end{equation*}
  Lemma~\ref{lem:supp-local-commutator} therefore supplies a volume-uniform local decomposition of $P_C(V_C)$; for example,
  \begin{equation*}
    \norm{P_C(V_C)}_{\mathrm{loc}}
    \le
    j_V
    +
      2(q_h+q_s)j_0ms_0.
  \end{equation*}
  The same lemma gives $q(P_C(V_C))\le q_h+q_s-1$ and $a(P_C(V_C))\le a_h+a_s$ for the induced decomposition.
  Hence $H_{C,\lambda}^{(1)}=H_{C,0}+\lambda P_C(V_C)$ has uniformly bounded local norm, term cardinality, and term diameter throughout $|\lambda|\le\lambda_0$.

  Define
  \begin{equation*}
    H_{\mathrm{red}}(u)
    :=
    \left(W_{u,\mathrm{loc}}^{(1)}\right)^{\dagger}
    \left(H_{C,0}+uV_C\right)
    W_{u,\mathrm{loc}}^{(1)}.
  \end{equation*}
  At $u=0$, every circuit layer is the identity and
  \begin{equation*}
    \left.\frac{d}{du}W_{u,\mathrm{loc}}^{(1)}\right|_{u=0}
    =
    \sum_{j=1}^{m}T_j
    =
    S_1.
  \end{equation*}
  The product rule and the strong homological equation therefore give
  \begin{equation*}
    \begin{aligned}
      H_{\mathrm{red}}(0)
      &=
      H_{C,0},
      \\
      H_{\mathrm{red}}'(0)
      &=
      V_C+[H_{C,0},S_1]
      =
      P_C(V_C).
    \end{aligned}
  \end{equation*}
  By the definition of the exact residual, $R_{\lambda}^{(1)}=H_{\mathrm{red}}(\lambda)-H_{C,\lambda}^{(1)}$.
  Taylor's theorem with integral remainder and the two identities above therefore give
  \begin{equation}
    R_{\lambda}^{(1)}
    =
    \lambda^2
    \int_0^1(1-t)H_{\mathrm{red}}''(t\lambda)\,dt.
    \label{eq:supp-exact-remainder}
  \end{equation}
  It remains to bound the second derivative in the local decomposition norm.

  For each layer $j$, define
  \begin{equation*}
    \Phi_j(u)(O)
    :=
    \ee^{-uT_j}O\ee^{uT_j},
    \qquad
    \mathcal T_j(O)
    :=
    [O,T_j].
  \end{equation*}
  Define the superoperator
  \begin{equation*}
    F(u)
    :=
    \Phi_m(u)\cdots\Phi_1(u)
  \end{equation*}
  so that
  \begin{equation*}
    H_{\mathrm{red}}(u)
    =
    F(u)\left(H_{C,0}+uV_C\right).
  \end{equation*}
  Direct differentiation gives $\Phi_j'(u)=\Phi_j(u)\mathcal T_j=\mathcal T_j\Phi_j(u)$.
  Since the bare Hamiltonian is affine in $u$, the product rule gives
  \begin{equation*}
    H_{\mathrm{red}}''(u)
    =
    F''(u)\left(H_{C,0}+uV_C\right)
    +
    2F'(u)(V_C).
  \end{equation*}
  Differentiating the $m$ factors in $F(u)$ shows that this expression is a sum of two types of terms:
  \begin{enumerate}
      \renewcommand{\labelenumi}{(\alph{enumi})}
    \item $m^2$ terms containing two commutator superoperators $\mathcal T_j$ acting on $H_{C,0}+uV_C$; and
    \item $2m$ terms containing one $\mathcal T_j$ acting together with the derivative $V_C$ of the bare Hamiltonian.
  \end{enumerate}
  Every term also contains at most one exact conjugation $\Phi_j(u)$ from each color layer.

  We now give a uniform bound for all these terms.
  Set
  \begin{equation*}
    q_*
    :=
    \left[q_h+2(q_s-1)\right]q_s^m,
    \qquad
    d_*
    :=
    2(q_*+q_s)s_0.
  \end{equation*}
  Lemma~\ref{lem:supp-local-commutator} implies that each occurrence of $\mathcal T_j$ multiplies the local norm by at most $d_*$ once all intermediate supports are bounded by $q_*$.
  Lemma~\ref{lem:supp-disjoint-layer}, applied successively to the $m$ colors, multiplies the local norm by at most $q_s^m$.
  More explicitly, the two commutators enlarge an input support cardinality by at most $2(q_s-1)$, while each of the at most $m$ exact layers multiplies the current cardinality by at most $q_s$.
  Hence $q_*=[q_h+2(q_s-1)]q_s^m$ bounds every term produced by two commutators and all exact layers.
  Therefore, for $\abs{u}\le\lambda_0$,
  \begin{equation}
    \norm{H_{\mathrm{red}}''(u)}_{\mathrm{loc}}
    \le
    K_2,
    \label{eq:supp-second-derivative-bound}
  \end{equation}
  with the explicit conservative constant
  \begin{equation*}
    K_2
    :=
    q_s^m
    \left[
      m^2d_*^2(j_0+\lambda_0j_V)
      +
      2md_*j_V
    \right].
  \end{equation*}
  The corresponding term diameters are bounded by
  \begin{equation*}
    a_*
    :=
    a_h+2(m+1)a_s,
  \end{equation*}
  since a commutator joins overlapping supports and each disjoint conjugation layer enlarges a support only through gates that meet it.

  Insert Eq.~\eqref{eq:supp-second-derivative-bound} into Eq.~\eqref{eq:supp-exact-remainder}.
  Because $\int_0^1(1-t)\,dt=1/2$, the resulting declared decomposition satisfies
  \begin{equation*}
    \norm{R_{\lambda}^{(1)}}_{\mathrm{loc}}
    \le
    \frac{K_2}{2}\lambda^2.
  \end{equation*}
  At each finite volume, the terms in the differentiated decomposition are indexed by finitely many original Hamiltonian terms, generator supports, and color choices.
  Integrating each such term over $t$ therefore produces a declared local decomposition of the exact remainder.
  Because $R_{\lambda}^{(1)}$ is Hermitian, replacing every integrated term $r_X$ by $(r_X+r_X^{\dagger})/2$ preserves their total sum and does not increase any operator norm or support.
  Combining contributions with the same support also preserves the preceding incident bound by the triangle inequality.
  Set $C_{\mathrm{SW}}:=K_2/2$, $q_R:=q_*$, and $a_R:=a_*$.
  Eq.~\eqref{eq:supp-centered-local} then gives
  \begin{equation}
    \varepsilon_R(\lambda)
    \le
    \frac{K_2}{2}\lambda^2
    =
    C_{\mathrm{SW}}\lambda^2.
    \label{eq:supp-quadratic-residual}
  \end{equation}
  The bounds $q_R$ and $a_R$ are independent of $N$.
  Together with the uniform locality of $H_{C,\lambda}^{(1)}$ proved above, this establishes every conclusion of the theorem.
\end{proof}

The next three subsections instantiate the first-order construction for the exact anchors of Sec.~III in the main text.
They are Hamiltonian-level residual-reduction examples and do not establish the model-specific thermal-response premise required for the conditional local certificate.

\subsection{Graph-stabilizer anchor with an Ising deformation}
\label{sec:supp-graph-reduction}

Consider the open graph-stabilizer chain of Sec.~III.A in the main text.
Its anchor Hamiltonian and Ising deformation are
\begin{equation}
  H_0=-\sum_{i=1}^Nh_iK_i,
  \qquad
  V=\sum_{i=1}^{N-1}J_iZ_iZ_{i+1}.
  \label{eq:supp-graph-physical}
\end{equation}
The graph tail satisfies $U_G^\dagger K_iU_G=Z_i$ and $U_G^\dagger Z_iU_G=X_i$, so the core-frame pair is
\begin{equation}
  H_{C,0}=-\sum_i h_iZ_i,
  \qquad
  V_C=\sum_iJ_iX_iX_{i+1}.
  \label{eq:supp-graph-core}
\end{equation}
Set $\sigma_i^\pm=(X_i\pm\mathrm iY_i)/2$ and define $A_i^{\mathrm p}=\sigma_i^+\sigma_{i+1}^+$ and $A_i^{\mathrm e}=\sigma_i^+\sigma_{i+1}^-$.
On bond $i$,
\begin{equation*}
  X_iX_{i+1}
  =A_i^{\mathrm p}+A_i^{{\mathrm p}\dagger}
   +A_i^{\mathrm e}+A_i^{{\mathrm e}\dagger}.
\end{equation*}
Using $[Z,\sigma^\pm]=\pm2\sigma^\pm$ gives the pair and exchange transition energies
\begin{equation}
  \omega_i^{\mathrm p}=-2(h_i+h_{i+1}),
  \qquad
  \omega_i^{\mathrm e}=-2(h_i-h_{i+1}).
  \label{eq:supp-graph-frequencies}
\end{equation}
When both frequencies are nonzero, the local generator
\begin{equation}
  S_i=-\frac{J_i}{\omega_i^{\mathrm p}}
  (A_i^{\mathrm p}-A_i^{{\mathrm p}\dagger})
  -\frac{J_i}{\omega_i^{\mathrm e}}
  (A_i^{\mathrm e}-A_i^{{\mathrm e}\dagger})
  \label{eq:supp-graph-generator}
\end{equation}
satisfies $[H_{C,0},S_i]=-J_iX_iX_{i+1}$.
The corresponding dimensionless channel parameters are
\begin{equation}
  \eta_i^{\mathrm p}
  =\frac{|\lambda J_i|}{2|h_i+h_{i+1}|},
  \qquad
  \eta_i^{\mathrm e}
  =\frac{|\lambda J_i|}{2|h_i-h_{i+1}|}.
  \label{eq:supp-graph-channel-parameters}
\end{equation}
Thus the prescribed correction is locally perturbative when $\eta_{\mathrm{graph}}:=\sup_i\max\{\eta_i^{\mathrm p},\eta_i^{\mathrm e}\}\ll1$.
Odd and even bonds are disjoint support classes, so the scheduled correction and physical tail are
\begin{equation}
  W_{\lambda,\mathrm{loc}}^{(1)}
  =\ee^{\lambda S_{\mathrm{odd}}}\ee^{\lambda S_{\mathrm{even}}},
  \qquad
  U_\lambda^{(1)}=U_GW_{\lambda,\mathrm{loc}}^{(1)}.
  \label{eq:supp-graph-correction}
\end{equation}
For the product-syndrome core, $P_C(V_C)=0$, so $H_{C,\lambda}^{(1)}=H_{C,0}$ and the original product loader remains valid.
Theorem~IV.1 then gives a bounded-range residual with $\varepsilon_R(\lambda)\le C_{\mathrm{SW}}\lambda^2$ whenever its uniform assumptions hold.
In state-preparation order, one loads the product core, applies the even- and odd-bond correction layers in the order specified by Eq.~\eqref{eq:supp-graph-correction}, and then applies $U_G$.

The exchange channel is resonant at $h_i=h_{i+1}$, whereas the pair channel is resonant at $h_i=-h_{i+1}$.
At either equality, the corresponding division in Eq.~\eqref{eq:supp-graph-generator} is undefined, and near either equality one of the parameters in Eq.~\eqref{eq:supp-graph-channel-parameters} becomes large.
The affected channel must instead be incorporated through an enlarged loadable core or a model-specific resonant reduction, or left as an explicit first-order residual.
As a contrasting exact deformation, $\sum_iJ_iK_iK_{i+1}$ becomes $\sum_iJ_iZ_iZ_{i+1}$ in the core frame and can be absorbed into the finite-memory syndrome core of Sec.~III.A without changing $U_G$.

\subsection{Rydberg-dimer anchor with inter-dimer interactions}
\label{sec:supp-rydberg-reduction}

The independent-dimer anchor of Sec.~III.B retains the drive, detuning, and intra-dimer interaction within each four-level block.
For dimer $m$, let $\ket{\phi_{m,\alpha}}$ and $\epsilon_{m,\alpha}$, with $\alpha=1,\ldots,4$, denote its physical eigenstates and energies, and let $u_m\ket{\alpha}=\ket{\phi_{m,\alpha}}$.
The exact anchor data are
\begin{equation}
  H_{C,0}
  =\sum_{m=1}^{M}\sum_{\alpha=1}^{4}
  \epsilon_{m,\alpha}\ket{\alpha}\!\bra{\alpha}_m,
  \qquad
  U_0=\bigotimes_{m=1}^{M}u_m.
  \label{eq:supp-rydberg-reduction-anchor}
\end{equation}
Consider an open chain of dimers with bounded nearest-neighbor interactions
\begin{equation}
  V
  =\sum_{m=1}^{M-1}\sum_{a,b=1}^{2}
  J_m^{ab}n_{m,a}n_{m+1,b},
  \label{eq:supp-rydberg-deformation}
\end{equation}
where $a$ and $b$ label the two atoms within each dimer and the real couplings $J_m^{ab}$ are uniformly bounded.
Pulling this interaction through the block tail gives
\begin{equation}
  V_C
  =\sum_{m=1}^{M-1}\sum_{a,b=1}^{2}
  J_m^{ab}\widetilde n_{m,a}\widetilde n_{m+1,b},
  \qquad
  \widetilde n_{m,a}:=u_m^\dagger n_{m,a}u_m.
  \label{eq:supp-rydberg-core-deformation}
\end{equation}
This conjugation preserves the two-dimer support, although $\widetilde n_{m,a}$ is generally not diagonal in the dimer energy-label basis.

Introduce the matrix units $E_m^{\alpha\alpha'}:=\ket{\alpha}\!\bra{\alpha'}_m$ and expand
\begin{equation*}
  \widetilde n_{m,a}
  =\sum_{\alpha,\alpha'=1}^{4}
  \nu_{\alpha\alpha'}^{(m,a)}E_m^{\alpha\alpha'},
  \qquad
  \nu_{\alpha\alpha'}^{(m,a)}
  :=\bra{\phi_{m,\alpha}}n_{m,a}\ket{\phi_{m,\alpha'}}.
\end{equation*}
Then
\begin{equation}
  V_C
  =\sum_{m=1}^{M-1}
  \sum_{\alpha,\alpha',\beta,\beta'=1}^{4}
  \mathcal J_m^{\alpha\alpha';\beta\beta'}
  E_m^{\alpha\alpha'}E_{m+1}^{\beta\beta'},
  \label{eq:supp-rydberg-matrix-units}
\end{equation}
where
\begin{equation*}
  \mathcal J_m^{\alpha\alpha';\beta\beta'}
  :=\sum_{a,b=1}^{2}J_m^{ab}
  \nu_{\alpha\alpha'}^{(m,a)}
  \nu_{\beta\beta'}^{(m+1,b)}.
\end{equation*}
Hermiticity pairs each transition with its reverse.
The transition from initial labels $(\alpha',\beta')$ to final labels $(\alpha,\beta)$ has core-energy change
\begin{equation}
  \omega_{m;\alpha\alpha',\beta\beta'}
  =\epsilon_{m,\alpha}-\epsilon_{m,\alpha'}
  +\epsilon_{m+1,\beta}-\epsilon_{m+1,\beta'}.
  \label{eq:supp-rydberg-transition-energy}
\end{equation}

Let $P_C$ retain the diagonal part in the product energy-label basis.
Its action on Eq.~\eqref{eq:supp-rydberg-matrix-units} gives
\begin{equation}
  P_C(V_C)
  =\sum_{m=1}^{M-1}\sum_{\alpha,\beta=1}^{4}
  \mathcal J_m^{\alpha\alpha;\beta\beta}
  E_m^{\alpha\alpha}E_{m+1}^{\beta\beta}.
  \label{eq:supp-rydberg-retained}
\end{equation}
The retained term correlates neighboring four-valued energy labels without changing their basis states.
Consequently, $H_{C,\lambda}^{(1)}=H_{C,0}+\lambda P_C(V_C)$ is a one-dimensional nearest-neighbor classical model of four-valued labels.
The finite-memory construction in Sec.~S3.B applies after replacing each binary syndrome label by the four-valued dimer label.
The backward recursion has four possible next labels, and the controlled loader prepares the corresponding four-outcome conditional distribution on the next two-qubit dimer register.
All such controls have fixed width.

Every remaining term changes at least one energy label and belongs to $Q_C(V_C)$.
For one orientation $\xi=(\alpha,\alpha';\beta,\beta')$ of a transition pair, define
\begin{equation*}
  A_{m,\xi}
  :=\mathcal J_m^{\alpha\alpha';\beta\beta'}
  E_m^{\alpha\alpha'}E_{m+1}^{\beta\beta'},
  \qquad
  \omega_{m,\xi}:=\omega_{m;\alpha\alpha',\beta\beta'}.
\end{equation*}
For $\omega_{m,\xi}\ne0$, the anti-Hermitian generator
\begin{equation}
  S_{m,\xi}
  =\frac{A_{m,\xi}^\dagger-A_{m,\xi}}{\omega_{m,\xi}}
  \label{eq:supp-rydberg-generator}
\end{equation}
satisfies $[H_{C,0},S_{m,\xi}]=-(A_{m,\xi}+A_{m,\xi}^\dagger)$.
Summing the channels on each bond gives a two-dimer generator, and odd and even bonds again form two parallel layers.
If all corrected channels obey a uniform active-gap bound $|\omega_{m,\xi}|\ge\gamma_{\mathrm{Ry}}>0$, Theorem~IV.1 gives $\varepsilon_R(\lambda)\le C_{\mathrm{Ry}}\lambda^2$ with a volume-independent constant.
Vanishing transition energies in Eq.~\eqref{eq:supp-rydberg-transition-energy} are local resonances and cannot be included in the generator in Eq.~\eqref{eq:supp-rydberg-generator}.
The construction applies directly to the bounded-range deformation in Eq.~\eqref{eq:supp-rydberg-deformation}; treating the complete $1/r^6$ interaction requires a separate long-range analysis.

\subsection{Quadratic-fermion anchor with weak interactions}
\label{sec:supp-fermion-reduction}

The quadratic-fermion anchor of Sec.~III.C has the diagonal occupation core and Gaussian tail
\begin{equation}
  H_{C,0}
  =\sum_{j=1}^{N}\epsilon_j
  \left(n_{C,j}-\frac12\id\right),
  \qquad
  U_0=U_F,
  \label{eq:supp-fermion-reduction-anchor}
\end{equation}
where $n_{C,j}=c_j^\dagger c_j$.
We use the Jordan--Wigner ordering aligned with the physical chain and assume that the encoded tail $U_0$ has a bounded-depth local implementation.
This circuit assumption is required in addition to algebraic Gaussian diagonalizability.

Let $f_j$ denote the annihilation operator of physical mode $j$, set $n_j=f_j^\dagger f_j$, and define $B_j=f_j^\dagger f_{j+1}^\dagger f_{j+2}f_{j+3}$.
As a representative bounded-range interaction, take
\begin{equation}
  \begin{aligned}
    V={}&\sum_{j=1}^{N-1}\kappa_j
    \left(n_j-\frac12\id\right)
    \left(n_{j+1}-\frac12\id\right)\\
    &+\sum_{j=1}^{N-3}
    \left(t_jB_j+t_j^*B_j^\dagger\right),
  \end{aligned}
  \label{eq:supp-fermion-deformation}
\end{equation}
where $\kappa_j\in\mathbb R$ and $t_j\in\mathbb C$ are uniformly bounded.
Both terms conserve fermion parity and have bounded support after the aligned Jordan--Wigner encoding.

To keep the core-frame interaction local, assume separately that the Gaussian mode transformation is banded in the chosen ordering:
\begin{equation}
  \widetilde f_j
  :=U_0^\dagger f_jU_0
  =\sum_{k\in\mathcal N_j}
  \left(u_{jk}c_k+v_{jk}c_k^\dagger\right),
  \label{eq:supp-local-bogoliubov}
\end{equation}
where every neighborhood $\mathcal N_j$ has volume-independent size and diameter.
This bandedness assumption and the bounded-depth encoded circuit assumption control different aspects of the construction: the former preserves the range of $V_C$, while the latter controls the operational tail geometry.
The pulled-back interaction is
\begin{equation}
  \begin{aligned}
    V_C={}&\sum_{j=1}^{N-1}\kappa_j
    \left(\widetilde n_j-\frac12\id\right)
    \left(\widetilde n_{j+1}-\frac12\id\right)\\
    &+\sum_{j=1}^{N-3}
    \left(t_j\widetilde B_j+t_j^*\widetilde B_j^\dagger\right),
  \end{aligned}
  \label{eq:supp-fermion-core-deformation}
\end{equation}
where $\widetilde n_j=\widetilde f_j^\dagger\widetilde f_j$ and $\widetilde B_j=\widetilde f_j^\dagger\widetilde f_{j+1}^\dagger\widetilde f_{j+2}\widetilde f_{j+3}$.

Choose $P_C$ to retain the part diagonal in the simultaneous eigenbasis of the occupations $n_{C,j}$.
Because this dephasing acts independently on the occupation qubits, it does not enlarge the support of a core-frame term.
The retained component has the form
\begin{equation}
  P_C(V_C)
  =e_C\id+\sum_k h_k^{(C)}n_{C,k}
  +\sum_{k<\ell}\kappa_{k\ell}^{(C)}n_{C,k}n_{C,\ell},
  \label{eq:supp-fermion-retained}
\end{equation}
where the coefficients are obtained from diagonal matrix elements of $V_C$ in the vacuum, one-particle, and two-particle occupation states.
The bandedness in Eq.~\eqref{eq:supp-local-bogoliubov} makes the two-body coefficients finite range, and the scalar $e_C\id$ may be omitted from the normalized Gibbs state.
Thus $H_{C,\lambda}^{(1)}=H_{C,0}+\lambda P_C(V_C)$ is a finite-range classical Hamiltonian of binary occupations and can be prepared by the fixed-memory loader of Sec.~S3.B.

Give $V_C$ the bounded-support decomposition inherited from Eqs.~\eqref{eq:supp-fermion-deformation}--\eqref{eq:supp-local-bogoliubov} and decompose $Q_C(V_C)$ into local processes that change definite core occupations.
After pairing each process with its Hermitian-conjugate reverse, write
\begin{equation}
  Q_C(V_C)=\sum_\alpha(A_\alpha+A_\alpha^\dagger),
  \qquad
  [H_{C,0},A_\alpha]=\omega_\alpha A_\alpha.
  \label{eq:supp-fermion-channels}
\end{equation}
For example, $A_j=g_jc_j^\dagger c_{j+1}^\dagger c_{j+2}c_{j+3}$ has transition energy $\omega_j=\epsilon_j+\epsilon_{j+1}-\epsilon_{j+2}-\epsilon_{j+3}$.
Every nonresonant channel is canceled by
\begin{equation}
  S_\alpha=\frac{A_\alpha^\dagger-A_\alpha}{\omega_\alpha}.
  \label{eq:supp-fermion-generator}
\end{equation}
Assume that the corrected channels have a volume-independent active gap, bounded incident transition strength, uniformly bounded supports, and a uniformly bounded overlap degree.
Combining channels with the same support and coloring overlapping supports then gives a volume-independent number of circuit layers.
Under these assumptions, Theorem~IV.1 yields $\varepsilon_R(\lambda)\le C_F\lambda^2$ with a volume-independent constant.
If a fixed physical parity is imposed, the resulting global constraint on the occupation strings must also be included in the core-preparation procedure.

\subsection{Proof of Corollary~IV.2}
\label{sec:supp-certification-proof}

\begin{proof}
  The identities $U_0^\dagger H_0U_0=H_{C,0}$ and $V_C=U_0^\dagger VU_0$, the definition of $R_\lambda^{(1)}$, and $U_\lambda^{(1)}=U_0W_{\lambda,\mathrm{loc}}^{(1)}$ give
  \begin{equation*}
    \rhoG{H_{\lambda}}
    =
    U_{\lambda}^{(1)}
    \rhoG{H_{C,\lambda}^{(1)}+R_{\lambda}^{(1)}}
    \left(U_{\lambda}^{(1)}\right)^{\dagger}.
  \end{equation*}
  Thus the ideal and residual-corrected core states are
  \begin{equation*}
    \rho_{\lambda,0}^{(1)}
    =\rhoG{H_{C,\lambda}^{(1)}},
    \qquad
    \rho_{\lambda,1}^{(1)}
    =\rhoG{H_{C,\lambda}^{(1)}+R_{\lambda}^{(1)}}.
  \end{equation*}
  Fix a nonempty core-frame region $B$.
  The shell-resolved local Gibbs bound in Theorem~II.10, applied along the comparison path in the corollary, yields
  \begin{equation*}
    \norm{[\rho_{\lambda,1}^{(1)}-\rho_{\lambda,0}^{(1)}]_B}_1
    \le\beta\varepsilon_R(\lambda)
    \left[|B|+C_\beta(B)\Sigma_g(B;\Lambda)\right].
  \end{equation*}
  For an output region $A$ of diameter at most $r$, set $B=B_{U_{\lambda}^{(1)}}(A)$.
  Taking the supremum over all such $A$ and using the definition of $\mathcal S_{\lambda,r}$ gives
  \begin{equation*}
    \varepsilon_{\mathrm{Gibbs}}(r)
    \le\beta\varepsilon_R(\lambda)\mathcal S_{\lambda,r}.
  \end{equation*}
  Theorem~IV.1 supplies $\varepsilon_R(\lambda)\le C_{\mathrm{SW}}\lambda^2$ and the support bounds required by Theorem~II.10.
  Substituting the resulting modeling-error bound into the operational error budget of Theorem~II.6 gives
  \begin{equation}
  \begin{split}
    D_r(\widetilde\rho_{\mathrm{out},\lambda},\rhoG{H_\lambda})
    \le\min\Big\{2,{}&\varepsilon_{\mathrm{tail}}(r)
      +\varepsilon_{\mathrm{core}}(r)\\
      &+\beta C_{\mathrm{SW}}\lambda^2\mathcal S_{\lambda,r}\Big\},
    \label{eq:supp-certified-first-order}
  \end{split}
  \end{equation}
  which is the claimed estimate.
\end{proof}

\subsection{Proof of Proposition~IV.3}
\label{sec:supp-higher-order-proof}

\begin{proof}
  For each $j=1,\ldots,k$, introduce the right-commutator superoperator
  \begin{equation*}
    \mathcal D_j(O):=[O,S_j].
  \end{equation*}
  The BCH conjugation formula gives
  \begin{equation*}
    \ee^{-S_\lambda^{(k)}}O\ee^{S_\lambda^{(k)}}
    =
    \exp\!\left(\sum_{j=1}^k\lambda^j\mathcal D_j\right)(O).
  \end{equation*}
  Apply this identity separately to $H_{C,0}$ and to $\lambda V_C$.
  At order $\lambda^\ell$, the only term containing the new generator $S_\ell$ is $\mathcal D_\ell(H_{C,0})=[H_{C,0},S_\ell]$.
  Any additional commutator composed with $\mathcal D_\ell$ raises the order above $\ell$, while $\mathcal D_\ell$ acting on the prefactored term $\lambda V_C$ begins at order $\ell+1$.
  All remaining contributions therefore depend only on the earlier generators $S_1,\ldots,S_{\ell-1}$.

  Collecting those contributions gives
  \begin{equation}
  \begin{split}
    \Xi_\ell={}&\delta_{\ell1}V_C
    +\sum_{r=2}^{\ell}\frac1{r!}
     \sum_{\substack{j_1+\cdots+j_r=\ell\\j_a\ge1}}
     \mathcal D_{j_r}\cdots\mathcal D_{j_1}(H_{C,0})\\
    &+\sum_{r=1}^{\ell-1}\frac1{r!}
     \sum_{\substack{j_1+\cdots+j_r=\ell-1\\j_a\ge1}}
     \mathcal D_{j_r}\cdots\mathcal D_{j_1}(V_C),
    \label{eq:supp-Xi-ell}
  \end{split}
  \end{equation}
  where $\delta_{\ell1}$ is the Kronecker delta and a sum over an empty range vanishes.
  In the first nested-commutator sum, $r\ge2$ forces every $j_a<\ell$; in the second, the indices sum to $\ell-1$ and again satisfy $j_a<\ell$.
  This proves the stated decomposition of $G_\ell$ and the dependence of $\Xi_\ell$ on earlier data only.

  Choose the generators recursively.
  Once $S_1,\ldots,S_{\ell-1}$ have been fixed, $\Xi_\ell$ is known and the order-$\ell$ homological equation is linear in $S_\ell$.
  If it is solved, then $Q_C(G_\ell)=0$ and therefore
  \begin{equation*}
    G_\ell=P_C(G_\ell)=\Delta H_C^{(\ell)}\in\mathfrak H_C.
  \end{equation*}
  Repeating this argument for $\ell=1,\ldots,k$ makes every Taylor coefficient through order $k$ core compatible.
  At fixed finite volume, the truncated exponential is analytic in $\lambda$, so Taylor's theorem places all remaining terms in an operator-norm remainder of order $O(\lambda^{k+1})$.
  This proves the proposition.
\end{proof}

\section{Numerical methodology and additional results}
\label{sec:supp-numerics}

This section gives the complete protocol underlying the finite-size numerical demonstrations in the main text.
The calculations test the mechanism of the first-order local reduction on one exactly solvable benchmark; they are not used as a substitute for the volume-uniform theorem or for a response estimate along a complete Gibbs interpolation path.

\subsection{Benchmark and implemented first-order circuit}
\label{sec:supp-numerical-benchmark}

For an open chain of $N$ qubits, let $K_i=Z_{i-1}X_iZ_{i+1}$, with the absent factor omitted at either boundary.
The physical anchor is
\begin{equation}
  H_0=-\sum_{i=1}^{N}h_iK_i.
  \label{eq:supp-numerical-anchor}
\end{equation}
The deformation is
\begin{equation}
  V=J\sum_{i=1}^{N-1}Z_iZ_{i+1}.
  \label{eq:supp-numerical-deformation}
\end{equation}
Together they define
\begin{equation}
  H_\lambda=H_0+\lambda V.
  \label{eq:supp-numerical-physical-model}
\end{equation}
Conjugation by the graph-state Clifford tail $U_G$ gives the core anchor
\begin{equation}
  H_{C,0}=-\sum_{i=1}^{N}h_iZ_i.
  \label{eq:supp-numerical-core-anchor}
\end{equation}
The deformation becomes
\begin{equation}
  V_C=J\sum_{i=1}^{N-1}X_iX_{i+1}.
  \label{eq:supp-numerical-core-deformation}
\end{equation}
The core-frame target is therefore
\begin{equation}
  H_C(\lambda)=H_{C,0}+\lambda V_C.
  \label{eq:supp-numerical-core-model}
\end{equation}
We use the staggered fields
\begin{equation}
  h_i=\bar h+(-1)^i\delta,
  \label{eq:supp-numerical-fields}
\end{equation}
with $\bar h=1.25$ and $J=1$.
We use one-based site labels.
Unless explicitly varied, $\delta=0.25$, so the fields alternate between $1$ and $1.5$.
This core-frame model is an inhomogeneous transverse-field Ising chain after a local basis rotation and is therefore free-fermionic~\cite{Pfeuty1970}.
Its exact solvability makes it a controlled benchmark, but its generally nonlocal Gaussian diagonalizer does not replace the bounded-depth correction tested here.

With $\sigma_i^\pm=(X_i\pm\mathrm{i}Y_i)/2$, define the pair transition on bond $(i,i+1)$ by
\begin{equation}
  A_{i,\mathrm p}=\sigma_i^+\sigma_{i+1}^+.
  \label{eq:supp-numerical-transitions}
\end{equation}
The exchange transition is
\begin{equation}
  A_{i,\mathrm e}=\sigma_i^+\sigma_{i+1}^-.
  \label{eq:supp-numerical-exchange-transition}
\end{equation}
The oriented pair transition energy is
\begin{equation}
  \omega_i^{\mathrm p}=-2(h_i+h_{i+1})=-4\bar h.
  \label{eq:supp-numerical-transition-energies}
\end{equation}
The oriented exchange transition energy is
\begin{equation}
  \omega_i^{\mathrm e}=-2(h_i-h_{i+1})=-4(-1)^i\delta.
  \label{eq:supp-numerical-exchange-energy}
\end{equation}
For $\delta\ne0$, the bond generator solving the first-order cancellation equation is
\begin{equation}
  \begin{split}
    S_i={}&
    \frac{J}{2(h_i+h_{i+1})}
    \left(A_{i,\mathrm p}-A_{i,\mathrm p}^{\dagger}\right)
    \\
    &+
    \frac{J}{2(h_i-h_{i+1})}
    \left(A_{i,\mathrm e}-A_{i,\mathrm e}^{\dagger}\right).
  \end{split}
  \label{eq:supp-numerical-local-generator}
\end{equation}
Writing $S_{\rm odd}$ and $S_{\rm even}$ for the sums on the two disjoint bond layers, the implemented correction is the exact scheduled circuit
\begin{equation}
  W_{\lambda,\mathrm{loc}}^{(1)}
  =\exp(\lambda S_{\rm odd})\exp(\lambda S_{\rm even}).
  \label{eq:supp-numerical-correction}
\end{equation}
Each two-qubit exponential is evaluated directly.
Thus the corrected residual
\begin{equation}
  R_\lambda^{(1)}
  =W_{\lambda,\mathrm{loc}}^{(1)\dagger}
  (H_{C,0}+\lambda V_C)W_{\lambda,\mathrm{loc}}^{(1)}-H_{C,0}
  \label{eq:supp-numerical-corrected-residual}
\end{equation}
contains all higher-order terms generated by this finite rotation; no BCH or product-formula truncation is made.
The bare residual is $R_\lambda^{(0)}=\lambda V_C$.
No native-gate compilation, routing, or noise model is included.

The dimensionless pair-channel ratio is
\begin{equation}
  \eta_{\mathrm p}=\frac{|\lambda J|}{4\bar h}.
  \label{eq:supp-numerical-pair-ratio}
\end{equation}
The corresponding exchange-channel ratio is
\begin{equation}
  \eta_{\mathrm e}=\frac{|\lambda J|}{4|\delta|}.
  \label{eq:supp-numerical-channel-ratios}
\end{equation}
At the default values, the exchange channel is restrictive and $\max\{\eta_{\mathrm p},\eta_{\mathrm e}\}=|\lambda|$.
At $\delta=0$, the exchange denominator vanishes and Eq.~\eqref{eq:supp-numerical-local-generator} does not define a finite exchange generator.

\subsection{Residual propagation and Gibbs-state evaluation}
\label{sec:supp-pauli-propagation}

For residual calculations, an operator is stored as a sparse dictionary in the phase-free Pauli basis.
We write
\begin{equation}
  R=\sum_{P\ne\id}c_PP,
  \label{eq:supp-pauli-expansion}
\end{equation}
where $P\in\{\id,X,Y,Z\}^{\otimes N}$ and the coefficients are real because $R$ and the basis strings are Hermitian.
The directly evaluated Pauli incident strength is
\begin{equation}
  \varepsilon_P(R)
  :=\max_{1\le i\le N}
  \sum_{P:\,i\in\supp(P)}|c_P|.
  \label{eq:supp-pauli-incident}
\end{equation}

The relation between this computable quantity and the centered incident strength in Theorem~IV.1 requires care.
Group all strings having exactly the same support into the declared interaction term
\begin{equation}
  r_X:=\sum_{P:\,\supp(P)=X}c_PP.
  \label{eq:supp-equal-support-grouping}
\end{equation}
The residual then has the interaction decomposition
\begin{equation}
  R=\sum_{\varnothing\ne X\subseteq\{1,\ldots,N\}}r_X.
  \label{eq:supp-equal-support-decomposition}
\end{equation}
For that term, define
\begin{equation}
  c_X^\star
  :=\frac{\lambda_{\max}(r_X)+\lambda_{\min}(r_X)}{2}.
  \label{eq:supp-grouped-center}
\end{equation}
Its centered spectral half-width is
\begin{equation}
  J_X:=\norm{r_X-c_X^\star\id_X}_\infty.
  \label{eq:supp-grouped-width}
\end{equation}
The spectral centering can only reduce the operator norm, and the triangle inequality then gives
\begin{equation}
  J_X
  \le \norm{r_X}_\infty
  \le\sum_{P:\,\supp(P)=X}|c_P|.
  \label{eq:supp-grouped-width-bound}
\end{equation}
Consequently the intrinsic incident strength of this equal-support decomposition is
\begin{equation}
  \varepsilon_R
  :=
  \max_i\sum_{X\ni i}J_X.
  \label{eq:supp-intrinsic-incident-strength}
\end{equation}
The termwise estimate above gives the precise one-sided relation
\begin{equation}
  \varepsilon_R\le
  \max_i\sum_{X\ni i}
  \sum_{P:\,\supp(P)=X}|c_P|
  =\varepsilon_P(R).
  \label{eq:supp-epsilonR-epsilonP}
\end{equation}
Equality is not assumed: Pauli components with the same support can cancel in operator norm, and spectral centering may lower the spectral half-width further.
If each Pauli string were instead retained as a separate declared term, its spectral half-width would be $|c_P|$ and the associated declared strength would equal $\varepsilon_P$.

To propagate Eq.~\eqref{eq:supp-pauli-expansion} through the circuit without forming a dense many-body matrix, let $G_i$ be one correction gate on sites $(i,i+1)$, set $(\tau_0,\tau_1,\tau_2,\tau_3)=(\id,X,Y,Z)$, and write $P_{ab}=\tau_a\otimes\tau_b$.
Conjugating a local Pauli operator gives
\begin{equation}
  G_i^\dagger P_{ab}G_i
  =\sum_{c,d=0}^{3}M_{cd,ab}^{(i)}P_{cd}.
  \label{eq:supp-local-pauli-transfer-action}
\end{equation}
The transfer coefficients are
\begin{equation}
  M_{cd,ab}^{(i)}
  =\frac14\Tr\!\left[P_{cd}G_i^\dagger P_{ab}G_i\right].
  \label{eq:supp-local-pauli-transfer}
\end{equation}
The factor $1/4$ is fixed by Pauli orthogonality on the four-dimensional two-qubit Hilbert space.
For a full string $P=P_{<i}\otimes P_{ab}\otimes P_{>i+1}$, the update is
\begin{equation}
  c_PP\longmapsto
  \sum_{c,d=0}^{3}c_PM_{cd,ab}^{(i)}
  P_{<i}\otimes P_{cd}\otimes P_{>i+1}.
  \label{eq:supp-full-pauli-update}
\end{equation}
Coefficients of coincident output strings are added immediately.
Applying the updates across the odd layer and then the even layer evaluates $W_{\lambda,\mathrm{loc}}^{(1)\dagger}(H_{C,0}+\lambda V_C)W_{\lambda,\mathrm{loc}}^{(1)}$ in the order required by Eq.~\eqref{eq:supp-numerical-correction}.
Transfer coefficients are evaluated in floating-point arithmetic, and coefficients of magnitude below $10^{-13}$ are discarded after each gate conjugation.
This sparse procedure reaches $N=32$ for the reported residual calculations.

To evaluate the physical-state errors, we use dense exact diagonalization at $N=8$.
The complete eigensystem is retained and the Boltzmann weights are normalized.
The exact physical Gibbs state is
\begin{equation}
  \rho_{\rm exact}
  =U_G\rhoG{H_{C,0}+\lambda V_C}U_G^\dagger.
  \label{eq:supp-physical-exact-state}
\end{equation}
The bare state is
\begin{equation}
  \rho_{\rm bare}
  =U_G\rhoG{H_{C,0}}U_G^\dagger.
  \label{eq:supp-physical-bare-state}
\end{equation}
The first-order corrected state is
\begin{equation}
  \rho_{\rm corr}
  =U_GW_{\lambda,\mathrm{loc}}^{(1)}\rhoG{H_{C,0}}
  W_{\lambda,\mathrm{loc}}^{(1)\dagger}U_G^\dagger.
  \label{eq:supp-physical-comparison-states}
\end{equation}
Using these physical-frame states, we evaluate the $D_1$ and $\Delta_{ZZ}$ metrics defined in Sec.~V of the main text.
The reported state errors are deterministic endpoint calculations rather than sampling estimates.

\subsection{Fit protocol and numerical checks}
\label{sec:supp-fit-protocol}

For the residual scaling, $N=12$ and the ten couplings $\lambda_k=10^{-3+k/4}$, $k=0,\ldots,9$, are used.
Each data set is fitted by ordinary least squares in log space to
\begin{equation}
  \ln y=\ln C+p\ln|\lambda|,
  \label{eq:supp-single-power-fit}
\end{equation}
using all ten points.
The bare residual gives $p_{\rm bare}=1.000$, with formal slope standard error below $10^{-15}$; the corrected residual gives $p_{\rm corr}=2.071\pm0.014$ and $R_{\rm fit}^2=0.9996$.
These standard errors quantify scatter about the assumed finite-window power law; they do not independently establish an asymptotic exponent.

The size scan fixes $\lambda=0.05$ and uses $N\in\{4,6,8,12,16,24,32\}$.
The bare value is $0.1$ at every size.
The corrected $\varepsilon_P$ reaches $0.0116491$ by $N=12$ and remains unchanged through $N=32$ at the displayed precision.
The largest propagated string in this scan has support cardinality six and diameter five.

For the physical-state scaling, $N=8$, $\beta\in\{0.5,1,2\}$, and the nine couplings $\lambda_k=0.005(1.5)^k$, $k=0,\ldots,8$, are used (hence $0.005\le\lambda\le0.12814453125$).
Temperature is allowed to change the prefactor but not the fitted perturbative order.
For each observable and each construction, all $27$ values are therefore fitted simultaneously to
\begin{equation}
  \ln y_\beta(\lambda)=b_\beta+p\ln|\lambda|,
  \label{eq:supp-grouped-power-fit}
\end{equation}
with three independent intercepts $b_\beta$ and one common slope $p$.
This is a single grouped regression, not an average of three separate slopes.
The grouped bare $D_1$ exponent is
\begin{equation}
  p_{\rm bare}=1.0004\pm0.0001.
  \label{eq:supp-D1-bare-exponent}
\end{equation}
The corresponding corrected exponent is
\begin{equation}
  p_{\rm corr}=1.9979\pm0.0004.
  \label{eq:supp-D1-exponents}
\end{equation}
Fits at individual temperatures give bare slopes between $0.9999$ and $1.0009$ and corrected slopes between $1.9975$ and $1.9987$.
For $\Delta_{ZZ}$, the grouped bare exponent is
\begin{equation}
  p_{\rm bare}=1.0002\pm0.0001.
  \label{eq:supp-ZZ-bare-exponent}
\end{equation}
The corrected exponent is
\begin{equation}
  p_{\rm corr}=2.9982\pm0.0003.
  \label{eq:supp-ZZ-exponents}
\end{equation}
with corresponding individual-temperature ranges $0.9998$--$1.0006$ and $2.9979$--$2.9988$.

\subsection{Sublattice symmetry and the cubic corrected bond error}
\label{sec:supp-ZZ-symmetry}

The near-cubic corrected exponent in Eq.~\eqref{eq:supp-ZZ-exponents} follows from a symmetry specific to this bipartite benchmark and observable.
The physical bond $Z_iZ_{i+1}$ pulls back through $U_G$ to $O_i=X_iX_{i+1}$.
Define
\begin{equation}
  \mathcal Z_{\rm e}:=\prod_{j\ {\rm even}}Z_j.
  \label{eq:supp-even-sublattice-Z}
\end{equation}
Every nearest-neighbor bond contains exactly one even site.
Consequently,
\begin{equation}
  \mathcal Z_{\rm e}H_{C,0}\mathcal Z_{\rm e}=H_{C,0}.
  \label{eq:supp-sublattice-anchor}
\end{equation}
The deformation changes sign:
\begin{equation}
  \mathcal Z_{\rm e}V_C\mathcal Z_{\rm e}=-V_C.
  \label{eq:supp-sublattice-deformation}
\end{equation}
The first-order generators likewise obey
\begin{equation}
  \mathcal Z_{\rm e}S_i\mathcal Z_{\rm e}=-S_i.
  \label{eq:supp-sublattice-generator}
\end{equation}
Finally, the pulled-back bond observable changes sign:
\begin{equation}
  \mathcal Z_{\rm e}O_i\mathcal Z_{\rm e}=-O_i.
  \label{eq:supp-sublattice-transformations}
\end{equation}
The exact core-frame Gibbs state $\rho_{{\rm exact},C}(\lambda)=\rhoG{H_{C,0}+\lambda V_C}$ therefore obeys
\begin{equation}
  \rho_{{\rm exact},C}(-\lambda)
  =\mathcal Z_{\rm e}\rho_{{\rm exact},C}(\lambda)\mathcal Z_{\rm e}.
  \label{eq:supp-exact-state-symmetry}
\end{equation}
Likewise, $\mathcal Z_{\rm e}W_{\lambda,\mathrm{loc}}^{(1)}\mathcal Z_{\rm e}=W_{-\lambda,\mathrm{loc}}^{(1)}$, and invariance of $\rhoG{H_{C,0}}$ gives
\begin{equation}
  \rho_{{\rm corr},C}(-\lambda)
  =\mathcal Z_{\rm e}\rho_{{\rm corr},C}(\lambda)\mathcal Z_{\rm e}.
  \label{eq:supp-corrected-state-symmetry}
\end{equation}
Using cyclicity of the trace and the last relation in Eq.~\eqref{eq:supp-sublattice-transformations}, both $\Tr[O_i\rho_{{\rm exact},C}(\lambda)]$ and $\Tr[O_i\rho_{{\rm corr},C}(\lambda)]$ are odd functions of $\lambda$.
Their difference is therefore odd as well.
The first-order cancellation equation removes its linear term; equivalently, finite-dimensional analyticity and the $O(\lambda^2)$ first-order state mismatch exclude an $O(\lambda)$ contribution.
The next symmetry-allowed term is cubic.
Hence the corrected bond error is $O(|\lambda|^3)$ in this model, consistent with the fitted exponent.
This argument neither upgrades the full $D_1$ error beyond second order nor supplies a generic third-order local-state theorem.

\subsection{Variational transition circuit and free-energy objective}
\label{sec:supp-variational-protocol}

Outside the controlled regime, the transition directions are retained while their finite angles are optimized.
On bond $(i,i+1)$, define
\begin{equation}
  W_i(\theta_{\mathrm p},\theta_{\mathrm e})
  :=\exp\!\left[
    \theta_{\mathrm p}(A_{i,\mathrm p}-A_{i,\mathrm p}^\dagger)
    +(-1)^i\theta_{\mathrm e}
    (A_{i,\mathrm e}-A_{i,\mathrm e}^\dagger)
  \right].
  \label{eq:supp-variational-bond-gate}
\end{equation}
The pair and exchange generators act on the even- and odd-excitation two-qubit subspaces, respectively, and commute on the same bond.
Thus this exponential is evaluated exactly.
The full variational correction has the same brick-wall order as the prescribed first-order circuit,
\begin{equation}
  W(\theta_{\mathrm p},\theta_{\mathrm e})
  =\prod_{i\ {\rm odd}}W_i(\theta_{\mathrm p},\theta_{\mathrm e})
  \prod_{i\ {\rm even}}W_i(\theta_{\mathrm p},\theta_{\mathrm e}),
  \label{eq:supp-variational-correction}
\end{equation}
with the same two angles on every bond.
For $\delta\ne0$, the perturbative pair angle in this convention is
\begin{equation}
  \theta_{\mathrm p}^{(1)}=\frac{\lambda J}{4\bar h}.
  \label{eq:supp-first-order-pair-angle}
\end{equation}
The exchange angle is
\begin{equation}
  \theta_{\mathrm e}^{(1)}=\frac{\lambda J}{4\delta}.
  \label{eq:supp-first-order-angles}
\end{equation}

The variational core is the two-sublattice product state
\begin{equation}
  \rho_C(\alpha_{\mathrm o},\alpha_{\mathrm e})
  =\bigotimes_{i=1}^{N}
  \left[
    \cos^2\!\alpha_{s(i)}|0\rangle\!\langle0|
    +\sin^2\!\alpha_{s(i)}|1\rangle\!\langle1|
  \right],
  \label{eq:supp-variational-core}
\end{equation}
where $s(i)=\mathrm o$ on odd sites and $s(i)=\mathrm e$ on even sites.
At fixed $\beta$, this is the Gibbs state of
\begin{equation}
  H_C(\alpha_{\mathrm o},\alpha_{\mathrm e})
  =-\frac1\beta\sum_{i=1}^{N}
  \ln\!\left(\cot\alpha_{s(i)}\right)Z_i
  \label{eq:supp-variational-core-hamiltonian}
\end{equation}
up to a scalar.
For $\boldsymbol\vartheta=(\theta_{\mathrm p},\theta_{\mathrm e}, \alpha_{\mathrm o},\alpha_{\mathrm e})$, the physical trial state is
\begin{equation}
  \rho_{\boldsymbol\vartheta}
  =U_GW(\theta_{\mathrm p},\theta_{\mathrm e})
  \rho_C(\alpha_{\mathrm o},\alpha_{\mathrm e})
  W(\theta_{\mathrm p},\theta_{\mathrm e})^\dagger U_G^\dagger.
  \label{eq:supp-variational-physical-state}
\end{equation}

The pair and exchange factors in the bond gate can be written as elementary two-qubit Pauli rotations.
For $p,q\in\{x,y,z\}$, let $\sigma_x=X$, $\sigma_y=Y$, and $\sigma_z=Z$, and define
\begin{equation}
  R_{pq}(\phi)
  =\exp\!\left[-\frac{\mathrm{i}\phi}{2}\sigma_p\otimes\sigma_q\right].
  \label{eq:supp-variational-pauli-rotation}
\end{equation}
The pair factor is
\begin{equation}
  W_i^{\mathrm p}(\theta_{\mathrm p})
  =R_{xy}(-\theta_{\mathrm p})R_{yx}(-\theta_{\mathrm p}).
  \label{eq:supp-variational-pair-factor}
\end{equation}
Writing $\epsilon_i=(-1)^i$, the exchange factor is
\begin{equation}
  W_i^{\mathrm e}(\theta_{\mathrm e})
  =R_{yx}(-\epsilon_i\theta_{\mathrm e})R_{xy}(\epsilon_i\theta_{\mathrm e}).
  \label{eq:supp-variational-exchange-factor}
\end{equation}
The two rotations within each factor commute, and $W_i=W_i^{\mathrm p}W_i^{\mathrm e}$.
Fig.~\ref{fig:supp-variational-circuit} displays the purification-based preparation and expands one pair and one exchange gate into these rotations.

\begin{figure}[t!]
  \centering
  \resizebox{0.8\linewidth}{!}{%
    \begin{tikzpicture}[
        x=1cm,
        y=0.56cm,
        wire/.style={line width=0.45pt},
        loader/.style={draw=NavyBlue,fill=NavyBlue!9,rounded corners=1.5pt,minimum width=1.66cm,minimum height=0.50cm,align=center,font=\scriptsize},
        corrgate/.style={draw=BurntOrange!85!black,fill=BurntOrange!12,rounded corners=1.5pt,minimum width=0.64cm,minimum height=0.68cm,align=center,font=\scriptsize},
        rotationgate/.style={draw=BurntOrange!85!black,fill=BurntOrange!8,rounded corners=1.2pt,minimum width=1.26cm,minimum height=0.92cm,align=center,font=\tiny},
        hadgate/.style={draw=ForestGreen!65!black,fill=ForestGreen!10,rounded corners=1.5pt,minimum width=0.54cm,minimum height=0.50cm,align=center,font=\scriptsize},
        ctrlone/.style={circle,draw=black,fill=black,inner sep=0pt,minimum size=5.0pt},
        target/.style={
          circle,
          draw=black,
          fill=white,
          inner sep=0pt,
          minimum size=9.0pt,
          path picture={
            \draw[wire] (path picture bounding box.west) -- (path picture bounding box.east);
            \draw[wire] (path picture bounding box.south) -- (path picture bounding box.north);
          }
        },
        czdot/.style={circle,draw=ForestGreen!65!black,fill=ForestGreen!65!black,inner sep=0pt,minimum size=5.0pt},
        stage/.style={densely dashed,rounded corners=2pt,line width=0.45pt}
      ]
      \draw[densely dashed,draw=black,line width=0.50pt]
      (4.83,1.50) -- (4.83,-0.85) -- (3.24,-1.40);
      \draw[densely dashed,draw=black,line width=0.50pt]
      (5.67,1.50) -- (5.67,-0.85) -- (8.11,-1.40);
      \foreach \i/\ya/\parity in {1/7/o,2/6/e,3/5/o,4/4/e}{
        \node[anchor=east,font=\scriptsize] at (0.90,\ya) {$A_{\i}:\ket{0}$};
        \draw[wire] (1.00,\ya) -- (4.28,\ya);
        \node[loader] at (2.10,\ya) {$R_y(2\alpha_{\mathrm{\parity}})$};
      }
      \foreach \i/\ys in {1/3,2/2,3/1,4/0}{
        \node[anchor=east,font=\scriptsize] at (0.90,\ys) {$S_{\i}:\ket{0}$};
        \draw[wire] (1.00,\ys) -- (10.68,\ys);
      }
      \foreach \ya/\ys/\xc in {7/3/3.15,6/2/3.45,5/1/3.75,4/0/4.05}{
        \draw[preaction={draw=white,line width=1.8pt},wire] (\xc,\ya) -- (\xc,\ys);
        \node[ctrlone] at (\xc,\ya) {};
        \node[target] at (\xc,\ys) {};
      }

      \draw[stage,draw=NavyBlue!70] (1.10,-0.58) rectangle (4.30,7.76);
      \node[font=\scriptsize\bfseries,text=NavyBlue,align=center,anchor=south] at (2.74,7.94)
      {variational core loader\\[-1pt]$\alpha_{\mathrm o},\alpha_{\mathrm e}$};

      \node[corrgate] at (4.83,1.5) {$W_2^{\mathrm p}$};
      \node[corrgate] at (5.67,1.5) {$W_2^{\mathrm e}$};
      \node[corrgate] at (6.48,2.5) {$W_1^{\mathrm p}$};
      \node[corrgate] at (6.48,0.5) {$W_3^{\mathrm p}$};
      \node[corrgate] at (7.32,2.5) {$W_1^{\mathrm e}$};
      \node[corrgate] at (7.32,0.5) {$W_3^{\mathrm e}$};
      \draw[stage,draw=BurntOrange!80!black] (4.38,-0.58) rectangle (7.76,3.58);
      \node[font=\scriptsize\bfseries,text=BurntOrange!85!black,align=center,anchor=south] at (6.07,3.76)
      {variational correction\\[-1pt]$\theta_{\mathrm p},\theta_{\mathrm e}$};

      \foreach \ys in {3,2,1,0}{
        \node[hadgate] at (8.36,\ys) {$H$};
      }
      \draw[wire,draw=ForestGreen!65!black] (8.92,2) -- (8.92,1);
      \node[czdot] at (8.92,2) {};
      \node[czdot] at (8.92,1) {};
      \draw[wire,draw=ForestGreen!65!black] (9.72,3) -- (9.72,2);
      \node[czdot] at (9.72,3) {};
      \node[czdot] at (9.72,2) {};
      \draw[wire,draw=ForestGreen!65!black] (9.72,1) -- (9.72,0);
      \node[czdot] at (9.72,1) {};
      \node[czdot] at (9.72,0) {};
      \draw[stage,draw=ForestGreen!60!black] (7.90,-0.58) rectangle (9.99,3.58);
      \node[font=\scriptsize\bfseries,text=ForestGreen!55!black,align=center,anchor=south] at (8.95,3.76)
      {fixed graph tail $U_G$};
      \node[anchor=west,font=\scriptsize] at (10.82,1.5)
      {$\rho_{\boldsymbol\vartheta}$};

      \draw[stage,draw=BurntOrange!65] (1.10,-4.22) rectangle (5.38,-1.40);
      \node[font=\scriptsize\bfseries,text=BurntOrange!85!black] at (3.24,-1.82)
      {$W_2^{\mathrm p}$};
      \node[anchor=east,font=\scriptsize] at (1.64,-2.65) {$S_2$};
      \node[anchor=east,font=\scriptsize] at (1.64,-3.41) {$S_3$};
      \draw[wire] (1.72,-2.65) -- (5.18,-2.65);
      \draw[wire] (1.72,-3.41) -- (5.18,-3.41);
      \node[rotationgate] at (2.59,-3.03) {$R_{xy}(-\theta_{\mathrm p})$};
      \node[rotationgate] at (4.31,-3.03) {$R_{yx}(-\theta_{\mathrm p})$};

      \draw[stage,draw=BurntOrange!65] (5.76,-4.22) rectangle (10.45,-1.40);
      \node[font=\scriptsize\bfseries,text=BurntOrange!85!black] at (8.11,-1.82)
      {$W_2^{\mathrm e}$};
      \node[anchor=east,font=\scriptsize] at (6.36,-2.65) {$S_2$};
      \node[anchor=east,font=\scriptsize] at (6.36,-3.41) {$S_3$};
      \draw[wire] (6.44,-2.65) -- (10.18,-2.65);
      \draw[wire] (6.44,-3.41) -- (10.18,-3.41);
      \node[rotationgate] at (7.44,-3.03) {$R_{yx}(-\epsilon_2\theta_{\mathrm e})$};
      \node[rotationgate] at (9.21,-3.03) {$R_{xy}(\epsilon_2\theta_{\mathrm e})$};
    \end{tikzpicture}%
  }
  \caption{Variational Gibbs-state preparation circuit for the deformed alternating-field graph-stabilizer Hamiltonian, illustrated for four system qubits and four ancilla qubits.
    The ancilla qubits are labeled $A_i$ and the system qubits are labeled $S_i$.
    The alternating angles $\alpha_{\mathrm o}$ and $\alpha_{\mathrm e}$ parameterize the variational core loader, and the parallel CNOTs prepare its purification.
    Superscripts $\mathrm p$ and $\mathrm e$ identify the pair and exchange factors in each bond gate, with the two insets expanding $W_2^{\mathrm p}$ and $W_2^{\mathrm e}$ into Pauli rotations.
  The fixed graph tail applies a Hadamard to every system qubit followed by the two controlled-$Z$ layers, mapping the corrected core-frame state to the physical frame.}
  \label{fig:supp-variational-circuit}
\end{figure}

The core loader can be purified with one ancilla per site by applying $R_y(2\alpha_{s(i)})$ to the ancilla followed by a CNOT from ancilla to system.
All subsequent gates act only on the system register and preserve the Schmidt coefficients.
The entropy of the physical trial state is therefore known exactly from the loader probabilities:
\begin{equation}
  \begin{split}
    S(\rho_{\boldsymbol\vartheta})
    =-\sum_{i=1}^{N}\big[&
      \cos^2\!\alpha_{s(i)}\ln(\cos^2\!\alpha_{s(i)})
      \\
    &+\sin^2\!\alpha_{s(i)}\ln(\sin^2\!\alpha_{s(i)})\big].
  \end{split}
  \label{eq:supp-variational-entropy}
\end{equation}
No entropy tomography is needed for this ideal ansatz.
The optimized objective is the physical free-energy density
\begin{equation}
  f_\beta(\boldsymbol\vartheta)
  =\frac1N\left[
    \Tr(\rho_{\boldsymbol\vartheta}H_\lambda)
    -\beta^{-1}S(\rho_{\boldsymbol\vartheta})
  \right].
  \label{eq:supp-variational-free-energy}
\end{equation}
In the reported finite-size calculation, the energy trace is evaluated with dense matrices and the entropy by Eq.~\eqref{eq:supp-variational-entropy}.

\subsection{Optimizer, multistart protocol, and failure scans}
\label{sec:supp-optimization-details}

Eq.~\eqref{eq:supp-variational-free-energy} is minimized with the L-BFGS-B optimizer, using SciPy's default finite-difference gradient.
The transition-angle bounds are
\begin{equation}
  -\pi\le\theta_{\mathrm p},\theta_{\mathrm e}\le\pi.
  \label{eq:supp-transition-angle-bounds}
\end{equation}
The loader-angle bounds are
\begin{equation}
  10^{-6}\le\alpha_{\mathrm o},\alpha_{\mathrm e}
  \le\frac\pi2-10^{-6}.
  \label{eq:supp-optimizer-bounds}
\end{equation}
The loader cutoff avoids the endpoints at which the effective field in Eq.~\eqref{eq:supp-variational-core-hamiltonian} diverges.

Ten initial conditions are used at every parameter point.
The bare start has $\theta_{\mathrm p}=\theta_{\mathrm e}=0$ and the odd-sublattice loader angle
\begin{equation}
  \alpha_{\mathrm o}
  =\arctan\!\left[\ee^{-\beta(\bar h-\delta)}\right].
  \label{eq:supp-bare-odd-initialization}
\end{equation}
The even-sublattice loader angle is
\begin{equation}
  \alpha_{\mathrm e}
  =\arctan\!\left[\ee^{-\beta(\bar h+\delta)}\right].
  \label{eq:supp-bare-initialization}
\end{equation}
For $\delta\ne0$, the perturbative start uses Eqs.~\eqref{eq:supp-first-order-pair-angle} and \eqref{eq:supp-first-order-angles} with the same loader angles.
At exact resonance, the perturbative exchange angle is undefined, so that start uses $\theta_{\mathrm e}=0$ and retains only the nonsingular pair angle.
The other eight starts are pseudorandom points in the admissible parameter domain.
The L-BFGS-B options are \texttt{maxiter=500}, \texttt{ftol=1e-13}, \texttt{gtol=1e-9}, and \texttt{maxls=50}.
Among the ten converged runs, the reported state is the one with the smallest final free-energy value found.
All $200$ runs---ten starts at each of twenty parameter points---reported formal convergence.
At some points, only one start reached the lowest objective within the configured tolerances, so the multistart procedure materially affected the result.

The resonance scan uses $N=8$, $\beta=1$, $\lambda=0.02$, and the nine positive staggerings
\begin{equation}
  \delta\in\{2^{-1},2^{-2},\ldots,2^{-9}\},
  \label{eq:supp-resonance-grid}
\end{equation}
together with a separate exact-resonance calculation at $\delta=0$.
At the two smallest positive staggerings, $\eta_{\mathrm e}=1.28$ and $2.56$, respectively.
The prescribed first-order state remains below the bare state in $D_1$ at every positive staggering, although its curve becomes nonmonotonic at the two smallest values.
The variational state has the smallest $D_1$ of the three constructions at all nine positive staggerings.

Exact resonance is not part of that logarithmic plot and must be interpreted separately.
The first-order exchange generator is singular there, so the reported first-order comparator is the pair-only rotation, not the full nonresonant construction.
At $\delta=0$, the variational result is
\begin{equation}
  D_1^{\rm var}=3.93\times10^{-3}.
  \label{eq:supp-exact-resonance-variational}
\end{equation}
The pair-only first-order comparator gives
\begin{equation}
  D_1^{\rm pair\text{-}only}=2.81\times10^{-3}.
  \label{eq:supp-exact-resonance-result}
\end{equation}
Thus free-energy optimization does not improve the chosen physical local metric at exact resonance in this benchmark.
This is compatible with the variational principle because the optimized objective is global free energy, not $D_1$.

The finite-deformation scan instead fixes $N=8$, $\beta=1$, $\delta=0.25$ and uses $\lambda\in\{0.05,0.1,0.2,0.3,0.45,0.65,0.8,1,2,3\}$.
The first-order and bare $D_1$ values first cross in the sampled bracket $0.8\le\lambda\le1$; interpolation of their ratio linearly in $\ln\lambda$ gives $\lambda_\times\simeq0.813$.
At $\lambda=1$, the variational, bare, and first-order errors are $0.327$, $0.494$, and $0.557$, respectively.
At $\lambda=3$, they are $0.763$, $1.020$, and $0.990$.
The first-order value falls slightly below the bare value again at $\lambda=3$, but this nonmonotonic re-entry is not a recovery of perturbative control.

\end{document}